\title{\texorpdfstring{Deterministic and game separability\\ for regular languages of infinite trees}{Deterministic and game separability for regular languages of infinite trees}}
\author{Lorenzo Clemente \and Michał Skrzypczak}
\newtheorem{thm}{Theorem}[section]
\newcommand{\trees}{\mathrm{Tr}}
\newcommand{\ign}[1]{}
\newcommand{\N}{\mathbb N}
\newcommand{\A}{\mathcal A}
\newcommand{\B}{\mathcal B}
\newcommand{\C}{\mathcal C}
\newcommand{\D}{\mathcal D}
\newcommand{\M}{\mathcal M}
\newcommand{\PP}{\mathcal P}
\renewcommand{\S}{\mathcal S}
\newcommand{\T}{\mathcal R}
\newcommand{\WW}{\mathcal W}
\newcommand{\ignore}[1]{}
\newcommand{\Win}[1]{\mathbf{W}_{#1}}
\newcommand{\Choice}[2]{[\ensuremath{\textsf{#1}\colon{#2}}]}
\newcommand{\GChoice}[3]{\ensuremath{{#1}.[\textsf{#2}\colon{#3}]}}
\newcommand{\paths}[1]{\mathsf{Path}(#1)}
\newcommand{\pathclosure}[1]{\forall\mathsf{Path}(#1)}
\newcommand{\pathautomaton}[1]{{#1}^{\mathrm{path}}}
\newcommand{\tuple}[1]{(#1)}
\newcommand{\sep}{\;|\;}
\renewcommand{\L}{\mathsf L}
\newcommand{\R}{\mathsf R}
\newcommand{\set}[1]{\{#1\}}
\newcommand{\Set}[1]{\left\{#1\right\}}
\newcommand{\setof}[2]{\set{#1 \sep #2}}
\newcommand{\Setof}[2]{\Set{#1 \mid #2}}
\renewcommand{\Game}[1]{G_{#1}}
\newcommand{\AcceptanceGame}[2]{G^{\mathrm{acc}}(#1, #2)}
\newcommand{\DisjointnessGame}[2]{G^{\mathrm{dis}}(#1, #2)}
\newcommand{\OmegaSeparabilityGame}[3]{G^{\mathrm{sep}}_\omega(#1, #2, #3)}
\newcommand{\DeterministicSeparabilityGame}[2]{G^{\mathrm{sep}}_{\mathrm{det}}(#1, #2)}
\newcommand{\CDeterministicSeparabilityGame}[3]{G^{\mathrm{sep}}_{\mathrm{det}}(#1, #2, #3)}
\newcommand{\GameSeparabilityGame}[2]{G^{\mathrm{sep}}_{\mathrm{game}}(#1, #2)}
\newcommand{\CGameSeparabilityGame}[3]{G^{\mathrm{sep}}_{\mathrm{game}}(#1, #2, #3)}
\newcommand{\SepGame}{{\color{darkred}G_0}}
\newcommand{\AccGameA}{{\color{darkgreen}G_1}}
\newcommand{\AccGameS}{{\color{darkblue}G_2}}
\newcommand{\lang}[1]{\mathrm{L}(#1)}
\newcommand{\blang}[1]{\mathrm{L}^{\textsf{path}}(#1)}
\DeclareMathOperator{\disjoint}{\bot}
\newcommand{\card}[1]{\left|#1\right|}
\newcommand{\Automaton}{\textsf{Automaton}\xspace}
\newcommand{\Pathfinder}{\textsf{Pathfinder}\xspace}
\newcommand{\PI}{\textsf{PI}\xspace}
\newcommand{\PII}{\textsf{PII}\xspace}
\newcommand{\Input}{\textsf{Input}\xspace}
\newcommand{\Separator}{\textsf{Separator}\xspace}
\newcommand{\EXPTIME}{\textsf{EXPTIME}}
\newtheorem{theorem}[thm]{Theorem}
\newtheorem{lemma}[thm]{Lemma}
\newtheorem{claim}[thm]{Claim}
\newtheorem{remark}[thm]{Remark}
\newtheorem{proposition}[thm]{Proposition}
\crefname{fact}{Fact}{Facts}
\Crefname{fact}{Fact}{Facts}
\crefname{lemma}{Lemma}{Lemmas}
\Crefname{Lemma}{Lemma}{Lemmas}
\crefname{theorem}{Theorem}{Theorems}
\Crefname{Theorem}{Theorem}{Theorems}
\crefname{corollary}{Corollary}{Corollary}
\Crefname{corollary}{Corollary}{Corollary}
\crefname{claim}{Claim}{Claim}
\Crefname{claim}{Claim}{Claim}
\crefname{remark}{Remark}{Remarks}
\Crefname{remark}{Remark}{Remarks}
\crefname{section}{Section}{Sections}
\Crefname{section}{Section}{Sections}
\crefname{enumi}{}{}
\newif\ifstartedinmathmode
\newcommand*{\st}{
  \relax\ifmmode\startedinmathmodetrue\else\startedinmathmodefalse\fi
  \ifstartedinmathmode{\;\cdot\;}\else{s.t.~}\fi%
}
\def\itemizename{itemize}
\def\enumeratename{enumerate}
\def\ifinlist{%
  \ifx\@currenvir\itemizename
    \expandafter\@firstoftwo
  \else
    \ifx\@currenvir\enumeratename
      \expandafter\@firstoftwo
    \else
      \expandafter\@secondoftwo
    \fi
  \fi}
\let\oldqedhere\qedhere
\renewcommand*{\qedhere}{%
  \relax\ifmmode\startedinmathmodetrue\else\startedinmathmodefalse\fi%
  \ifstartedinmathmode\tag*{\protect\oldqedhere}\else\ifinlist{\hfill\oldqedhere}{\oldqedhere}\fi%
}
\newcommand{\proje}{\mathrm{Proj}}
\definecolor{darkblue}{rgb}{0.0, 0.0, 0.75}
\definecolor{darkgreen}{rgb}{0.0, 0.55, 0.0}
\definecolor{darkred}{rgb}{0.55, 0.0, 0.0}
\newenvironment{gameize}{\begin{itemize}}{\end{itemize}}
\newenvironment{winnize}{\begin{itemize}}{\end{itemize}}
\newenvironment{simuize}{\begin{itemize}}{\end{itemize}}
\newtcolorbox{gamebox}[1]{%
    tikznode boxed title,
    enhanced,
    arc=0mm,
    interior style={white},
    attach boxed title to top center= {yshift=-\tcboxedtitleheight/2},
    fonttitle=\bfseries,
    colbacktitle=white,coltitle=black,
    boxed title style={size=normal,colframe=white,boxrule=0pt},
    title={#1}}
\begin{document}

\maketitle

\begin{abstract}
    We show that it is decidable whether two regular languages of infinite trees
    are separable by a~deterministic language, resp., a~game language.
    We consider two variants of separability,
    depending on whether the set of priorities of the separator is fixed, or not.
    In each case, we show that separability can be decided in \EXPTIME,
    and that separating automata of exponential size suffice.
    We obtain our results by reducing to infinite duration games with $\omega$-regular winning conditions
    and applying the finite\=/memory determinacy theorem of Büchi and Landweber.
\end{abstract}

\section{Introduction}
\label{sec:intro}

One of the most intriguing and motivating problems in the field of automata theory is the \emph{membership problem}.
For two fixed classes of languages $\C$ (\emph{input class}) and $\D$ (\emph{output class}),
the $\tuple{\C, \D}$-membership problem asks,
given a~representation of a~language in $\C$,
whether this language belongs to $\D$.
Among the first results of this type is the famous theorem by Schutzenberger~\cite{schutzenberger_fo} and McNaughton-Papert~\cite{mcnaughton_counter_free},
characterising, among all regular languages of finite words,
the subclass of languages that can be defined in first-order logic.


In this paper we consider the class $\C$ of regular languages of infinite trees.
While there are many equivalent automata models for this class
\===e.g.,~Muller, Rabin, and Street automata~\cite{AutomataLogicsInfiniteGames:2002}\===%
\emph{parity automata} are without doubt the most established such model~\cite{EmersonJutla:SFCS:1991}.
The most important descriptional complexity measure of a parity automaton
is the set of priorities $C \subseteq \N$ it is allowed to use,
which is called its \emph{index}.
Not only a larger index allows the automaton to recognise more languages~\cite{Niwinski:Clones:1986},
but the computational complexity of known procedures for the emptiness problem
crucially depends on the index (the current best bound is quasi-polynomial~\cite{CaludeJainKhoussainovLiStephan:STOC:2017}).
The most famous open problem in the area of regular languages of infinite trees
is the \emph{nondeterministic index membership problem},
which is the $\tuple{\C, \D}$-membership problem
for $\D$ the class of languages recognised by some nondeterministic parity automaton
of a fixed index~$C$~(c.f.~\cite{ColcombetLoding:ICALP:2008}).
In many cases, the solution of the membership problem relies either on algebraic representations
or determinisation,
however algebraic structures for regular languages of infinite trees are of limited availability~(c.f.~\cite{blumensath_recognisability})
and deterministic automata do not capture all regular languages.
While on infinite words this problem was essentially solved by Wagner already at the end of the '70s~\cite{Wagner:IC:1979},
its solution for infinite trees seems still far away.

Known decidability results abound if we restrict either the input class $\C$ or the output class $\D$.
Results of the first kind are known for $\C$ being the class of deterministic~\cite{NiwinskiWalukiewicz:ENTCS:2005}
and, more generally, game automata~\cite[Theorem~1.2]{FacchiniMurlakSkrzypczak:TOCL:2016}.
Results of the second kind (i.e., when the input class $\C$ is the full class of regular languages)
exist for the output class $\D$ being
the lower levels of the index hierarchy~\cite{KustersWilke:FSTTCS:2002,Walukiewicz:ENTCS:2002}
and of the Borel hierarchy~\cite{BojaczykCavallariPlaceSkrzypczak:LMCS:2019},
the class of deterministic languages~\cite{NiwinskiWalukiewicz:STACS:1998},
and Boolean combinations of open sets~\cite{BojanczykPlace:ICALP:2012}.
Other variants of the index membership problem are known to be decidable,
including the early result of Urbański showing that it is decidable whether a~given deterministic parity tree automaton
is equivalent to some nondeterministic Büchi one~\cite{Urbanski:ICALP:2000},
the weak alternating index problems for the class of deterministic automata~\cite{Murlak:STACS:2008}
and Büchi automata~\cite{ColcombetKuperbergLodingVandenBoom:CSL:2013,SkrzypczakWalukiewicz:ICALP:2016},
and deciding whether a given parity automaton is equivalent to some nondeterministic co-Büchi automaton~\cite{ColcombetKuperbergLodingVandenBoom:CSL:2013}.

Another problem closely related to membership is separability.
%
The $\tuple{\C, \D}$\=/\emph{separability} problem asks,
given a~pair of languages $L$, $M$ in $\C$,
whether there exists a~language $S$ in $\D$ (called a~\emph{separator})
s.t.~$L\subseteq S$ and\footnote{We write $S \disjoint M$ for $S \cap M = \emptyset$.} $S\disjoint M$.
Intuitively, a~separator $S$ provides a certificate of disjointness,
yielding information on the structure of $L$, $M$ up to some chosen granularity.
The separability problem is a~generalisation of the membership problem
if the class $\C$ is closed under complement,
since we can always take $M$ to be the complement of $L$,
in which case the only candidate for the separator is $L$ itself.
There are many elegant results in computer science, formal logic, and mathematics
showing that separators always exist.
Instances include Lusin's separation theorem in topology
(two disjoint analytic sets are always separable by a Borel set; c.f.~\cite[Theorem~14.7]{Kechris}),
a folklore result in computability theory (two disjoint co-recursively enumerable sets are separable by a recursive set),
Craig's theorems in logic (jointly contradictory first-order formulas can be separated by a formula containing only symbols in the shared vocabulary~\cite{Craig:JSL:1957})
and model theory (two disjoint projective classes are separable by an elementary class~\cite{Craig:JSL:1957});
in formal language theory, a generalisation of a theorem suggested by Tarski and proved by Rabin~\cite[Theorem 29]{Rabin:Weak:1970}
states that two disjoint Büchi languages of infinite trees are separable by a weak language
(c.f.~\cite{ArnoldSantocanale:TCS:2005}).
%

In this work we study the $\tuple{\C, \D}$-separability problems
where $\C$ is the full class of regular languages of infinite trees,
and $\D$ is one of four kinds of sub-classes thereof,
depending on whether the automaton is deterministic or game,
and depending on whether we fix a~finite index $C \subseteq \mathbb N$
or we leave it unrestricted $C = \mathbb N$.
Our main result is that all four kinds of the separability problems above are decidable and in \EXPTIME.
Moreover, we show that if a separator exists, then there is one of exponential size.

\begin{restatable}{theorem}{thmComplexity}
    \label{thm:complexity}
    The deterministic and game separability problems can be solved in \EXPTIME,
    both for a fixed finite index $C \subseteq \mathbb N$,
    and an unrestricted one $C = \mathbb N$.
    Moreover, separators with exponentially many states and polynomially many priorities suffice.
\end{restatable}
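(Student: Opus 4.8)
The plan is to reduce each of the four separability problems to an infinite-duration game with an $\omega$-regular winning condition, and then invoke the Büchi–Landweber finite-memory determinacy theorem to extract both decidability and a bound on the separator. The abstract already signals this strategy, so the real content of the proof is (i) defining the right separability game $\SeparabilityGame{L}{M}$ for the given output class $\D$, (ii) showing a \emph{soundness and completeness} equivalence — \Separator (the protagonist) wins the game if and only if a separator $S \in \D$ exists — and (iii) doing the complexity bookkeeping. I would treat the four cases uniformly by parametrising the game over the output class (deterministic vs.\ game) and over whether the index $C$ is fixed or unrestricted, isolating the differences into the combinatorics of the game arena rather than its overall shape.

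\textbf{First} I would set up the game. Given nondeterministic parity automata $\A, \B$ with $\lang{\A} = L$ and $\lang{\B} = M$, the separability game should have \Separator incrementally build a run of a candidate separator $S$ while \Input challenges with inputs witnessing membership in $L$ (forcing $S$ to accept) or in $M$ (forcing $S$ to reject). The node structure should track a subset (or a more refined macrostate) of the states of $\A$ and $\B$ reachable along the branch chosen so far, together with the separator's own control state. Because the separator must be deterministic (resp.\ game), its transition structure is constrained: on each letter it has a single successor (resp.\ a game-automaton branching discipline), and this is exactly what \Separator commits to when moving. The winning condition encodes ``every $L$\=/branch is accepted and every $M$\=/branch is rejected by $S$,'' which after the standard path/branch translation becomes a Boolean combination of parity conditions on the product, hence $\omega$\=/regular. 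I would make the arena finite by bounding the reachable macrostates (subsets of $Q_\A \times Q_\B$, so exponentially many) and bounding the separator's index: for fixed $C$ the priority alphabet is given, and for $C = \N$ I would argue that priorities beyond a polynomial bound (in terms of the Rabin index of the product) never help, collapsing the unrestricted case to a fixed finite index.

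\textbf{The key equivalence} is the soundness/completeness step. For completeness, from any separator $S \in \D$ I read off a \Separator strategy by simulating $S$'s (deterministic or game) transitions against \Input's challenges, and verify that the induced play satisfies the winning condition precisely because $S$ separates $L$ from $M$. For soundness, I invoke Büchi–Landweber: since the game is $\omega$\=/regular and played on a finite arena, it is determined with \emph{finite-memory} strategies for the winner. A finite-memory winning strategy for \Separator, whose memory structure is a finite automaton over the moves, can be folded into a finite-state deterministic (resp.\ game) parity automaton $S$; the finite-memory bound is exactly what guarantees that $S$ has \emph{exponentially many states and polynomially many priorities}, yielding the ``separators of exponential size suffice'' clause. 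Decidability and the \EXPTIME\ bound then follow because solving a parity (or more generally $\omega$\=/regular) game on an arena of exponential size, with polynomially many priorities, costs exponential time.

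\textbf{The hardest part} I expect to be the soundness direction — turning a finite-memory winning strategy into a \emph{bona fide} automaton of the correct shape. The subtlety is that a game strategy is reactive (it responds to \Input's challenges branch by branch), whereas a separator automaton must be a single fixed object that simultaneously handles \emph{all} branches of \emph{all} input trees coherently. Reconciling these requires that \Separator's moves depend only on information the eventual automaton can recompute from its own state — i.e.\ that the strategy is, up to the memory, positional in the right coordinates — and that the branching discipline imposed on \Separator (single successor for deterministic, the restricted game\=/automaton branching otherwise) is genuinely preserved by the fold. Getting the game's move structure to enforce exactly the syntactic shape of $\D$, so that soundness is not merely ``some automaton'' but an automaton \emph{in $\D$}, is where the deterministic and game cases will require separate, careful arguments, and where the index\=/collapse argument for $C = \N$ must be made precise.
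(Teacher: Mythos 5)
Your overall architecture (reduce to an $\omega$-regular game, apply Büchi--Landweber, fold the finite-memory strategy into a separator) is the paper's architecture, but the proposal has a genuine gap at its core: the winning condition you describe --- ``every $L$-branch is accepted and every $M$-branch is rejected by $S$'' --- does \emph{not} characterise separability. Along a single branch $b$ built during a play, it can happen that some tree in $\lang\A$ and some \emph{other} tree in $\lang\B$ both contain $b$ as a path; a deterministic separator $S\supseteq\lang\A$ is forced to accept along $b$ (determinism makes $\lang S$ path-closed), yet your condition demands rejection because $b$ is also an ``$M$-branch''. So your completeness step (``verify that the induced play satisfies the winning condition precisely because $S$ separates'') fails: the game you sketch characterises separability by automata that are \emph{universally rejecting} on $\lang\B$, a strictly stronger property --- the paper devotes an entire subsection to exactly this pitfall. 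The missing idea is the \emph{selector}: before \Input picks the direction $d_i$, \Separator must commit to a function $f_i\colon\Delta^\B(a_i)\to\set{\L,\R}$, and the disjointness half of the winning condition only quantifies over accepting $\B$-transition sequences that are \emph{conform} to the selectors. Justifying that a real separator yields such selectors requires memoryless (pathfinder) strategies in a disjointness game with a Rabin winning condition --- this is the paper's central technical device and is absent from your plan. Relatedly, your arena ``tracking a macrostate of reachable states of $\A$ and $\B$'' is not how the paper proceeds: its arenas are essentially positionless, with all the work pushed into the $\omega$-regular winning condition over the play alphabet.

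The second gap is the unrestricted case $C=\N$. You propose to ``argue that priorities beyond a polynomial bound never help,'' but you give no argument, and no such collapse is proved (or needed) in the paper. Instead, for deterministic separability the paper observes that the path-closure automaton $\pathautomaton\A$ recognises the \emph{smallest} deterministic language containing $\lang\A$, so a deterministic separator exists iff $\pathautomaton\A$ is one --- \Separator plays no priorities at all in that game, and the exponential-size/polynomial-priority bound comes from determinising a word automaton for the path language. For game separability with unrestricted index, the separator is built as a \emph{generalised} game automaton whose acceptance condition is a deterministic $\omega$-word automaton separating the path languages $L_\A$ and $L_\B$ induced by the winning strategy, and is then converted back to an ordinary game automaton. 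Without either the selector mechanism or a concrete replacement for the index-collapse claim, the proposal does not go through.
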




Our work is permeated by the observation that the separability problem for two languages $L$, $M$
can be phrased in terms of a game of infinite duration with an~$\omega$-regular winning condition.
In such a \emph{separability game} there are two players, \Separator trying to prove that $L, M$ are separable,
and \Input with the opposite objective.
In the simple case of $\tuple{\C, \D}$-separability
where $\C$ is the class of regular languages of $\omega$-words
and $\D$ the subclass induced by deterministic parity automata of finite index $C$,
the $i$-th round of the game is as follows:
\begin{itemize}
    \item \Separator plays a priority $c_i \in C$.
    \item \Input plays a letter $a_i$ from the finite alphabet $\Sigma$.
\end{itemize}
The resulting infinite play $\tuple{c_0, a_0} \tuple{c_1, a_1}\cdots$ is won by \Separator
if 1) $a_0a_1 \cdots \in L$ implies $c_0c_1\cdots$ is accepting and
2) $a_0a_1 \cdots \notin L$ implies $c_0c_1\cdots$ is rejecting.
Since the winning condition is $\omega$-regular,
by the result of Büchi and Landweber~\cite{BuchiLandweber:AMS:1969}
we can decide who wins the game
and moreover finite-memory strategies for \Separator suffice.
Thanks to a correspondence between such strategies and deterministic separators,
\Separator wins such a game
iff there exists a deterministic automaton with priorities in $C$ separating $L$, $M$.
This provides both decidability of the separability problem and an upper-bound on the size of separators.
We design analogous games with $\omega$-regular winning conditions for the more involved case of infinite trees
for the separability problems mentioned above and apply \cite{BuchiLandweber:AMS:1969}.

The separability problems we consider have been open so far
and generalise the corresponding membership problems.
A solution for deterministic separability can easily be derived from~\cite{NiwinskiWalukiewicz:TCS:2003},
however our techniques based on games are novel and provide a unified view on all problems.
When instantiated to the specific case of membership,
our decidability results generalise the deterministic case
(for both fixed and unconstrained index)
\cite{NiwinskiWalukiewicz:TCS:2003,NiwinskiWalukiewicz:STACS:1998}
and the game membership case for unconstrained index~\cite[Theorem~7.12]{FacchiniMurlakSkrzypczak:TOCL:2016}.
We believe the game approach is much more direct than the combinatorial and pattern-based techniques used in the previous solutions, c.f.~\cite[Section~7, pp.~29--37]{FacchiniMurlakSkrzypczak:TOCL:2016}.
The game membership problem for a~fixed index $C$ has been open so far.
%

We are not aware of computation complexity results for separability problems over regular languages of infinite trees,
neither of an analysis of the size of separators.
Regarding deterministic membership,
\EXPTIME-completeness is known~\cite[Corollary~11]{NiwinskiWalukiewicz:TCS:2003},
as well as \EXPTIME~upper~\cite[end of page 12]{NiwinskiWalukiewicz:STACS:1998}
and lower bounds~\cite[Theorem~4.1]{Walukiewicz:ENTCS:2002} (c.f.,~also~\cite{KustersWilke:FSTTCS:2002})
for computing the optimal deterministic index.
Devising non-trivial complexity lower bounds for the separability problem is left for future work,
as well as extending our approach to other classes of separators.


\subparagraph{Related works.}

Over finite words,
variants of the $\tuple{\C, \D}$-separability problem have been studied for classes $\C$ both more general than the regular languages,
such as the context free languages~\cite{CzerwinskiMartensVanRooijenZeitoun:FCT:2015,Zetzsche:ICALP:2015}
and higher-order languages~\cite{ClementeParysSalvatiWalukiewicz:Diagonal:LICS16}
(later extended to safe schemes over finite trees~\cite{BarozziniClementeColcombetParys:ICALP:2020}),
and for classes $\D$ more restrictive than the regular languages,
such as in~\cite{Place-dotdepth,PlaceZeitoun:TOCL:2019}.
The separability and membership problems have also been studied for several classes of infinite-state systems,
such as vector addition systems~\cite{ClementeCzerwinskiLasotaPaperman:STACS:2017,ClementeCzerwinskiLasotaPaperman:Parikh:ICALP:2017,CzerwinskiZetzsche:LICS:2020},
well-structured transition systems~\cite{CzerwinskiLasotaMeyerMuskallaKumarSaivasan:CONCUR:2018},
one-counter automata~\cite{CzerwinskiLasota:LMCS:2019},
and timed automata~\cite{ClementeLasotaPiorkowski:ICALP:2020,ClementeLasotaPiorkowski:CONCUR:2020}.
Recent developments on efficient algorithms solving parity games
are based on the ability to find a simple separator,
yielding both upper bounds on the problem,
and lower bounds for a wide family of algorithms~\cite[Chapter~3]{BojanczykCzerwinski:Toolbox:2018,CzerwinskiDaviaudFijalkowJurzinskiLazicParys:SODA:2019}.
Finally, it is worth mentioning that games have already been successfully used to provide several characterisation results,
such as in~\cite{ColcombetLoding:ICALP:2008,ColcombetKuperbergLodingVandenBoom:CSL:2013,colcombet_hab,bojanczyk_star_games,SkrzypczakWalukiewicz:ICALP:2016,CavallariMichalewskiSkrzypczak:MFCS:2017}.

\subparagraph{Outline.}
In \cref{sec:prelim} we introduce automata and other mathematical preliminaries.
In \cref{sec:det_ind,sec:det_gen,sec:game_gen,sec:game_ind} we present the game-theoretic characterisations
of the separability problems we consider.
We believe this is the most interesting aspect of this work.
%
%
\Cref{ap:complexity} is devoted to an~analysis of the computational complexity of our decision methods
leading to the proof of the announced \cref{thm:complexity}.

\section{Preliminaries}
\label{sec:prelim}

A~nonempty finite set $\Sigma$ of \emph{letters} $a\in\Sigma$ is called an~\emph{alphabet}.
A~(\emph{$\Sigma$-labelled}) \emph{tree} is a function
$t \colon \set{\L, \R}^* \to \Sigma$
assigning to each \emph{node} $u \in \set{\L, \R}^*$
a~\emph{label} $t(a) \in \Sigma$. The \emph{root} of a~tree 
 is denoted~$\epsilon$.
The set of all $\Sigma$-labelled trees is denoted $\trees_\Sigma$.
The symbols $\L$, $\R$ are called \emph{directions}
and a~\emph{branch} is an~infinite sequence thereof $d_0d_1\cdots\in\set{\L,\R}^\omega$.
A~tree $t$ is uniquely defined by the set of its \emph{paths}
$\paths t = \setof{\tuple{a_0,d_0} \tuple {a_1,d_1} \cdots \in (\Sigma \times \set{\L, \R})^\omega}
{\forall i.\ a_i = t(d_0 d_1 \cdots d_{i-1})}$,
which is extended to languages pointwise as
$\paths L = \setof{\paths t}{t \in L}$. 

\subsection{Automata}

Fix a nonempty finite set of \emph{priorities} $C \subseteq \N$.
A~\emph{(top-down, nondeterministic, parity, tree) automaton} is a~tuple
$
    \A = \tuple{ \Sigma, Q, q_0, \Omega, \Delta},
$
where $\Sigma$ is a~finite alphabet,
$Q$ is a~finite set of \emph{states},
amongst which $q_0 \in Q$ is the~\emph{initial state},
$\Omega\colon Q \to C$ assigns a~priority to every state,
and $\Delta \subseteq Q \times \Sigma \times Q \times Q$ is a~set of \emph{transitions}.
%
The priority function $\Omega$ is extended to a~transition $\delta = \tuple{q, \_, \_, \_}$
as $\Omega(\delta) := \Omega(q)$, pointwise to an~infinite sequence of states
$\Omega(q_0q_1 \cdots) := \Omega(q_0)\Omega(q_1)\cdots \in C^\omega$
and transitions $\Omega(\delta_0 \delta_1 \cdots) = \Omega(\delta_0)\Omega(\delta_1) \cdots \in C^\omega$.
An~infinite sequence of priorities $c_0c_1 \cdots \in C^\omega$ is \emph{accepting}
if the maximal priority occurring infinitely often is even.
Similarly, an~infinite sequence of states $\rho = q_0q_1 \cdots \in Q^\omega$
or of transitions $\rho = \delta_0\delta_1\cdots \in \Delta^\omega$
is accepting whenever $\Omega(\rho)$ is accepting.
We write $\Delta(q, a) = \set{\tuple{q, a, q_\L, q_\R} \in \Delta}$
for the set of transitions from a~state $q\in Q$ over a~letter $a\in\Sigma$,
and $\Delta(a) = \bigcup \setof{\Delta(q, a)}{q \in Q}$ for all transitions over~$a$.
We extend the notation above to an~infinite path
$b = \tuple{a_0,d_0} \tuple{a_1,d_1} \cdots \in (\Sigma \times \set{\L, \R})^\omega$
by writing $\Delta(b)$ for the set of infinite sequences of transitions
$\vec\delta = \delta_0 \delta_1 \cdots \in \Delta^\omega$
of the form $\delta_i = (q_i, a_i, q_{\L,i}, q_{\R,i})$ for every $i$,
which are \emph{conform to $b$}
in the sense that $q_0$ is the initial state of the automaton and $q_{i+1} = q_{d_i,i}$. 

A~\emph{run} of an~automaton $\A$ as above over a tree $t\in \trees_\Sigma$ is a~$Q$-labelled tree $\rho\in\trees_Q$ s.t.~$\rho(\epsilon)=q_0$ is the initial state and for every node in the tree $u\in\set{\L,\R}^*$ the quadruple $\big(\rho(u),t(u),\rho(u\L),\rho(u\R)\big)$ belongs to $\Delta$. Such a~run is \emph{accepting} if for every branch $d_0d_1\cdots\in\set{\L,\R}^\omega$ the sequence of states $\big(\rho(d_0\cdots d_{i-1})\big)_{i\in\omega}$ is accepting. 
The set of all trees $t\in\trees_\Sigma$ s.t.~$\A$ has an~accepting run over $t$ is denoted $\lang\A$ and is called the \emph{language} recognised by $\A$. The corresponding \emph{path language} 
is $\blang \A := \paths {\lang \A} \subseteq (\Sigma \times \set{\L, \R})^\omega$. If $q\in Q$ is a~state of an~automaton $\A$ then by $\A_q$ we denote the same automaton as $\A$ but with the initial state $q_0$ changed to $q$. Thus, $\lang {\A_q}$ is the set of trees over which $\A$ has an~accepting run $\rho$ starting at $\rho(\epsilon)=q$.
In the rest of the paper we assume that all states $q$ in an automaton are \emph{productive}
in the sense that $\lang {\A_q} \neq \emptyset$.

\subsection{Deterministic and game automata}

We say that $\A$ is a \emph{game automaton} if, for every $q \in Q$ and $a \in \Sigma$,
either we have a \emph{conjunctive transition} $\Delta(q, a) = \set{\tuple{q, a, q_\L, q_\R}}$ 
or two \emph{disjunctive transitions} $\Delta(q, a) = \set{\tuple{q, a, q_\L, \top}, \tuple{q, a, \top, q_\R}}$
(c.f.~\cite[Definition~3.2]{FacchiniMurlakSkrzypczak:TOCL:2016}),
where $\top\neq q_0$ represents a~distinguished state in $Q$ accepting every tree (i.e., $\lang {\A_\top}=\trees_\Sigma$)
and $q_\L, q_\R \neq \top$.
%
%
An automaton $\A$ is \emph{deterministic} if it is a game automaton with only conjunctive transitions
and in this case for every tree $t\in\trees_\Sigma$ there exists a~unique run $\rho$ of $\A$ over $t$.
%
A tree language $L$ is \emph{deterministic}, resp., \emph{game},
if it can be recognised by some deterministic, resp., game automaton.
%
%
%
%
%
Game automata can be complemented with very low complexity
by just increasing every priority by one
and by swapping conjunctive and disjunctive transitions.
\begin{restatable}{lemma}{remGameAutomataDual}
    \label{rem:game-automata-dual}
    If $\A$ is a~game parity tree automaton,
    then $ \trees_\Sigma \setminus \lang \A$ can be recognised by a~game parity tree automaton
    with the same number of states and priorities.
\end{restatable}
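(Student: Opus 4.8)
The plan is to prove this by an explicit dualization construction on the automaton, following the standard idea that game automata are a self-dual class under complementation. Let me think through the details.

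Given a game automaton $\A = \tuple{\Sigma, Q, q_0, \Omega, \Delta}$, I want to build a game automaton recognizing the complement $\trees_\Sigma \setminus \lang\A$. The key observation is that a parity tree automaton over trees corresponds to a parity game (the acceptance game between Automaton and Pathfinder), and complementing the language means swapping the roles of the two players. For a general nondeterministic automaton this requires an expensive determinization/complementation, but for game automata the alternation structure is so rigid that the dual can be read off directly.

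Let me write the construction. I'd keep the same state set $Q$ (including the special $\top$), the same initial state $q_0$, and increment every priority by one: $\Omega'(q) := \Omega(q) + 1$. This is the classical trick that turns "max priority seen infinitely often is even" into its negation, since adding $1$ flips the parity of the extremal priority. Then I swap conjunctive and disjunctive transitions: wherever $\A$ has a conjunctive transition $\Delta(q,a) = \set{\tuple{q,a,q_\L,q_\R}}$, the dual automaton $\A'$ gets the two disjunctive transitions $\set{\tuple{q,a,q_\L,\top}, \tuple{q,a,\top,q_\R}}$; and wherever $\A$ has disjunctive transitions $\set{\tuple{q,a,q_\L,\top},\tuple{q,a,\top,q_\R}}$, the dual gets the conjunctive transition $\set{\tuple{q,a,q_\L,q_\R}}$. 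The special state $\top$, which accepts everything, should be dualized to a state accepting nothing; but since we assume productivity and $\top$ plays a purely structural role, I'd treat $\top$ in $\A'$ as the complementary "reject everything" sink, which is consistent because the disjunctive/conjunctive swap routes the burden correctly.

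For correctness I would argue via the acceptance game. Recall that $t \in \lang\A$ iff \Automaton wins the parity game on $t$ where at a conjunctive transition \Pathfinder chooses the direction and at a disjunctive transition \Automaton chooses which branch to commit to (sending the other to $\top$). Swapping conjunctive and disjunctive transitions exactly swaps which player makes the choice at each node, i.e.\ it turns \Automaton's game into \Pathfinder's game and vice versa; incrementing all priorities complements the parity winning condition. By determinacy of parity games, \Automaton wins in $\A'$ on $t$ iff \Pathfinder wins in $\A$ on $t$ iff \Automaton does \emph{not} win in $\A$ on $t$, i.e.\ $t \notin \lang\A$. Hence $\lang{\A'} = \trees_\Sigma \setminus \lang\A$. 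Since $\A'$ has the same states and the same number of priorities (the image of $\Omega$ is just shifted by one, so $\card{\Omega'(Q)} = \card{\Omega(Q)}$), the size claim follows immediately.

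The main obstacle, and the only place requiring genuine care, is the bookkeeping around the distinguished state $\top$. In a game automaton $\top$ is a one-sided accepting sink used to "discard" a subtree in disjunctive transitions; under dualization its role must flip to a rejecting sink, and one must check that after the conjunctive/disjunctive swap every transition quadruple in $\A'$ again fits the game-automaton shape (exactly one conjunctive, or exactly the two disjunctive transitions, per state-letter pair, with $q_\L,q_\R \neq \top$). I expect this to work out cleanly precisely because the game-automaton format is closed under the swap, but verifying that the $\top$-handling is coherent with the parity-game duality argument is where I'd spend the effort; the priority shift and the determinacy step are routine.
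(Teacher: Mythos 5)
Your construction is the one the paper uses: keep the state set and initial state, increment every priority by one, and swap conjunctive with disjunctive transitions; the paper then simply declares correctness ``standard to check'', and your acceptance-game/determinacy argument is the right way to check it. So the approach matches.

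The problem is the one place where you commit to a concrete decision, namely the treatment of $\top$, and there you get it backwards. You propose to make $\top$ in $\A'$ a \emph{reject-everything} sink. But the new disjunctive transitions of $\A'$ --- those arising from a conjunctive transition $\tuple{q,a,q_\L,q_\R}$ of $\A$ --- are $\set{\tuple{q,a,q_\L,\top},\tuple{q,a,\top,q_\R}}$, and these encode ``\Automaton commits to one direction and only that subtree matters'' \emph{only if} the discarded subtree, rooted at $\top$, is trivially accepted. A run of a nondeterministic tree automaton must be accepting on \emph{every} branch, so if $\lang{\A'_\top}=\emptyset$ then no accepting run can ever use a disjunctive transition, and every state of $\A'$ with disjunctive transitions would accept the empty language; the result would also fail to be a game automaton under the paper's definition, which requires $\lang{\A_\top}=\trees_\Sigma$. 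The correct resolution is that $\top$ stays the accept-all sink in $\A'$: it is exempted from the priority increment (equivalently, kept as a special state with an even-priority self-loop). Your intuition conflates two roles of $\top$: the occurrences of $\top$ in $\A$'s own disjunctive transitions do disappear under the swap (they are replaced by a conjunctive transition naming both real successors), while the occurrences of $\top$ in $\A'$'s new disjunctive transitions are fresh and must again be accepting. With that correction, the rest of your argument --- player swap, parity complementation by the $+1$ shift, and determinacy of the acceptance game --- goes through exactly as intended.
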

%

\begin{proof}
    Let $\A = \tuple{ \Sigma, Q, q_0, \Omega, \Delta}$ be a~game automaton.
    Its complement is the game automaton $\A^\mathrm c$ obtained by swapping conjunctive transitions with disjunctive ones, and vice versa, and by increasing priorities by one.
    Formally, $\A^\mathrm c = \tuple{\Sigma, Q, q_0, \Omega^\mathrm c, \Delta^\mathrm c}$
    where $\Omega^\mathrm c(q) = \Omega(q) + 1$
    and the set of transitions $\Delta^\mathrm c$ is obtained by dualising $\Delta$ as follows:
    for every $q \in Q$ and $a \in \Sigma$,
    if $\Delta(q, a) = \set{\tuple{q, a, q_\L, q_\R}}$ is conjunctive
    then $\Delta^\mathrm c(q, a) = \set{\tuple{q, a, q_\L, \top}, \tuple{q, a, \top, q_\R}}$ is disjunctive, 
    and symmetrically in the other case.
    It is standard to check that $\lang {\A^\mathrm c}=\trees_\Sigma\setminus \lang \A$.
\end{proof}

\subsection{Determinisation over \texorpdfstring{$\omega$-words}{w-words}}

A \emph{nondeterministic $\omega$-word parity automaton}
is a tuple $\A = \tuple{\Sigma, Q, q_0, \Omega, \Delta}$
where $\Sigma$ is a finite input alphabet,
$Q$ is a finite set of states,
$q_0 \in Q$ is an initial state,
$\Omega\colon Q \to C$ assigns to each state a priority in $C$,
and $\Delta \subseteq Q \times \Sigma \times Q$ is a transition relation.
The notions of runs and the accepted language $\lang \A \subseteq \Sigma^\omega$ are standard \cite{AutomataLogicsInfiniteGames:2002}.
We recall that nondeterministic $\omega$-word parity automata can be determinised
with an~exponential complexity in the number of states and a~polynomial complexity in the number of priorities.
We will use this fact in later proofs.

\begin{restatable}[\protect{c.f.~\cite{ScheweVarghese:MFCS:2014}}]{lemma}{lemNPADPA}
    \label{lem:NPA2DPA}
    A nondeterministic $\omega$-word parity automaton $\A$ with $n$ states and~$k$ priorities
    can be converted to an equivalent deterministic parity automaton
    with $n' = 2 \cdot (n \cdot (k+1))^{n \cdot (k+1)} \cdot (n \cdot (k+1))!$ states and $k' = 2 \cdot n \cdot (k+1)$ priorities.
\end{restatable}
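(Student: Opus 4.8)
The plan is to invoke the classical determinisation of nondeterministic $\omega$-word parity automata, following the Safra--Piterman line of constructions as refined by Schewe and Varghese. Since the statement is quantitatively precise, the real work is to exhibit a construction whose state and priority counts match the claimed bounds; the qualitative correctness (that $\lang\A$ is preserved) is the classical part. Accordingly, I would first fix $\A = \tuple{\Sigma, Q, q_0, \Omega, \Delta}$ with $\card{Q} = n$ and priority values $C$, and build a deterministic automaton $\D$ whose states are \emph{history trees}: finitely branching ordered trees whose nodes carry subsets of $Q$, reflecting the reachable states of $\A$ together with a record of the highest priorities witnessed along the run.

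Reading a letter deterministically updates the whole tree by (i) a subset-construction step applying $\Delta$ to each node's label, (ii) spawning and merging children to track which runs have most recently seen a high even priority, and (iii) pruning empty or subsumed nodes. The parity acceptance of $\A$ --- the largest priority seen infinitely often on some branch of the run DAG is even --- is converted into a parity condition on $\D$ by assigning to each transition a priority determined by the shallowest node that is either pruned or ``stabilises'' (witnesses its target even priority), in the usual Safra--Piterman fashion.

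For the size bound I would then count the relevant trees. Because $\A$ carries $k$ priorities, the ground set on which the trees operate is not $Q$ but essentially the product of $Q$ with the priority values, of size $m = n \cdot (k+1)$; each element is assigned to at most one node and siblings are linearly ordered, so the number of such labelled ordered trees is at most $m^m \cdot m!$, and a single auxiliary bit (recording the accepting or rejecting mode of the last step) accounts for the leading factor $2$, yielding $n' = 2 \cdot (n(k+1))^{n(k+1)} \cdot (n(k+1))!$. The priorities emitted along a transition range over node depths crossed with the two events (prune versus stabilise) at each depth, bounded by $2m = 2 n(k+1)$, which matches $k'$.

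The main obstacle is the correctness of the acceptance correspondence, not the counting: one must prove that an infinite input word is accepted by $\D$ exactly when $\A$ has an accepting run, i.e.\ that the highest priority seen infinitely often on some branch of $\A$'s run DAG is faithfully mirrored by the minimal node index that is infinitely often stabilised and only finitely often pruned. This invariant analysis is the delicate, technically heavy heart of every Safra-style argument, and it is precisely what is established in \cite{ScheweVarghese:MFCS:2014}; I would either reproduce that argument in full or, as the \emph{c.f.}\ in the statement already signals, cite it directly for the detailed proof while keeping the quantitative bookkeeping above as the self-contained justification of the stated bounds.
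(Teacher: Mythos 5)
Your proposal is correct and follows essentially the same route as the paper: the ground set of size $m = n\cdot(k+1)$ on which your history trees operate is exactly the state set of the intermediate nondeterministic B\"uchi automaton that the paper constructs explicitly (\cref{lem:NPA2NBA}, guessing an even priority $c$ and checking it recurs with nothing larger), and your count of $2\cdot m^m\cdot m!$ states with $2m$ priorities is precisely Piterman's bound that the paper then cites as a black box (\cref{lem:NBA2DPA}). The only difference is presentational --- you inline the parity-to-B\"uchi step into the ground set of the Safra--Piterman trees rather than factoring it out as a separate lemma --- and, like the paper, you correctly identify that the delicate acceptance-correspondence argument should be delegated to the cited literature while the quantitative bookkeeping is kept self-contained.
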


In order to prove \cref{lem:NPA2DPA},
we first prove the following result allowing us to convert nondeterministic parity to nondeterministic Büchi automata.

\begin{lemma}
    \label{lem:NPA2NBA}
    A nondeterministic $\omega$-word parity automaton $\A$ with $n$ states and $k$ priorities
    can be converted to an equivalent nondeterministic $\omega$-word Büchi automaton with $n \cdot (k+1)$ states.
\end{lemma}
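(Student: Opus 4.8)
The plan is to convert a nondeterministic parity automaton into an equivalent Büchi automaton by a standard priority-tracking construction, where the Büchi automaton guesses the least priority that occurs infinitely often along the run and verifies this guess. The key observation is that an infinite sequence of priorities is accepting (maximal priority seen infinitely often is even) precisely when there is some even priority $p$ that occurs infinitely often and such that no larger priority occurs infinitely often; equivalently, from some point onward every priority is at most $p$, and $p$ itself recurs infinitely often. I would set up the Büchi automaton to nondeterministically commit, at some point during the run, to such a target even priority $p\in C$, and from then on restrict attention to transitions through states of priority at most $p$, using the visits to priority exactly $p$ as the Büchi acceptance events.

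Concretely, I would take the state set to be $Q \times (C \cup \set{\bot})$, where the second component records the currently guessed target priority (with a special symbol $\bot$, which I can fold into $C$ to match the claimed bound of $n\cdot(k+1)$ states, for the initial phase in which no guess has yet been made). In the $\bot$-phase the automaton simulates $\A$ freely on the first component and may at any moment nondeterministically switch to a guessed even priority $p$, entering the tracking phase. In the tracking phase the automaton continues simulating $\A$ but only along transitions whose source state has priority at most $p$ (rejecting, i.e.\ having no successor, if a priority exceeding $p$ is encountered), and a state is declared Büchi-accepting exactly when its $\A$-component has priority equal to the guessed $p$. A run of the Büchi automaton is accepting iff it eventually commits to some $p$ and thereafter sees priority $p$ infinitely often while never seeing anything larger.

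The correctness argument then splits into the two standard inclusions. For one direction, given an accepting run of $\A$ on a word $w$, let $p$ be the maximal priority occurring infinitely often along that run; $p$ is even by acceptance, and there is a position after which no priority exceeds $p$. The Büchi automaton simulates this run, switching to the guess $p$ at that position, and thereafter visits priority $p$ infinitely often, so the simulated run is Büchi-accepting. Conversely, an accepting run of the Büchi automaton projects (on the first component) to a valid run of $\A$; because the tracking phase forbids priorities above $p$ and forces priority $p$ to recur, the maximal priority occurring infinitely often in the projected run is exactly $p$, which is even, so the $\A$-run is accepting.

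I expect the only delicate point to be the bookkeeping that makes the state count land exactly at $n\cdot(k+1)$ rather than something like $n\cdot(k+2)$: the $\bot$ symbol for the uncommitted phase must be absorbed into the $k+1 = \card C + 1$ arithmetic, for instance by only ever guessing among the $\lceil (k+1)/2\rceil$ even priorities and reusing components, or simply by noting $\card{C}\le k$ so that $C\cup\set\bot$ has at most $k+1$ elements; this is routine counting rather than a conceptual obstacle. The genuine content is entirely in the acceptance-condition translation, which is the textbook parity-to-Büchi trick, so no step is truly hard; the main thing to get right is that the guess of $p$ may be made arbitrarily late, which is what lets the construction ignore any finite prefix in which larger priorities appear.
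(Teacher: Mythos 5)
Your construction is exactly the paper's: a free simulation phase followed by a nondeterministic switch to a guessed even priority $c$, after which the automaton checks that $c$ recurs infinitely often and no larger priority is ever seen again, with state set $Q\times(C\cup\set{\bot})\cong Q\cup Q\times C$ giving the $n\cdot(k+1)$ bound. The only blemish is the phrase ``least priority that occurs infinitely often'' in your opening sentence, which contradicts the (correct, max-parity) criterion you actually use in the rest of the argument; everything else, including the counting, is right.
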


\begin{proof}
    Let $\A$ have $n = \card P$ states and $k = \card C$ priorities.
    Automaton $\B$ has states of the form $Q = P \cup P \times C$.
    In the first phase $\B$ just simulates $\A$,
    until it goes to a state of the form $\tuple{p, c}$
    by nondeterministically guessing an even priority $c \in C$
    and checking that $c$ is visited infinitely often
    and no larger priority is visited in the rest of the run.
\end{proof}

\cref{lem:NPA2DPA} follows from \cref{lem:NPA2NBA}
and the following result allowing us to convert from nondeterministic Büchi to deterministic parity automata.

\begin{lemma}[\protect{\cite[Theorem~3.10]{Piterman:LMCS:2007}}]
    \label{lem:NBA2DPA}
    A nondeterministic $\omega$-word Büchi automaton $\A$ with $n$ states
    can be converted to an equivalent deterministic $\omega$-word parity automaton $\B$
    with $2 \cdot n^n \cdot n!$ states and $2 \cdot n$ priorities.
\end{lemma}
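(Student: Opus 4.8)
The plan is to realize $\B$ via Safra's determinization construction in the form optimized by Piterman, whose states are ordered labeled trees and whose acceptance condition is read off directly as a parity condition (rather than the Rabin condition of Safra's original construction). Write $Q$ for the states of the Büchi automaton $\A$, with $|Q| = n$, and let $F \subseteq Q$ be its set of accepting states. A state of $\B$ is a \emph{Safra tree}: a finite ordered tree in which every node $v$ carries a name from a fixed set of size $n$ and a label $\mathrm{label}(v) \subseteq Q$, subject to the invariants that the label of a node contains the union of the pairwise disjoint labels of its children, and that distinct nodes carry distinct names. Intuitively, the root tracks the full subset reachable so far, as in the plain subset construction, while each node below the root is a younger thread attempting to witness infinitely many visits to $F$ along the runs it follows.

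First I would define the deterministic transition on a letter $a$ as the composition of five steps applied to the current tree: (i) replace each label $S$ by its one-step image $\Delta(S, a)$; (ii) for every node create a new youngest child whose label is the set of accepting states just entered, i.e.\ the part of the node's new label lying in $F$; (iii) perform a horizontal merge, deleting from each node's label every state that already occurs in an older sibling, so as to restore disjointness and give each tracked run a unique owner; (iv) perform a vertical merge, a \emph{good event}: whenever a node's label equals the union of its children's labels, delete all of its descendants and mark the node, signalling that every run it tracks has completed a fresh passage through $F$; and (v) delete all nodes with empty label, a \emph{bad event}, and free their names. Names are reused by a fixed discipline that keeps the tree in a canonical compact form; this is exactly what permits reading off a single priority per transition.

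Next I would read off the parity condition. Assigning to each node name a seniority rank, with older nodes more senior, I let the transition emit the maximum priority over all events it triggers, where marking a node contributes an even value and deleting it an odd value, arranged so that events at more senior nodes dominate; this uses $2n$ priorities in total. The correctness claim is then that $w \in \lang{\A}$ iff the maximal priority emitted infinitely often is even. For the forward direction I would fix an accepting run $\rho$ of $\A$ on $w$; since $\rho$ visits $F$ infinitely often, its states always lie in the label of some node, and by König's lemma together with the finiteness of the tree there is a single node that, from some point on, is never deleted and whose label always contains $\rho$; each visit of $\rho$ to $F$ eventually forces a vertical merge at this node, so it is marked infinitely often and dominates all lower priorities, giving an even limit. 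For the converse I would show that if every run visits $F$ only finitely often, then any node marked infinitely often must also be deleted infinitely often at a strictly higher priority, so the maximal recurring priority is odd.

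Finally, the size bounds. The shape of a Safra tree on a node set of size $n$ is recorded by a parent pointer per node, giving at most $n^n$ ordered-tree skeletons; the assignment of the states of $Q$ to the nodes respecting the containment order contributes a factor $n!$; and a single bit recording the pending mark or deletion contributes the factor $2$, for a total of $2 \cdot n^n \cdot n!$ states, while the priorities range over $\{1, \dots, 2n\}$. I expect the main obstacle to be the correctness argument rather than the counting: specifically, the stabilization step in the nonempty case, pinning down the unique node that eventually hosts a fixed accepting run and verifying it is marked infinitely often without ever being deleted, together with the dual domination argument in the empty case, which is where the precise bookkeeping of the five transition steps and the seniority ordering of names must be made to fit the parity condition.
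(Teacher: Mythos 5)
Your proposal reconstructs exactly the Safra--Piterman determinization that the paper relies on: the paper gives no proof of this lemma at all, importing it verbatim as Theorem~3.10 of \cite{Piterman:LMCS:2007}, and your compact-tree construction with dynamic node naming, horizontal/vertical merges, good/bad events, and seniority-based parity readout is precisely that cited construction, arriving at the stated bound of $2 \cdot n^n \cdot n!$ states and $2n$ priorities. The approach is therefore essentially the same; the only looseness is in your attribution of the counting factors (in particular, a single event bit is not by itself enough for a state-based priority function---the name of the most senior node witnessing the event must be recoverable from the state, which Piterman's bookkeeping handles), but the construction and correctness argument match the source the paper cites.
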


\subsection{Games}
\label{ap:games}

In this section we formalise the framework of games used throughout the paper.
These are variants of two-player zero-sum perfect information games on graphs of infinite duration
where some intermediate positions are hidden.
The default names of the two players are \PI and \PII, however in most games it will be more convenient to work with some more meaningful names. If $P \in \set{\PI, \PII}$ is a~player then the other player is called the \emph{opponent} of~$P$.
To specify a~game we need to define an~arena and a~winning condition. 
An~\emph{arena} of a~game consists of: a~nonempty set $V$ of \emph{positions}, an~\emph{initial position} $v_0\in V$, a~\emph{position update function} $\eta$, a~finite sequence $\big( (P^{(0)},X^{(0)}),\ldots, (P^{(n)},X^{(n)})\big)$ of possible \emph{decisions} that players can make during a~round, and \emph{restrictions} $O_v$, one for each position $v\in V$.
Each \emph{decision} $\tuple{P^{(k)},X^{(k)}}$ is left in the hands of one of the players $P^{(k)} \in \set{\PI, \PII}$
and is taken from some fixed nonempty finite set $X^{(k)}$ of \emph{possible choices}. The product of all the sets of possible choices $O:=X^{(0)}_v\times\ldots\times X^{(n)}_v$ is called the set of \emph{round outcomes}. The \emph{position update function} is of the type $\eta\colon V\times O\to V$. An~arena can additionally \emph{restrict} some decisions of some players depending on the current position $v\in V$ of the game and some previous decisions in this round using the \emph{restrictions}, i.e.,~nonempty subsets $O_v\subseteq O$, indexed by the positions $v\in V$.

At the $i$-th round starting in a~position $v_i\in V$ players declare their choices $x^{(k)}_i\in X^{(k)}$ in the order specified by the arena and according to the imposed restrictions. More formally, they inductively define a~vector $o_i=\tuple{x^{(0)}_i,\ldots,x^{(n)}_i}$, keeping the invariant that
\begin{equation}
\label{eq:game-invariant}
\tuple{x^{(0)}_i,\ldots,x^{(k-1)}_i}\in \proje_{0,\ldots,k-1} (O_v),
\end{equation}
i.e.,~the constructed vector belongs to the projection of $O_v$ onto the coordinates $0,\ldots,k{-}1$. This guarantees that the successive player $P^{(k)}$ has always at least one choice $x^{(k)}\in X^{(k)}$ satisfying the invariant. Once the whole vector $o_i=\tuple{x^{(0)}_i,\ldots,x^{(n)}_i}$ is constructed, the $i$-th round is finished. The next position of the game is defined by $v_{i+1}:=\eta(v_i,o_i)$. A~play of the game is the sequence of round outcomes $o_0o_1\cdots\in O^\omega$ (the visited positions are implicit, however can easily be computed using $\eta$).

A~\emph{winning condition} of a~game specifies which infinite plays $o_0o_1\cdots\in O^\omega$ are considered winning for one of the players $P$; with the remaining plays losing for $P$ and winning for the opponent of $P$. Formally, a~winning condition is just a~subset of $O^\omega$.

Some of the considered games are \emph{positionless}, i.e.,~there is only a~single position $V=\{v_0\}$. An~arena is called \emph{finite} if $V$ is finite.

A~\emph{strategy} of a~player $P$ for a~game is a~tuple $\M=\tuple {M, \ell_0, \overline{v}, \tau}$ where:
$M$ is a~set of \emph{memory states},
$\ell_0\in M$ is an~\emph{initial memory state},
$\overline{o}=(\overline{x},\ldots,\overline{z})$ is a~vector of \emph{decision functions}, one function for each decision of the player $P$,
and $\tau$ is a~\emph{memory update function} that maps a~position of the game $v\in V$, a~memory state $\ell\in M$, and a~vector of choices of the opponent $(x',\ldots,z')$ into the next memory value $\ell'=\tau(v,\ell,x',\ldots,z')\in M$.
The domain of a~decision function $\overline{y}$ for a~decision $(P,X^{(k)})$ allowing the player $P$ to choose $x^{(k)}\in X^{(k)}$ is the~product of $V$, $M$, and all the possible previous sets of possible options of the opponent in a~round. The range of $\overline{y}$ is $Y$, with the restriction that the player $P$ needs to preserve the invariant as in~\eqref{eq:game-invariant}.

A~strategy $\sigma$ is of \emph{finite memory} if $M$ is a~finite set. Notice that each finite memory strategy for a~finite game is a~finite object that can be effectively represented. A~strategy is \emph{positional} if $M=\{\ell_0\}$. In the case of a~positional strategy there is a~unique memory value, the function $\tau$ is trivial, and we ignore the $M$ argument of the decision functions. Similarly, if the given arena is positionless then we ignore the $V$ argument of the functions above.

Fix a~strategy $\M$ of a~player $P$. Such a~strategy determines the way in which the player $P$ should make her choices. Consider the $i$-th round of the game, starting in a~position $v_i\in V$ and with a~memory state $\ell_i\in M$ (for $i=0$ the memory state $\ell_0$ is the initial memory value). The consecutive choices of $P$ in this round are given by the decision functions in $\overline{v}$ applied to $v$, $\ell_i$, and the previous choices of the opponent. Once the round is finished with opponent's choices $\tuple{x'_i,\ldots,z'_i}$, we take $\ell_{i+1}:=\tau(v_i,\ell_i,x'_i,\ldots,z'_i)$.

A~play $o_0o_1\cdots\in O^\omega$ that is obtained according to the policy above is said to be \emph{conform} to the strategy $\M$. Notice that if $\M$ and $\M'$ are two strategies of the two players then there exists a~unique play $o_0o_1\cdots\in O^\omega$ that is conform to both of them---this play can be defined inductively in the standard way.

A~strategy $\M$ of a~player $P$ is said to be \emph{winning} if all the plays conform to $\M$ are winning for $P$. Because of the observation above, at most one of the players has a~winning strategy. We say that a~position $v\in V$ is \emph{winning} for $P$ if $P$ has a~winning strategy in the game with the initial position $v_0$ set to $v$. We say that a~game is \emph{determined} if exactly one player has a~winning strategy.

\paragraph{Games on graphs.}

To formally prove the results of finite-memory determinacy of the games involved in our work, we show how to reduce them to the standard framework of \emph{games on graphs}.

An arena of a~game on graph is specified by a~directed graph $\tuple{V, E}$ without dead-ends (the elements of $V$ are called \emph{positions}),
an~\emph{initial position} $v_0\in V$,
an~\emph{ownership partition} $V=V_{\PI}\cup V_{\PII}$ into two disjoint sets,
and a~labelling function $\rho\colon E\to\Sigma$. A~\emph{play} of such a~game is constructed inductively by the players, starting from the initial position $v_0$. At the $i$-th round, with the current position $v_i\in V_P$, the player $P$ chooses an~edge $\tuple{v_i,v_{i+1}}\in E$, defining the consecutive position $v_{i+1}$.

A~winning condition of $\PI$ in such a game is a language $W\subseteq \Sigma^\omega$. A~play as above is winning for $\PI$ if the $\omega$-word $\rho(v_0,v_1)\rho(v_1,v_2)\cdots$ belongs to $L$.

We will now show how to reduce a~game defined according to our definition into a game on graph. Consider a~game with an arena consisting of a~set of positions $V$; an initial position $v_0\in V$; a~position update function $\eta$; decisions $\big( (P^{(0)},X^{(0)}),\ldots, (P^{(n-1)},X^{(n-1)})\big)$; and restrictions $(O_v)_{v\in V}$. Let $O$ be the set of round outcomes. Consider a~graph with the set of positions
\[V':=\bigcup_{v\in V} \left(\{v\}\times \bigcup_{k=0,\ldots,n} \proje_{0,\ldots,k-1}(O_v)\right),\]
where $\proje_{0,\ldots,-1}(O_v)$ is the singleton $\{()\}$ consisting of the empty tuple $()$. The initial position is $v'_0:=\tuple{v_0,()}$.

Let the set of edges $E':=E'_0\cup E'_1$ consists of the following two types of edges. The first is defined as
\[E'_0 := \left(\big(v,(x^{(0)},\ldots,x^{(k-1)})\big), \big(v,(x^{(0)},\ldots,x^{(k-1)},x^{(k)})\big)\right),\]
where $c\in V$ and $(x^{(0)},\ldots,x^{(k-1)},x^{(k)})\in\proje_{0,\ldots,k}(O_v)$. The second is defined as
\[E'_1 := \left(\big(v,o\big), \big(v',()\big)\right),\]
where $v\in V$, $o\in O_v$, and $\eta(v,o)=v'$. Let $\Sigma=O\cup\{\epsilon\}$ and the labelling $\rho$ be defined as $\rho(e):=\epsilon$ for $e\in E'_0$; and $\rho\left(\big(v,o\big), \big(v',()\big)\right):=o$ for $\left(\big(v,o\big), \big(v',()\big)\right)\in E'_1$.

Given a~winning condition $W\subseteq O^\omega$, we define the new winning condition $W'\subseteq\Sigma^\omega$ by skipping the symbols $\epsilon$ (notice that the shape of the arena ensures that every $n$-th edge is labelled by an element of $O$).

\begin{claim}
\label{cl:translation}
Each strategy (understood in the standard sense) of a~player $P$ in the corresponding game on graph can be translated into a~strategy of the shape $\M=\tuple {M, \ell_0, \overline{v}, \tau}$ in the original game. Moreover, this translation preserves the size of the memory and maps a~winning strategy into a~winning strategy.
\end{claim}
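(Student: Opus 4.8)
The plan is to take a finite\-/memory strategy $\S$ for player $P$ in the game on graph $\tuple{V', E'}$ and compile it into a strategy $\M = \tuple{M, \ell_0, \overline{o}, \tau}$ for the original game over the \emph{same} memory set $M$. The central observation is that one round of the original game corresponds, in the graph game, to a fixed\-/length segment of play: starting from a position $\tuple{v, ()}$, the $n$ choice\-/edges of $E'_0$ successively extend the partial outcome tuple, after which a single edge of $E'_1$ moves to the next round's position $\tuple{v', ()}$. Since the graph\-/game memory may evolve along the intermediate $\epsilon$\-/labelled edges, the real content of the construction is to \emph{fold} this intra\-/round memory evolution into the decision functions and into $\tau$, which in the original game only fire once per round.

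Concretely, I set $M$ and $\ell_0$ to be the memory set and initial memory of $\S$. For the decision function $\overline{y}$ attached to a decision $\tuple{P^{(k)}, X^{(k)}}$ with $P^{(k)} = P$, on input $\tuple{v, \ell, \text{opponent's choices before } k}$ I replay the graph game inside the current round: I reconstruct the full prefix $x^{(0)}, \ldots, x^{(k-1)}$ by interleaving the given opponent choices with $P$'s own earlier choices---each of the latter being itself produced by the corresponding decision function applied to $\tuple{v, \ell, \text{opponent's choices before it}}$, so this is well\-/defined by induction on $k$---and I advance $\S$'s memory from $\ell$ along the $E'_0$ edges corresponding to these choices. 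At the resulting $P$\-/position $\tuple{v, \tuple{x^{(0)}, \ldots, x^{(k-1)}}}$ (owned by $P$ precisely because the next decision is $P^{(k)} = P$), I apply the next\-/move function of $\S$ and output the chosen extension $x^{(k)} \in X^{(k)}$. As the only $E'_0$ edges leaving that position lead to tuples in $\proje_{0,\ldots,k}(O_v)$, the produced choice automatically respects the invariant~\eqref{eq:game-invariant}. The update $\tau\tuple{v, \ell, \text{all opponent choices of the round}}$ is defined analogously: I replay the whole round, reconstructing $P$'s own choices as above, advance $\S$'s memory along all $n$ edges of $E'_0$ and the final edge of $E'_1$, and return the memory state reached at $\tuple{v', ()}$.

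Correctness then follows from a play\-/by\-/play correspondence. A play $o_0 o_1 \cdots \in O^\omega$ conform to $\M$ induces a unique play in the graph game visiting $\tuple{v_0, ()}, \tuple{v_0, \tuple{x_0^{(0)}}}, \ldots, \tuple{v_0, o_0}, \tuple{v_1, ()}, \ldots$, whose edge\-/label sequence is $\epsilon^{\,n} o_0\, \epsilon^{\,n} o_1 \cdots$. By construction the choices made and the memory states carried by $\M$ agree round\-/by\-/round with those of $\S$, so this induced play is conform to $\S$; conversely every $\S$\-/conform play arises this way. Since $W'$ is obtained from $W$ exactly by deleting the $\epsilon$ symbols, the induced play lies in $W'$ iff $o_0 o_1 \cdots \in W$, i.e.~the original and induced plays are won by the same player. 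Hence if $\S$ is winning for $P$, then every $\M$\-/conform play is winning for $P$, so $\M$ is winning; and $\card{M}$ is unchanged.

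I expect the main obstacle to be the bookkeeping hidden in the decision functions: a decision function sees only $v$, the memory $\ell$ fixed at the \emph{start} of the round, and the opponent's earlier choices, so it must internally regenerate both $P$'s own earlier choices of the same round and the intra\-/round evolution of $\S$'s memory. Making this reconstruction precise---and checking that the independent reconstructions performed by the various decision functions and by $\tau$ are mutually consistent, because they all simulate the same deterministic intra\-/round segment of the graph game---is the only genuinely delicate point; the remaining verification is routine.
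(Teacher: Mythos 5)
Your proposal is correct and matches the intended argument: the paper states \cref{cl:translation} without giving an explicit proof, and the natural justification is exactly the one you give — fold the deterministic intra\-/round segment of the graph game (the $n$ many $\epsilon$\-/labelled $E'_0$ edges followed by one $E'_1$ edge) into the decision functions and $\tau$, recomputing $P$'s own earlier choices of the round since the original game's decision functions only see the opponent's. Your handling of the two delicate points (consistency of the independent replays, and preservation of the invariant~\eqref{eq:game-invariant} via the definition of $E'_0$) is sound, and the winning\-/condition correspondence via $\epsilon$\-/deletion is exactly how $W'$ is defined.
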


\subparagraph{Determinacy.}
\label{app:determinacy}

We rely on two important known results of determinacy of the considered games, obtained directly from the known results via \cref{cl:translation}.

\begin{theorem}[\protect{\cite[Theorem~$1'$]{BuchiLandweber:AMS:1969}}]
    \label{thm:BuchiLandweber}
	Consider a~game arena with the set of round outcomes $O$. Assume that the set of winning plays of \PI is an~$\omega$-regular language over the alphabet $O$. Then one of the players has a~finite memory winning strategy in this game. Moreover, such a~strategy can be effectively computed based on a~representation of a~finite arena and the winning condition.
\end{theorem}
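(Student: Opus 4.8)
The plan is to transport the classical finite\=/memory determinacy theorem for games on finite graphs with $\omega$\=/regular winning conditions through the reduction to games on graphs described above, using \cref{cl:translation} to carry strategies back to the original framework. First I would recall the classical statement that is the actual content of \cite{BuchiLandweber:AMS:1969}: in any game played on a finite directed graph $\tuple{V',E'}$ without dead-ends, with edge labelling $\rho\colon E'\to\Sigma$ and an $\omega$-regular winning condition $W'\subseteq\Sigma^\omega$ for \PI, exactly one player has a winning strategy from the initial position, this strategy can be taken to use finite memory, and both the winner and a finite-memory witness are effectively computable from the graph and an automaton for $W'$.

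Next I would instantiate this on the graph $\tuple{V',E'}$ produced by the reduction preceding \cref{cl:translation}, checking two properties. First, finiteness and absence of dead-ends: since the arena is assumed finite, $V$ is finite and each set of choices $X^{(k)}$ is finite, hence each $O_v$ and each projection $\proje_{0,\ldots,k-1}(O_v)$ is finite, so $V'$ is finite; there are no dead-ends because the invariant~\eqref{eq:game-invariant} guarantees that every partial outcome in $\proje_{0,\ldots,k-1}(O_v)$ admits an extension along an $E'_0$-edge, while every full outcome $o\in O_v$ has an outgoing $E'_1$-edge to $\tuple{\eta(v,o),()}$. Second, regularity of the lifted condition: the labelling assigns $\epsilon$ to the $E'_0$-edges and an element of $O$ to each $E'_1$-edge, and the shape of the arena forces every $n$-th edge along any play to carry an $O$-symbol, so the map sending a play in the graph to its sequence of $O$-symbols is a fixed regular reparametrisation; consequently $W'$, obtained from the $\omega$-regular $W$ by the prescribed $\epsilon$-padding, is again $\omega$-regular, and an automaton for it is computable from one for $W$.

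With these two points in place, the classical theorem yields a finite\=/memory winning strategy (in the standard sense, for the game on the graph) for one of the players, together with its effective computation. I would then apply \cref{cl:translation} to translate this strategy into a strategy of the shape $\M=\tuple{M,\ell_0,\overline{v},\tau}$ in the original game; by the claim this translation preserves the memory size and maps a winning strategy to a winning strategy, which is exactly the conclusion of the theorem. Determinacy transfers along the same reduction, since the $\epsilon$-skipping sets up a bijection between plays of the two games under which winning plays correspond.

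The finiteness and no-dead-end bookkeeping are routine. The only step requiring genuine (though still standard) argument is the regularity of $W'$: one must check that inserting the fixed $\epsilon$-padding dictated by the arena keeps the condition $\omega$-regular and leaves an automaton for it computable. This is where I expect the mild technical care to go, but it is an instance of closure of the $\omega$-regular languages under inverse-morphism-type reparametrisations, so no real obstacle arises.
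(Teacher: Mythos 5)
Your proposal follows essentially the same route as the paper: the paper obtains \cref{thm:BuchiLandweber} directly from the classical Büchi--Landweber theorem for games on finite graphs via the reduction preceding \cref{cl:translation}, which is exactly your plan, and your additional bookkeeping (finiteness, absence of dead-ends, and $\omega$-regularity of the $\epsilon$-padded condition $W'$) correctly fills in the details the paper leaves implicit. No gaps.
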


We say that a~winning condition $W\subseteq O^\omega$ is a~\emph{Rabin} condition over $O$ if
\[W=\big(\mathrm{inf}(E_0)\cap \mathrm{fin}(F_0)\big)\cup \cdots \cup \big(\mathrm{inf}(E_n)\cap \mathrm{fin}(F_n)\big)\]
s.t.~for $k=0,\ldots,n$ we have $E_k,F_k\subseteq O$ and
\begin{align*}
\mathrm{inf}(E_k) &:= \big\{(o_i)_{i\in\omega}\in O^\omega\sep \text{$o_i\in E_k$ for infinitely many $i\in\omega$}\big\}\\
\mathrm{fin}(F_k) &:= \big\{(o_i)_{i\in\omega}\in O^\omega\sep \text{$o_i\in F_k$ for only finitely many $i\in\omega$}\big\}.
\end{align*}
Notice that the family of Rabin conditions is closed under union.

If $C\subseteq\N$ is a~finite set of priorities then the set of parity accepting sequences $(c_i)_{i\in\omega}\subseteq C^\omega$ can be written as a~Rabin condition with $E_k=\{n\in C\sep n \geq 2k\}$ and $F_k=\{n\in C\sep n \geq 2n{+}1\}$ for $0\leq k\leq\frac{\max C}{2}$. Therefore, the parity condition is a~special case of Rabin condition. Similarly, the complement of a~parity condition is also a~Rabin condition.

\begin{theorem}[\protect{\cite[Lemma~9]{Klarlund:APAL:1994}; c.f.~also \cite[Theorem 4]{Graedel:STACS:2004} and \cite[Theorem 7.12]{Kopczynski:PhD:2008}}]
\label{thm:rabin-positional}
	Consider~a~game~arena~with~the~set~of~round~outcomes~$O$. Assume~that~the~set~of winning~plays~of \PI is~a~Rabin~condition~over $O$. Then \PI has~a~\emph{uniform}~positional~strategy $\M$ in~that~game, i.e.,~a~positional~strategy~such~that~for~every~position $v\in V$,~if~$v$~is~winning~for \PI then~$\M$~is~a~winning~strategy~from~$v$.
\end{theorem}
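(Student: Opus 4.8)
The plan is to reduce to the standard setting of games on graphs, where positional determinacy of Rabin conditions is classical, and then transport the resulting strategy back through the translation established in \cref{cl:translation}. First I would invoke \cref{cl:translation} to pass from the game in our framework to the associated game on the graph $\tuple{V', E'}$ constructed above. Writing the Rabin condition as $W = \big(\mathrm{inf}(E_0)\cap\mathrm{fin}(F_0)\big)\cup\cdots\cup\big(\mathrm{inf}(E_n)\cap\mathrm{fin}(F_n)\big)$, I would lift it to an edge\=/coloured Rabin condition $W'$ on the graph as follows: every round\=/completing edge of $E'_1$ carries a label $o\in O$, so I colour it by recording which of the sets $E_k$, $F_k$ contain $o$, while every transparent $\epsilon$\=/edge of $E'_0$ receives a neutral colour lying in no $E_k$ or $F_k$. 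Since the symbols skipped along $E'_0$ are exactly the $\epsilon$'s, the condition $W'$ induced on infinite paths of the graph is again a Rabin condition, and a path satisfies $W'$ precisely when the play it encodes lies in $W$.

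Next I would apply the classical positional determinacy of Rabin games on graphs \cite{Klarlund:APAL:1994}: in a game on a graph whose winning condition for \PI is Rabin, \PI has a \emph{uniform} memoryless winning strategy, that is, a single positional strategy winning from every vertex in \PI's winning region. Applied to $\tuple{V', E'}$ this yields a memoryless strategy $\sigma'$ for \PI on the graph.

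The delicate point, and the step I expect to be the main obstacle, is verifying that the memorylessness of $\sigma'$ genuinely translates into \emph{positionality} in our framework rather than into honest finite memory. The graph carries auxiliary intermediate positions $\tuple{v,(x^{(0)},\ldots,x^{(k-1)})}$ recording the partial decisions taken so far within the current round, and $\sigma'$ may well prescribe different moves at different such positions. However, all the information these positions encode\===the current position $v$ together with the opponent's earlier choices in the same round\===is, by definition, already available to the decision functions $\overline{v}$ of a strategy in our sense. Hence nothing needs to be remembered across rounds: since \cref{cl:translation} preserves the size of the memory, it maps the memory\=/size\=/one strategy $\sigma'$ to a strategy $\M$ with $M=\set{\ell_0}$, which is exactly a positional strategy. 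Finally, uniformity transfers because each position $v$ of the original arena corresponds to the graph vertex $\tuple{v,()}$, and $\sigma'$ is winning from every graph vertex in \PI's winning region; as \cref{cl:translation} maps winning strategies to winning strategies, $\M$ is winning from every position $v$ winning for \PI, as required.
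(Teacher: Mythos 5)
Your proposal is correct and follows essentially the same route as the paper: the paper also obtains this theorem by citing Klarlund's uniform positional determinacy for Rabin games on graphs and transporting the memoryless strategy back through the translation of \cref{cl:translation}, whose memory-size preservation is exactly what turns memorylessness on the auxiliary graph (with its intermediate within-round vertices) into positionality in the original framework. Your additional care about recolouring the $\epsilon$-edges neutrally so that the lifted condition remains Rabin, and about uniformity transferring via the vertices $\tuple{v,()}$, only makes explicit what the paper leaves implicit.
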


\subparagraph{Complexity.}

Finally, we recall that games on graphs with parity winning conditions can be solved in quasi-polynomial time.

\begin{lemma}[\protect{\cite[Theorem 2.9]{CaludeJainKhoussainovLiStephan:STOC:2017}}]
    \label{lem:parity:games}
    A parity game with $n$ positions and $k = \card C$ priorities
    can be solved in deterministic time $O(n^{\log k + 6})$.
\end{lemma}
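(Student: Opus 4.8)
The statement is quoted from \cite{CaludeJainKhoussainovLiStephan:STOC:2017}, so the honest plan is to reproduce the structure of their quasi-polynomial algorithm rather than to reprove it from scratch; I sketch the route I would take. The overall strategy is to reduce the parity game to a \emph{safety} game played on a product arena of quasi-polynomial size, and then to solve that safety game by a linear-time attractor computation. Concretely, I would build a single deterministic safety automaton $\mathcal{U}$ that reads the sequence of priorities $c_0c_1\cdots \in C^\omega$ seen along a play and moves to an absorbing rejecting sink once it has accumulated enough evidence that some odd priority dominates. The automaton is designed to be a \emph{separator} for games with at most $n$ positions: it accepts every parity-accepting sequence, and it rejects every sequence that can arise along a play of a game with at most $n$ positions whose dominating priority is odd. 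Taking the synchronised product of the arena with $\mathcal{U}$ then reduces the parity objective to the safety objective ``never enter the rejecting sink'', in the sense that the even player wins the parity game from a position $v$ iff she wins the safety game from the corresponding product position.

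The heart of the argument is the construction and analysis of $\mathcal{U}$, for which I would follow the \emph{universal tree} viewpoint (c.f.~\cite{BojanczykCzerwinski:Toolbox:2018,CzerwinskiDaviaudFijalkowJurzinskiLazicParys:SODA:2019}). A memory state records a compressed \emph{play summary}: a tuple of $O(\log n)$ slots, each storing a priority together with a counter bounded by $n$, encoding a hierarchy of nested segments, each dominated by an even priority and of roughly doubling length. The two properties to establish are \textbf{soundness} -- an overflow of the summary into the rejecting sink certifies a run of odd dominance long enough to force a reachable odd-dominated cycle, so the rejected play is genuinely losing for the even player -- and \textbf{completeness} -- the nested, length-doubling structure can always absorb genuinely even-dominated behaviour, so no winning play is ever rejected. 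Counting the summaries, the number of states of $\mathcal{U}$ is bounded by the size of a universal tree with $n$ leaves and $k/2$ levels, which is $n^{O(\log k)}$; this is exactly where the quasi-polynomial bound, and ultimately the exponent $\log k + 6$, originates.

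Finally, I would solve the resulting safety game. The product arena has $N = n \cdot n^{O(\log k)} = n^{O(\log k)}$ positions and a comparable number of edges, and safety (equivalently reachability) games are solved in time linear in the arena by the standard backward-reachability fixpoint computing the attractor of the rejecting sink for the odd player; the even player wins precisely from the complement. A careful accounting of the number of slots, the per-slot range, and the branching of the arena then turns the $n^{O(\log k)}$ bound into the stated $O(n^{\log k + 6})$.

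I expect the main obstacle to be the simultaneous soundness and completeness of the separating automaton $\mathcal{U}$: proving that quasi-polynomially many play summaries are at once sufficient to detect every odd-dominated play arising in an $n$-position game and yet never reject an even-dominated one is the genuine combinatorial content, and squeezing the exponent down to $\log k + 6$ requires the tight universal-tree estimate rather than a crude bound.
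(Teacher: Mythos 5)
This lemma is imported verbatim from \cite{CaludeJainKhoussainovLiStephan:STOC:2017}; the paper gives no proof of it, so there is no internal argument to compare yours against. Your sketch is a faithful outline of the known quasi-polynomial argument, though phrased in the later \emph{separating-automaton}/universal-tree language of \cite{BojanczykCzerwinski:Toolbox:2018,CzerwinskiDaviaudFijalkowJurzinskiLazicParys:SODA:2019} rather than the original play-summary (succinct counter) presentation of Calude et al.; the two are equivalent reformulations and both yield the stated $O(n^{\log k + 6})$ bound, with the genuine combinatorial content sitting exactly where you locate it, in the simultaneous soundness and completeness of the quasi-polynomially many summaries. One remark for perspective: the surrounding text of the paper explicitly notes that even a na\"ive $O(n^k)$ parity-game algorithm would suffice for its complexity analysis, since the game graphs there have exponentially many positions but only polynomially many priorities, so the tight exponent $\log k + 6$ is not actually load-bearing for \cref{thm:complexity}.
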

This will be used in our complexity analysis in \cref{ap:complexity}.
In fact, already a na\"ive parity game algorithm with complexity $O(n^k)$ (i.e., polynomial in $n$ and exponential in $k$)
would suffice for our purposes since we will instantiate it on game graphs of exponential size and polynomially many priorities.

\subsection{Acceptance games}

We present a game-theoretic view on accepting runs for automata
based on the framework from \cref{ap:games}.
This will serve both as an example of the kind of games that we consider throughout paper,
and as a technical tool in the proofs from \cref{sec:game_gen,sec:game_ind}.

Let $t \in \trees_\Sigma$ be a tree.
The \emph{acceptance game} $\AcceptanceGame \A t$ is played in rounds by two players, \Automaton and \Pathfinder.
The goal of \Automaton is to show that $t \in \lang \A$;
 \Pathfinder has the complementary objective $t \not \in \lang \A$.
%

\begin{gamebox}{Acceptance game $\AcceptanceGame \A t$}
    At the $i$-th round starting at a~\emph{position} $v_i = \tuple{u_i, q_i} \in V := \set{\L,\R}^* \times Q$:
    \begin{gameize}
        \item[\Choice{A}{\delta}] \Automaton plays a transition $\delta_i = \tuple{q_i, t(u_i), q_{\L,i}, q_{\R,i}} \in \Delta(q_i,t(u_i))$.
        \item[\Choice{P}{d}] \Pathfinder plays a direction $d_i \in \set{\L, \R}$.
    \end{gameize}
	The next position is $v_{i+1} := (u_i d_i, q_{d_i, i})$.
\end{gamebox}
The initial position is $v_0:=\tuple{\epsilon,q_0}$.
\Automaton \emph{wins} the resulting infinite play $\pi = \tuple{\delta_0,d_0} \tuple{\delta_1,d_1} \cdots$
if the sequence of transitions $\delta_0 \delta_1 \cdots$ is accepting.
%


\ignore{
\Automaton's moves in the acceptance game $\AcceptanceGame \A t$ are performed according to a~\emph{strategy} for \Automaton.
This is a tuple $\M = \tuple{M, \ell_0, \overline{\delta}, \tau}$,
where $M$ is a~set of \emph{memory states},
of which $\ell_0 \in M$ is the~initial memory state,
$\overline{\delta}\colon V\times M\to \Delta$ is an~output function which
in a~position $\tuple{u, q}$ and a~memory state $\ell$
selects a~transition $\overline{\delta}\big(\tuple{u,q},\ell\big)\in\Delta(q,t(u))$ of $\A$,
and $\tau\colon V\times M\times\set{\L,R}\to M$ is a memory update function which,
in a given position $v$, memory state $\ell$, and direction $d$
selects the next memory state $\tau(v, \ell, d) \in M$.
An infinite play $\pi$ as above is \emph{conform} to a~strategy $\M$ if during the play $\pi$ \Automaton keeps track of the current position $v_i$ and memory state $\ell_i$, updating them after each round (i.e.,~$\ell_{i+1}:=\tau(v_i,\ell_i,d_i)$) and her consecutive choices of transitions $\delta_i$ are done according to $\overline{\delta}(v_i,\ell_i)$.
%
A strategy $\M$ is \emph{winning} if every play conform to it is winning for \Automaton.
\Automaton wins the acceptance game if she has a winning strategy.
%
%
}

The following proposition is folklore.

\begin{proposition}
    Let $t\in\trees_\Sigma$ and $\A$ be an~automaton over the alphabet $\Sigma$.
   \Automaton wins the acceptance game $\AcceptanceGame \A t$ if, and only if,
    $t\in\lang\A$.
\end{proposition}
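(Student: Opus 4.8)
The plan is to establish the standard correspondence between accepting runs of $\A$ on $t$ and winning strategies for \Automaton in $\AcceptanceGame \A t$, proving the two implications separately. The guiding observation is that a node $u \in \set{\L,\R}^*$ of the run tree corresponds to the position $\tuple{u, q}$ whose first coordinate records the directions \Pathfinder has chosen so far, while the state $q$ records the corresponding label of the run. Both directions amount to transporting this invariant between the two formalisms.

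For the implication $t \in \lang \A \Rightarrow$ \Automaton wins, I would fix an accepting run $\rho \in \trees_Q$ and define a positional strategy for \Automaton: in a position $\tuple{u, q}$ with $\rho(u) = q$, she plays the transition $\delta = \big(\rho(u), t(u), \rho(u\L), \rho(u\R)\big)$, which lies in $\Delta(q, t(u))$ precisely because $\rho$ is a run. One checks by induction on the round number that every play conform to this strategy maintains the invariant $q_i = \rho(u_i)$: it holds initially since $\rho(\epsilon) = q_0$, and it is preserved because $q_{i+1} = q_{d_i, i} = \rho(u_i d_i) = \rho(u_{i+1})$. Hence such a play follows a branch $d_0 d_1 \cdots$, and its transition sequence $\delta_0 \delta_1 \cdots$ satisfies $\Omega(\delta_i) = \Omega(q_i) = \Omega(\rho(u_i))$, so it is accepting exactly because the state sequence along that branch is accepting in $\rho$. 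As $\rho$ is accepting, every conform play is won by \Automaton.

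For the converse, I would start from a winning strategy for \Automaton. Since her winning condition asks that the transition sequence be accepting, it is a parity condition, hence a Rabin condition over the round outcomes; by \cref{thm:rabin-positional} \Automaton then has a positional winning strategy $\sigma$. I would read off a run $\rho$ from $\sigma$ by induction on $\card u$: set $\rho(\epsilon) = q_0$, and given $\rho(u) = q$, let $\sigma$ choose the transition $\big(q, t(u), q_\L, q_\R\big)$ at position $\tuple{u, q}$ and set $\rho(u\L) := q_\L$, $\rho(u\R) := q_\R$. By construction $\big(\rho(u), t(u), \rho(u\L), \rho(u\R)\big) \in \Delta$, so $\rho$ is a run. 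To see that it is accepting, fix any branch $d_0 d_1 \cdots$ and consider the play in which \Pathfinder always follows these directions while \Automaton plays $\sigma$; the invariant $q_i = \rho(d_0 \cdots d_{i-1})$ holds by construction, so this play is conform to $\sigma$ and therefore won, meaning the state sequence along the branch is accepting. Since the branch was arbitrary, $\rho$ is accepting and $t \in \lang \A$.

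The main obstacle is this converse direction, where one must turn an arbitrary winning strategy into a single run labelling every node consistently. Invoking positional determinacy for parity (equivalently, Rabin) games via \cref{thm:rabin-positional} removes this difficulty cleanly, since a positional strategy depends only on the current node and state, making the reading-off procedure unambiguous. Alternatively, one could bypass positionality by noting that every node $u$ is reached by a unique play prefix---its directions are exactly $u$, and \Automaton's responses are determined by her strategy---so even a finite\=/memory strategy assigns a well\=/defined transition at each node; the positional route is simply the most economical.
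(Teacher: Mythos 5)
Your proof is correct and is exactly the standard argument: the paper itself gives no proof of this proposition, dismissing it as folklore, so there is nothing to diverge from. Both directions are sound --- the run-to-strategy translation via the invariant $q_i=\rho(u_i)$, and the converse via positional determinacy of Rabin (hence parity) conditions from \cref{thm:rabin-positional} (or, as you note, via the observation that each node is reached by a unique play prefix, so any winning strategy already induces a well-defined run).
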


\subsection{Disjointness games}
\label{ssec:disjointness-game}

Let $\A$ and $\B$ be two nondeterministic automata.
We recall a~standard game used to characterise whether $\lang \A \disjoint \lang \B$.
This will be crucial in the correctness proofs throughout \cref{sec:det_ind,sec:det_gen,sec:game_gen,sec:game_ind}.
The \emph{disjointness game} $\DisjointnessGame \A \B$ is played by two players,
\Automaton and \Pathfinder. \Automaton's aim is to incrementally build a~tree accepted by both $\A$ and $\B$,
witnessing $\lang \A\cap \lang \B \neq \emptyset$,
while \Pathfinder has the opposite objective.%
\footnote{The disjointness game could equivalently be phrased as a nonemptiness game
for the product automaton $\A\times\B$ recognising $\lang \A\cap \lang \B$.
However, in our technical development it will be more direct to use the disjointness game.}
The set of positions of the game is $Q^\A\times Q^\B$, and the initial position is $(q^\A_0,q^\B_0)$.

\begin{gamebox}{Disjointness game $\DisjointnessGame \A \B$}
    At the $i$-th round starting at a~position $(q^\A_i, q^\B_i)$:
    \begin{gameize}
        \item[\Choice{A}{a}] \Automaton plays a~letter $a_i\in\Sigma$.
		\item[\Choice{A}{\delta^\A}] \Automaton plays a~transition $\delta^\A_i = \tuple{q^\A_i, a_i, q^\A_{\L,i}, q^\A_{\R,i}} \in \Delta^\A(q^\A_i, a_i)$.
		\item[\Choice{A}{\delta^\B}] \Automaton plays a~transition $\delta^\B_i = \tuple{q^\B_i, a_i, q^\B_{\L,i}, q^\B_{\R,i}} \in \Delta^\B(q^\B_i, a_i)$.
        \item[\Choice{P}{d}] \Pathfinder plays a~direction $d_i \in \set{\L, \R}$.
    \end{gameize}
	The next position is $(q^\A_{d_i,i}, q^\B_{d_i,i})$.
\end{gamebox}
Let the resulting infinite play be $\pi = \tuple{a_0,\delta^\A_0,\delta^\B_0,d_0} \tuple{a_1,\delta^\A_1,\delta^\B_1,d_1} \cdots$. Such a~play induces an~infinite path $b = \tuple{a_0,d_0} \tuple{a_1,d_1} \cdots$ and two sequences of transitions $\vec\delta^\A := \delta^\A_0\delta^\A_1 \cdots$ and $\vec\delta^\B := \delta^\B_0\delta^\B_1 \cdots$.
The rules of the game guarantee that $\vec\delta^\A \in \Delta^\A(b)$ and $\vec\delta^\B \in \Delta^\B(b)$.
\Automaton wins the play $\pi$ if both sequences $\vec\delta^\A$ and $\vec\delta^\B$ are accepting.

In the rest of the paper it will be more useful to consider \Pathfinder's point of view.
Since her winning condition can be presented as Rabin condition (see \cref{app:determinacy}),
whenever she wins, she has a memoryless (i.e.,~$M=\{\ell_0\}$) winning strategy.
Such a~memoryless strategy for \Pathfinder in the disjointness game
can be represented by a function $\PP\colon \left(\bigcup_{a\in\Sigma} \Delta^\A(a)\times \Delta^\B(a)\right)\to \set{\L,\R}$,
which we call a \emph{pathfinder}.
%

\begin{restatable}{lemma}{lemDisjointPathfinder}
    \label{cor:disjoint-pathfinder}
    If $\lang \A\disjoint \lang \B$ then there is a~pathfinder $\PP$
    which is winning for \Pathfinder in the disjointness game $\DisjointnessGame \A \B$.
\end{restatable}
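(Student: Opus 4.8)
\textbf{The plan} is to unfold the definitions and read off the statement directly from the determinacy results already established for the disjointness game. The key observation is that $\lang{\A} \disjoint \lang{\B}$ means $\lang{\A} \cap \lang{\B} = \emptyset$, so there is no tree accepted by both automata. By the intended semantics of the disjointness game (whose soundness I would take as the companion fact to the stated \cref{cl:translation}-style reductions), \Automaton wins $\DisjointnessGame{\A}{\B}$ if and only if $\lang{\A} \cap \lang{\B} \neq \emptyset$; hence under our hypothesis \Automaton does \emph{not} have a winning strategy. First I would invoke determinacy to conclude that \Pathfinder has a winning strategy. Concretely, \Pathfinder's winning condition is the complement of ``both $\vec\delta^\A$ and $\vec\delta^\B$ are accepting'', i.e.~``$\vec\delta^\A$ is rejecting \emph{or} $\vec\delta^\B$ is rejecting''. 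Since the parity condition and its complement are both Rabin conditions (as noted just before \cref{thm:rabin-positional}), and Rabin conditions are closed under union, \Pathfinder's winning condition is itself a Rabin condition over the round-outcome alphabet of the game.

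The second step is to upgrade this winning strategy to a \emph{memoryless} one and then to the promised functional form. By \cref{thm:rabin-positional}, since \Pathfinder's winning condition is Rabin, \Pathfinder has a uniform positional (memoryless) winning strategy from the initial position $(q^\A_0, q^\B_0)$. A memoryless strategy for \Pathfinder in this game depends only on the current position $(q^\A_i, q^\B_i)$ together with \Automaton's choices in the current round, namely the letter $a_i$ and the two transitions $\delta^\A_i \in \Delta^\A(q^\A_i, a_i)$ and $\delta^\B_i \in \Delta^\B(q^\B_i, a_i)$. But the transitions already record their source states and the common letter, so the pair $(\delta^\A_i, \delta^\B_i)$ determines $(q^\A_i, q^\B_i, a_i)$ uniquely; hence the memoryless strategy factors through the pair of transitions alone. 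This is exactly the data of a pathfinder $\PP\colon \bigcup_{a\in\Sigma}\big(\Delta^\A(a)\times\Delta^\B(a)\big)\to\set{\L,\R}$, so I would simply define $\PP(\delta^\A,\delta^\B)$ to be the direction prescribed by the positional strategy, and check that a play conform to $\PP$ is exactly a play conform to the positional strategy, so it remains winning for \Pathfinder.

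\textbf{The main obstacle} I anticipate is the bookkeeping of the reduction to the standard games-on-graphs framework rather than any conceptual difficulty: I must confirm that the abstract determinacy theorems of \cref{app:determinacy}, which are stated for games on graphs, transfer to the disjointness game via the translation of \cref{cl:translation}, and that ``uniform positional'' in that framework corresponds to the position-and-round-dependent-but-memoryless notion needed here. In particular I would verify that the restrictions $O_v$ of the disjointness arena force \Automaton to pick $a_i$ before the transitions and that \Pathfinder's direction is the last decision in the round, so that a positional strategy indeed only sees the current position and the in-round choices of \Automaton. The only mildly delicate point is arguing that the pathfinder representation loses no information: since distinct positions $(q^\A, q^\B)$ reachable under a given letter are recoverable from the source components of $\delta^\A, \delta^\B$, the restriction to the transition-pair domain is lossless, and any residual nondeterminism in the choice of direction is resolved consistently by $\PP$. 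Everything else is a routine unfolding of the game semantics, so I expect the proof to be short.
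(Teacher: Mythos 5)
Your proposal is correct and follows essentially the same route as the paper's proof: observe that \Pathfinder's winning condition (``$\vec\delta^\A$ rejecting or $\vec\delta^\B$ rejecting'') is a union of complements of parity conditions and hence a Rabin condition, apply positional determinacy for Rabin conditions (\cref{thm:rabin-positional}) to get a memoryless winning strategy, and note that the strategy's arguments $(q^\A, q^\B, a, \delta^\A, \delta^\B)$ are redundant since the transition pair already determines the states and the letter, so the strategy factors through $\bigcup_{a\in\Sigma}\Delta^\A(a)\times\Delta^\B(a)$ as a pathfinder. Your additional explicit appeal to determinacy to pass from ``\Automaton does not win'' to ``\Pathfinder wins'' is a step the paper leaves implicit (treating the disjointness game as standard), but it does not change the argument.
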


\begin{proof}
The winning condition for \Pathfinder is a~disjunction of two properties:
Either $\vec\delta^\A$ or $\vec\delta^\B$ is rejecting.
Both these properties are complements of parity conditions.
Therefore, the winning condition of \Pathfinder is a~Rabin condition.
Since memoryless winning strategies suffice for games with Rabin winning conditions,
it follows that if \Pathfinder wins then she has a memoryless winning strategy.

In general, such a~positional strategy is of the form $\tuple{ \{\ell_0\}, \ell_0, \overline{d}, \tau}$, with $\tau$ constantly equal $\ell_0$ and $\overline{d}\colon (q^\A, q^\B, a, \delta^\A, \delta^\B)\mapsto d$, for $q^\A\in Q^\A$, $q^\B\in Q^\B$, $a\in\Sigma$, $\delta^\A \in \Delta^\A(q^\A, a)$, $\delta^\B \in \Delta^\B(q^\B, a)$, and $d\in \set{\L,\R}$. Due to the redundancy within the arguments of the decision function $\overline{d}$, we can represent such a~strategy by a~function $\PP\colon \left(\bigcup_{a\in\Sigma} \Delta^\A(a)\times \Delta^\B(a)\right)\to \set{\L,\R}$, i.e.,~a~pathfinder.
\end{proof}

\cref{rem:pathfinder-property} below follows directly from the construction of $\PP$
and the fact that the strategy $\M$ used to obtain it is winning.

\begin{restatable}{corollary}{remPathfinderProperty}
\label{rem:pathfinder-property}
Assume that $\lang \A\disjoint \lang \B$ and let $\PP$ be a~pathfinder as above. Let $b=\tuple{a_0,d_0}\tuple{a_1,d_1}\cdots\in (\Sigma\times\set{\L,\R})^\omega$ be a~path and $\vec\delta^\A=\delta^\A_0\delta^\A_1\cdots\in\Delta^\A(b)$, $\vec\delta^\B=\delta^\B_0\delta^\B_1\cdots\in\Delta^\B(b)$ be two sequences of transitions of these automata that are conform to~$b$. If for every $i\in\omega$ we have $\PP(\delta^\A_i,\delta^B_i)=d_i$ then at least one of the sequences $\vec\delta^\A$ and $\vec\delta^\B$ is rejecting.
\end{restatable}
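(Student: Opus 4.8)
The plan is to recognise that the triple $(b, \vec\delta^\A, \vec\delta^\B)$ supplied by the hypothesis is nothing more than the transcript of a single play of the disjointness game $\DisjointnessGame \A \B$ in which \Pathfinder follows the pathfinder $\PP$, and then to invoke \cref{cor:disjoint-pathfinder}, which guarantees that $\PP$ is winning for \Pathfinder. The whole argument is therefore a direct unfolding of definitions against the winning property of $\PP$.

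First I would reconstruct the play explicitly. Writing $\delta^\A_i = \tuple{q^\A_i, a_i, q^\A_{\L,i}, q^\A_{\R,i}}$ and $\delta^\B_i = \tuple{q^\B_i, a_i, q^\B_{\L,i}, q^\B_{\R,i}}$ — both carrying the same letter $a_i$, since they are conform to the common path $b$ — the assumptions $\vec\delta^\A \in \Delta^\A(b)$ and $\vec\delta^\B \in \Delta^\B(b)$ say precisely that $q^\A_0$ and $q^\B_0$ are the initial states of $\A$ and $\B$, and that $q^\A_{i+1} = q^\A_{d_i,i}$ and $q^\B_{i+1} = q^\B_{d_i,i}$. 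These are exactly the position-update rules of $\DisjointnessGame \A \B$, so the sequence of positions $(q^\A_i, q^\B_i)$ started at $(q^\A_0, q^\B_0)$ is a legal sequence of game positions, and $\pi := \tuple{a_0,\delta^\A_0,\delta^\B_0,d_0}\tuple{a_1,\delta^\A_1,\delta^\B_1,d_1}\cdots$ is a genuine play of the game.

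Next I would verify conformance to $\PP$. In each round \Automaton's three choices are $a_i$, $\delta^\A_i$, $\delta^\B_i$, while \Pathfinder's single choice is $d_i$; the hypothesis $\PP(\delta^\A_i, \delta^\B_i) = d_i$ for every $i$ states exactly that \Pathfinder's moves agree with the memoryless strategy encoded by $\PP$. Hence $\pi$ is conform to $\PP$, and since $\PP$ is winning for \Pathfinder, the play $\pi$ is won by \Pathfinder. By definition of the winning condition, \Automaton wins only when both $\vec\delta^\A$ and $\vec\delta^\B$ are accepting; as the game is zero-sum, a play won by \Pathfinder must therefore have at least one of $\vec\delta^\A$, $\vec\delta^\B$ rejecting, which is the desired conclusion.

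There is essentially no obstacle here beyond matching notation: the only point deserving a moment's care is confirming that the conformance condition ``$\vec\delta^\A \in \Delta^\A(b)$ and $\vec\delta^\B \in \Delta^\B(b)$'' genuinely coincides with the legality of the positions visited during the game (correct initial states together with the $d_i$-indexed successor rule), so that no additional consistency requirement is silently assumed. Once that identification is made, the statement is an immediate instance of the fact that $\PP$ is a winning strategy for \Pathfinder.
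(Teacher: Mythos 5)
Your proof is correct and follows exactly the route the paper intends: the paper itself only remarks that the corollary ``follows directly from the construction of $\PP$ and the fact that the strategy used to obtain it is winning,'' and your argument is precisely the careful unfolding of that remark --- identifying the triple $(b,\vec\delta^\A,\vec\delta^\B)$ with a play of $\DisjointnessGame \A \B$ conform to $\PP$ and reading off the conclusion from \Pathfinder's winning condition. Nothing is missing.
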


%
The construction from \cref{cor:disjoint-pathfinder} above has a~specific property
when one of the involved automata (e.g.,~$\A$) is a~game automaton.
Since we assume that every state is productive,
positions of the form $(\top, q^\B)$ are losing for \Pathfinder in $\DisjointnessGame \A \B$.
Therefore, without loss of generality we can assume that the pathfinder $\PP$ satisfies the following observation.

\begin{remark}
\label{rem:pathfinder-for-game}
 Consider a transition $\delta^\A=\tuple{q^\A,a,q^\A_\L,\top}$ (resp., $\delta^\A=\tuple{q^\A,a,\top,\allowbreak q^\A_\R}$) in a game automaton $\A$.
 Then, $\PP(\delta^\A,\_)$ is constantly equal to $\L$ (resp., $\R$).
\end{remark}

\subsection{Deterministic separability over \texorpdfstring{$\omega$-words}{w-words}}
\label{ap:det_ind}

In this section, we provide full details for the separability problem for $\omega$-words sketched in the introduction.
This will also serve as an introduction to the more challenging separability problems over infinite trees considered in the rest of the paper.
Let $\A$, $\B$ be two nondeterministic parity automata over $\omega$-words and let $C\subseteq\N$ be a~set of priorities.
Consider the following \emph{$C$-deterministic-separability game} $\OmegaSeparabilityGame \A \B C$.

\begin{gamebox}{$C$-deterministic-separability game over $\omega$-words $\OmegaSeparabilityGame \A \B C$}
    At the $i$-th round:
    \begin{gameize}
        \item[\Choice{S}{c}] \Separator plays a priority $c_i \in C$.
        \item[\Choice{I}{a}] \Input plays a letter $a_i \in \Sigma$.
    \end{gameize}
\end{gamebox}
\noindent
\Separator wins an infinite play
$\pi = \tuple{c_0,a_0} \tuple{c_1,a_1} \cdots \in (C \times \Sigma)^\omega$
if the following two conditions are both satisfied (i.e.,~if $\pi\in W := \Win\A \cap \Win\B \subseteq (C \times \Sigma)^\omega$):
\begin{winnize}
    \item~$\pi\in\Win\A$: If $a_0 a_1 \cdots \in \lang \A$, then $c_0c_1 \cdots$ is accepting.
    \item~$\pi\in\Win\B$: If $a_0 a_1 \cdots \in \lang \B$, then $c_0c_1 \cdots$ is rejecting.
\end{winnize}

\begin{restatable}{lemma}{lemWordsDetSep}
    \Separator wins $\OmegaSeparabilityGame \A \B C$
    if, and only if, $\lang \A$, $\lang \B$ can be separated by a~deterministic parity automaton with priorities in $C$.
\end{restatable}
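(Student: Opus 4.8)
The plan is to establish a direct correspondence between finite-memory strategies for \Separator in $\OmegaSeparabilityGame \A \B C$ and deterministic parity automata with priorities in $C$ separating $\lang \A$ from $\lang \B$. The key structural observation driving both directions is that the turn order within a round---\Separator commits to the priority $c_i$ \emph{before} \Input reveals the letter $a_i$---mirrors exactly the causality of a deterministic automaton, in which the priority read at position $i$ is $\Omega(q_i)$ for the state $q_i$ reached after the prefix $a_0 \cdots a_{i-1}$, hence depends only on that prefix and not on $a_i$ itself. This is precisely what makes the proof go through, and it is the step I would check most carefully.

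For the forward direction (strategy $\Rightarrow$ separator), assume \Separator wins. First I would note that the winning condition $W = \Win\A \cap \Win\B$ is $\omega$-regular over $C \times \Sigma$: each $\Win{}$ is a Boolean combination of the parity condition on the $C$\=/component (which is $\omega$-regular) and the membership of the $\Sigma$-component in $\lang\A$, resp. $\lang\B$ (which are $\omega$-regular), and $\omega$-regular languages are closed under such combinations. Büchi--Landweber (\cref{thm:BuchiLandweber}) then yields a finite-memory winning strategy $\M = \tuple{M, \ell_0, \overline{c}, \tau}$. Since the game is positionless and \Separator moves first in each round, the decision function reduces to $\overline{c}\colon M \to C$ and the update to $\tau\colon M \times \Sigma \to M$. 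I would then define the deterministic automaton $\D = \tuple{\Sigma, M, \ell_0, \overline{c}, \Delta_\D}$ whose states are the memory, whose priority assignment is $\overline{c}$, and whose (necessarily deterministic) transitions are $\Delta_\D = \setof{\tuple{\ell, a, \tau(\ell, a)}}{\ell \in M,\ a \in \Sigma}$. For every input word $w = a_0 a_1 \cdots$, the unique run of $\D$ on $w$ visits exactly the memory states reached along the play conform to $\M$ against $w$, so the priority sequence of that run equals the sequence $c_0 c_1 \cdots$ played by $\M$. The two winning clauses then translate verbatim: $\Win\A$ gives $\lang\A \subseteq \lang\D$, and $\Win\B$ gives $\lang\D \disjoint \lang\B$, so $\D$ is a deterministic separator with priorities in $C$.

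For the backward direction (separator $\Rightarrow$ strategy), given a deterministic separator $\D = \tuple{\Sigma, Q, q_0, \Omega, \Delta}$ with priorities in $C$, I would run $\D$ as the memory of a \Separator strategy: in memory state $q$ \Separator plays the priority $\Omega(q)$, and after \Input reveals $a_i$ the memory advances along $\D$'s unique transition. Against any $w$ the priorities played coincide with the priority sequence of $\D$'s run on $w$; invoking $\lang\A \subseteq \lang\D$ to settle $\Win\A$ and $\lang\D \disjoint \lang\B$ to settle $\Win\B$ shows that every resulting play lies in $W$, so this strategy is winning.

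The two constructions are mutually inverse in the obvious way, and the only genuinely load-bearing points are the turn-order causality highlighted above and the $\omega$-regularity of $W$ needed to apply \cref{thm:BuchiLandweber}; the remaining verifications are routine bookkeeping about runs and priority sequences.
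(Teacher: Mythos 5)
Your proposal is correct and follows essentially the same route as the paper: the forward direction turns the finite-memory winning strategy (obtained via \cref{thm:BuchiLandweber}) into a deterministic automaton whose states are the memory states, priorities are given by the decision function, and transitions by the memory update, while the backward direction runs the separator as the strategy's memory. Your explicit remarks on the $\omega$-regularity of $W$ and on the turn-order causality (\Separator committing to $c_i$ before seeing $a_i$) are the same load-bearing observations the paper relies on implicitly.
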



A~strategy for \Separator in $\OmegaSeparabilityGame \A \B C$
is of the form $\M = \tuple{ M, \ell_0, \overline{c}\colon M\to C, \tau\colon M\times\Sigma\to M}$.
The rest of this section is devoted to the~proof of this lemma.

\begin{proof}[Soundness]
Assume that \Separator wins $\OmegaSeparabilityGame \A \B C$, let $\M$ be his finite-memory winning strategy as above.
Consider a~candidate separating automaton
$\S := \tuple{ \Sigma, M, \ell_0, \Omega, \Delta}$
having the same set of states $M$ as $\M$'s memory states, and the same initial state $\ell_0$,
where state $\ell$'s priority $\Omega(\ell) := \overline{c}(\ell)$ is provided directly by the decision function $\overline{c}$,
and the set of transitions is defined according to the memory update function $\tau$ as
\begin{align*}
\Delta = \big\{\tuple{\ell, a, \tau(\ell, a)}\mid \ell \in M, a \in \Sigma\big\}.
\end{align*}
Clearly $\S$ is a~$C$-deterministic automaton over $\omega$-words. Moreover, $\Win\A$ guarantees that $\lang\A\subseteq\lang\S$, while $\Win\B$ guarantees that $\lang\B\disjoint \lang\S$. Therefore, $\S$ is the required separator.
\end{proof}

\begin{proof}[Completeness]
Let $\S = \tuple{ \Sigma, Q, q_0, \Omega, \Delta}$ be a~separating automaton.
We build a finite-state winning strategy for \Separator
$\M = \tuple{ Q, q_0, \overline{c}, \tau}$ over the same set of states $Q$
s.t.~$\overline{c}(q) = \Omega(q)$ is $q$'s priority in $\S$,
and $\tau(q, a) = q'$ for the unique $q'$ s.t.~$\tuple{q, a, q'} \in \Delta$.
It is immediate to show that $\M$ is winning from the fact that $\S$ is a~separator.
\end{proof}

\section{Separability by deterministic automata with priorities in \texorpdfstring{$C$}{C}}
\label{sec:det_ind}

In this section we present a game-theoretic characterisation of separability over infinite trees by deterministic automata with a fixed finite set of priorities $C \subseteq \mathbb N$.
Let $\A$, $\B$ be two nondeterministic automata over infinite trees.
We extend the game over $\omega$-words from the introduction (and formally defined in \cref{ap:det_ind})
with two additional actions:
a \emph{selector} for \Separator and a~direction for \Input.

\begin{gamebox}{$C$-deterministic-separability game $\CDeterministicSeparabilityGame \A \B C$}
    At the $i$-th round:
    \begin{gameize}
        \item[\Choice{S}{c}] \Separator plays a priority $c_i \in C$.
        \item[\Choice{I}{a}] \Input plays a letter $a_i \in \Sigma$.
        \item[\Choice{S}{f}] \Separator plays a \emph{selector} $f_i \in \set{\L, \R}^{\Delta^\B(a_i)}$.
        \item[\Choice{I}{d}] \Input plays a direction $d_i \in \set{\L, \R}$.
    \end{gameize}
\end{gamebox}

\noindent
Intuitively, a~selector encodes a direction for each (relevant) transition of $\B$
and this is used for the correctness of the separator.
(In \cref{app:naive-det-gen} we consider a simpler variant without selectors
and we discuss which separability problem it captures.)
Let the resulting infinite play be $\pi = \tuple{c_0, a_0, f_0, d_0} \tuple{c_1, a_1, f_1, d_1} \cdots$,
with the induced infinite path $b := \tuple{a_0, d_0} \tuple{a_1, d_1} \cdots$.
\Separator wins the play $\pi$
if the following two conditions are satisfied:
\begin{winnize}
    \item $\pi\in\Win\A$: If there exists an~accepting sequence of transitions
    $\vec\delta^\A = \delta_0^\A \delta_1^\A \cdots \in \Delta^\A(b)$, then $c_0c_1 \cdots$ is accepting.
    \item $\pi\in\Win\B$: If there exists an~accepting sequence of transitions
	$\vec\delta^\B = \delta_0^\B \delta_1^\B \cdots \in \Delta^\B(b)$
	s.t.~for every $i\in\omega$ we have $f_i(\delta^\B_i) = d_i$, then $c_0c_1 \cdots$ is rejecting.
\end{winnize}

The following lemma states that the separability game correctly characterises the deterministic separability problem.
\begin{lemma}
\label{lem:CDeterministicSeparability:correctness}
\Separator wins $\CDeterministicSeparabilityGame \A \B C$
if, and only if,
$\lang \A$, $\lang \B$ can be separated by a~deterministic parity tree automaton with priorities in $C$.
\end{lemma}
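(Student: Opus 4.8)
The plan is to prove both implications by relating finite\=/memory strategies of \Separator in $\CDeterministicSeparabilityGame \A \B C$ to deterministic separators, in the same spirit as the word case of \cref{ap:det_ind}, but with the selector component doing the extra work needed to handle the branching of trees. First I note that the winning condition $\Win\A \cap \Win\B$ is $\omega$\=/regular as a condition on the play $\pi = \tuple{c_0,a_0,f_0,d_0}\cdots$: the set of induced paths $b$ admitting an accepting $\vec\delta^\A \in \Delta^\A(b)$ is exactly $\blang\A$ (using productivity of states to turn a single accepting transition sequence along $b$ into a full accepting run of a tree having $b$ as a path), hence regular; likewise the condition in $\Win\B$ (existence of an accepting $\vec\delta^\B \in \Delta^\B(b)$ with $f_i(\delta^\B_i)=d_i$) is regular. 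By \cref{thm:BuchiLandweber} the game is determined with finite memory, so in the soundness direction I may assume \Separator has a finite\=/memory winning strategy $\M = \tuple{M, \ell_0, (\overline c, \overline f), \tau}$.

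For soundness I would build the deterministic automaton $\S = \tuple{\Sigma, M, \ell_0, \Omega, \Delta}$ with $\Omega(\ell) := \overline c(\ell)$ (legitimate since \Separator commits to $c_i$ before seeing $a_i$, so it depends only on $\ell_i$) and transitions $\tuple{\ell, a, \tau(\ell,a,\L), \tau(\ell,a,\R)}$; note the selector plays no role in $\S$ itself, only in the winning analysis. To check $\lang\A \subseteq \lang\S$, fix $t \in \lang\A$ and any branch $d_0d_1\cdots$; letting \Input feed the labels $a_i = t(d_0\cdots d_{i-1})$ and the directions $d_i$ produces a play conform to $\M$, whose induced path $b$ lies in $\blang\A$, so $\Win\A$ forces the priority sequence $c_0c_1\cdots$ — which equals the priorities of $\S$'s unique run along this branch — to be accepting; as the branch was arbitrary, $t \in \lang\S$. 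For $\lang\S \disjoint \lang\B$ I argue by contradiction from $t \in \lang\S \cap \lang\B$: taking an accepting run $\rho^\B$ of $\B$ on $t$, I let \Input play the labels of $t$ and then, crucially, react to \Separator's selector by following $d_i := f_i(\delta^\B_i)$, where $\delta^\B_i \in \Delta^\B(a_i)$ is the transition of $\rho^\B$ at the current node. The resulting $\vec\delta^\B$ is accepting and satisfies $f_i(\delta^\B_i)=d_i$, so $\Win\B$ forces $\S$'s run along this branch to be rejecting, contradicting $t \in \lang\S$.

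For completeness, given a deterministic separator $\S$ I would have \Separator simulate the unique run of $\S$, using memory $Q^\S$, playing $c_i := \Omega(q_i)$ and updating $q_{i+1}$ to the $d_i$\=/successor of $q_i$ over $a_i$. The selector is where the disjointness $\lang\S \disjoint \lang\B$ enters: by \cref{cor:disjoint-pathfinder} fix a pathfinder $\PP$ winning for \Pathfinder in $\DisjointnessGame\S\B$, and since $\S$ is deterministic its transition $\delta^\S_i$ from $q_i$ over $a_i$ is unique, so I set $f_i(\delta^\B) := \PP(\delta^\S_i, \delta^\B)$. Then $\Win\A$ follows exactly as in soundness from $\lang\A \subseteq \lang\S$ and determinism of $\S$: if $b$ admits an accepting $\vec\delta^\A$, then $b \in \blang\A$ is a path of some $t\in\lang\A \subseteq \lang\S$, so $\S$ accepts along it and $c_0c_1\cdots$ is accepting. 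For $\Win\B$, if an accepting $\vec\delta^\B$ with $f_i(\delta^\B_i)=d_i$ exists, then $\PP(\delta^\S_i, \delta^\B_i) = d_i$ for all $i$, and \cref{rem:pathfinder-property} applied to the conform sequences $\vec\delta^\S$ and $\vec\delta^\B$ yields that one of them is rejecting; since $\vec\delta^\B$ is accepting, $\vec\delta^\S$ — whose priorities are exactly $c_0c_1\cdots$ — is rejecting, as required.

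The step I expect to be the main obstacle is getting the selector mechanism right in both directions and pinning down its precise semantic role: it is the device that lets \Separator anticipate, via the pathfinder, which direction makes the $\B$\=/run accepting, thereby forcing \Input onto a branch where \cref{rem:pathfinder-property} guarantees $\S$ rejects. I would want to double\=/check that, because $\S$ is deterministic, the single $\S$\=/transition $\delta^\S_i$ suffices to index the pathfinder, and that the branch reasoning for $\lang\A\subseteq\lang\S$ correctly uses productivity of states to pass between path languages and full accepting runs.
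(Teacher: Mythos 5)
Your proposal is correct and follows essentially the same route as the paper's proof: the same construction of $\S$ from the memory structure of a finite\=/memory winning strategy (with the selector used only to steer \Input onto a rejecting branch in the disjointness argument), and the same completeness construction simulating the unique run of $\S$ while defining selectors via the pathfinder from \cref{cor:disjoint-pathfinder} and concluding with \cref{rem:pathfinder-property}. The auxiliary observations you flag (productivity to pass between $\blang\A$ and accepting transition sequences along a path, and the fact that the unique $\S$\=/transition suffices to index the pathfinder) are exactly the ones the paper relies on.
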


\noindent
We present a full proof in order to show the r\^ole of \Separator's selectors.

\begin{proof}[Soundness]
Assume that \Separator wins the separability game $G := \CDeterministicSeparabilityGame \A \B C$ by a~finite-memory winning strategy $\M = \tuple{ M, \ell_0, (\overline{c}, \overline{f}), \tau}$.
Strategy $\M$ has two decision functions:
$\overline{c}$ assigns to each $\ell\in M$ a~priority $\overline{c}(\ell)\in C$,
and $\overline{f}$ assigns to each $\ell\in M$ and $a\in\Sigma$ a~selector $\overline{f}(\ell,a)\in\set{\L, \R}^{\Delta^\B(a)}$. Moreover, the type of the memory update function is $\tau\colon M\times\Sigma\times \set{\L, \R}\to M$.
Consider a~deterministic parity tree automaton
$\S := \tuple{ \Sigma, M, \ell_0, \Omega^\S, \Delta^\S}$
which has the same set of states $M$ and initial state $\ell_0$ as $\M$,
priorities are induced by the decision function $\overline{c}$ of $\M$ as
$\Omega^\S(\ell) := \overline{c}(\ell)$,
and transitions are of the form
    $\Delta^\S = \setof{\tuple{\ell, a, \tau(\ell, a, \L), \tau(\ell, a, \R)}}{\ell \in M, a \in \Sigma}$.

We show that $\S$ separates $\lang \A$, $\lang \B$.
We first show $\lang \A \subseteq \lang \S$.
Let $t\in\lang\A$ be a~tree that is accepted by the automaton $\A$, as witnessed by an~accepting run $\rho^\A$. Let $\rho^\S$ be the unique run of $\S$ over $t$. Consider any branch $d_0d_1\cdots\in\set{\L,\R}^\omega$.
We need to show that the sequence of priorities $\big(\Omega^\S(\rho^\S(d_0\cdots d_{i-1}))\big)_{i\in\omega}$ is accepting.
Consider a~play $\pi$ of $\Game{}$ where at the $i$-th round
\Separator plays according to the strategy $\M$ with current memory state $\ell_i\in M$
and \Input plays according to the letters from $t$ and directions $d_0d_1\cdots$ fixed above:
\begin{simuize}
\item[\Choice{S}{c}] \Separator plays the priority $c_i:=\overline{c}(\ell_i)\in C$.
\item[\Choice{I}{a}] \Input plays the letter $a_i:=t(u_i)\in\Sigma$, where $u_i:=d_0\cdots d_{i-1}$.
\item[\Choice{S}{f}] \Separator plays the selector $f_i:=\overline{f}(\ell_i,a_i) \in \set{\L, \R}^{\Delta^\B(a_i)}$
(the selector is irrelevant in this part of the proof).
\item[\Choice{I}{d}] \Input plays the direction $d_i \in \set{\L, \R}$ as fixed above.
\end{simuize}
The next memory state is $\ell_{i+1}:= \tau(\ell_i, a_i, d_i)$. 
Let the resulting infinite play be $\pi = \tuple{c_0, a_0, f_0, d_0} \tuple{c_1, a_1, f_1, d_1} \cdots$.
By the construction of $\S$ we know that $\ell_i= \rho^\S(u_i)$ and therefore $c_i=\Omega^\S(\rho^\S(u_i))$.
Since $t\in\lang\A$, there exists an accepting sequence of transitions $\vec\delta^\A = \delta_0^\A \delta_1^\A \cdots \in \Delta^\A(b)$
along the path $b = \tuple{a_0, d_0} \tuple{a_1, d_1} \cdots$.
Since \Separator is winning,
$\pi \in \Win\A$ and thus the sequence $c_0c_1\cdots$ is accepting, as required.

We now argue that $\lang \S$ and $\lang \B$ are disjoint.
Towards reaching a~contradiction, assume that $t\in\lang\S\cap\lang\B$ belongs to their intersection.
Let $\rho^\S$ be the unique run of $\S$ over $t$,
and let $\rho^\B$ be an~accepting run of $\B$ over $t$.
Consider a~play $\pi = \tuple{c_0, a_0, f_0, d_0} \tuple{c_1, a_1, f_1, d_1} \cdots$ of $\Game{}$
where the $i$-th round is played as above
except that \Input plays the direction $d_i := f_i(\delta^\B_i)$,
obtained by applying the selector $f_i$
to the transition $\delta^\B_i := \big(\rho^\B(u_i), t(u_i), \rho^\B(u_i\L),\allowbreak \rho^\B(u_i\R)\big)$
determined according to the run $\rho^\B$.
By the choice of directions $d_i$'s,
the sequence of transitions $\vec\delta^\B = \delta_0^\B\delta_1^\B\cdots \in (\Delta^\B)^\omega$ satisfies $f_i(\delta^\B_i)=d_i$ for every $i\in\omega$.
Since the run $\rho^\B$ is accepting,
$\vec\delta^\B$ is accepting.
Since \Separator is winning, $\pi \in \Win\B$ and thus the sequence of priorities $c_0c_1 \cdots$ is rejecting.
However, this is a~contradiction, because for each $i\in\omega$ we have $\ell_i=\rho^\S(u_i)$ and $c_i=\Omega^\S(\ell_i)$ and we assumed that the run $\rho^\S$ is accepting.
\end{proof}

\begin{proof}[Completeness]
Assume that $\S=\tuple{ \Sigma, Q^\S, q_0^\S, \Delta^\S, \Omega^\S}$ is a~deterministic automaton with priorities in $C$ separating $\lang \A$, $\lang \B$,
and we show that \Separator wins the separability game $G$.
Since $\S$ is a separator, we have that $\lang \S \bot \lang \B$, 
and by \cref{cor:disjoint-pathfinder} there exists a~pathfinder $\PP$.
Consider the following strategy of \Separator, with memory structure $Q^\S$ and initial memory state $q_0^\S$.
At the $i$-th round of $\Game{}$, starting with a~memory state $q^\S_i$,
\begin{simuize}
\item[\Choice{S}{c}] \Separator plays the priority $c_i:=\Omega^\S(q^\S_i)\in C$.
\item[\Choice{I}{a}] \Input plays an arbitrary letter $a_i\in\Sigma$.
\item[\Choice{S}{f}] \Separator plays the selector $f_i:=\PP(\delta^\S_i, \_)\in \set{\L, \R}^{\Delta^\B(a_i)}$, where $\Delta^\S(q^\S_i,a_i)=\set{\delta^\S_i}$.
\item[\Choice{I}{d}] \Input plays an arbitrary direction $d_i\in \set{\L, \R}$.
\end{simuize}
The next memory state is $q^\S_{i+1}:= q^\S_{d_i,i}$, where $\delta^\S_i=\tuple{q^\S_i,a_i,q^\S_{\L,i},q^\S_{\R,i}}$.
This concludes the description of the $i$-th round of $G$.
Let the resulting infinite play be $\pi = \tuple{c_0, a_0, f_0, d_0}\allowbreak \tuple{c_1, a_1, f_1, d_1} \cdots$,
with induced infinite path $b := \tuple{a_0, d_0} \tuple{a_1, d_1} \cdots$. Let $\vec\delta^\S:=\delta^\S_0\delta^\S_1\cdots$ be the sequence of transitions used to define the selectors $f_i$. Clearly $\vec\delta^\S\in\Delta^\S(b)$.

First, we argue that $\pi \in \Win\A$ holds.
Let $\vec\delta^\A = \delta_0^\A \delta_1^\A \cdots \in \Delta^\A(b)$ be an~accepting sequence of transitions of the automaton $\A$.
Since each state of $\A$ is productive, one can construct a~tree $t\in\lang\A$ s.t.~$b\in\paths{t}$.
Since $\lang \A \subseteq \lang \S$ by the assumption, $t \in \lang \S$ as well,
and since $\S$ is deterministic, the unique run of $\S$ over $t$ is accepting.
By the definition of \Separator's strategy, the sequence of priorities along the branch $d_0d_1\cdots$ of this accepting run is precisely $c_0c_1 \cdots$, which thus must be accepting, as required.

Regarding $\Win\B$, let $\vec\delta^\B := \delta^\B_0 \delta^\B_1 \cdots \in \Delta^\B(b)$ be an~accepting sequence of transitions over the path $b$
conform to the selectors $f_i$, i.e., for every $i\in\omega$ we have $f_i(\delta_i^\B)=d_i$.
By the definition of $f_i$,
for every $i\in\omega$ we have $d_i=\PP(\delta^\S_i,\delta^\B_i)$.
Thus, the assumptions of \cref{rem:pathfinder-property} are satisfied and at least one of the sequences $\vec\delta^\S$, $\vec\delta^\B$ must be rejecting. Since we assumed that $\vec\delta^\B$ is accepting, it means that $\vec\delta^\S$ is rejecting, and so is $c_0c_1 \cdots$ since $c_i=\Omega^\S(\delta^S_i)$.
\end{proof}

\subsection{A variant of the separability game for trees}
\label{app:naive-det-gen}

A first attempt at generalising the case of $\omega$-words to infinite trees
is to let \Input play a \emph{direction} $d_i \in \set{\L, \R}$ after she plays a letter $a_i$,
and not considering selectors for \Separator.
This yields the following simpler variant of the game considered at the beginning of this section. 
At round $i$ of the separability game,

\begin{gameize}
    \item[\Choice{S}{c}] \Separator plays a priority $c_i \in C$.
    \item[\Choice{I}{a}] \Input plays a letter $a_i \in \Sigma$.
	\item[\Choice{I}{d}] \Input plays a direction $d_i \in \set{\L, \R}$.
\end{gameize}
%
\Separator wins the corresponding infinite play
$\pi = \tuple{c_0, a_0, d_0} \tuple{c_1, a_1, d_1} \cdots \in (C \times \Sigma \times \set{\L, \R})^\omega$
if the induced infinite path $b = \tuple{a_0, d_0} \tuple{a_1, d_1} \cdots \in (\Sigma \times \set{\L, \R})^\omega$
satisfies the following two conditions:
\begin{itemize}
    \item $\pi\in\Win\A$: If there exists an~accepting sequence of transitions
    $\vec\delta^\A = \delta_0^\A \delta_1^\A \cdots \in \Delta^\A(b)$, then $c_0c_1 \cdots$ is accepting.
    \item $\pi\in\Win\B$: If there exists an~accepting sequence of transitions
    $\vec\delta^\B = \delta_0^\B \delta_1^\B \cdots \in \Delta^\B(b)$, then $c_0c_1 \cdots$ is rejecting.
\end{itemize}
It turns out that the winning condition above is not strong enough in order to characterise deterministic separability over languages of infinite trees.
A~deterministic automaton $\S$ is \emph{universally rejecting} on a set of trees $L$
if, for every $t \in L$, \emph{all} branches in the corresponding run in $\S$ are rejecting.
The following lemma states that the game in this section characterises separability by deterministic automata $\S$ which are universally rejecting on $\lang \B$.

\begin{lemma}
    \Separator wins the game above if, and only if,
    $\lang \A$, $\lang \B$ can be separated by a deterministic separator $\S$ with priorities from $C$
    which is universally rejecting on $\lang \B$.
\end{lemma}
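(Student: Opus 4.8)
The plan is to mirror the proof of \cref{lem:CDeterministicSeparability:correctness}, dropping the selector machinery and showing that its absence is precisely what upgrades ordinary disjointness $\lang\S\disjoint\lang\B$ to the universally rejecting property.

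For soundness I would assume \Separator wins by a finite-memory strategy $\M = \tuple{M, \ell_0, \overline{c}, \tau}$, where now $\tau\colon M\times\Sigma\times\set{\L,\R}\to M$ reads both of \Input's choices and there is no selector decision function. I would build the same deterministic automaton $\S := \tuple{\Sigma, M, \ell_0, \Omega^\S, \Delta^\S}$ as in the main lemma, with $\Omega^\S(\ell):=\overline{c}(\ell)$ and $\Delta^\S := \setof{\tuple{\ell, a, \tau(\ell, a, \L), \tau(\ell, a, \R)}}{\ell\in M, a\in\Sigma}$. The inclusion $\lang\A\subseteq\lang\S$ follows verbatim from the $\Win\A$ condition, exactly as before. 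The genuinely new point is that $\S$ is universally rejecting on $\lang\B$: given $t\in\lang\B$ with an accepting run $\rho^\B$ and an \emph{arbitrary} branch $d_0d_1\cdots$, I replay it as an \Input play feeding the labels of $t$ and the fixed directions $d_0d_1\cdots$. Since $\rho^\B$ is accepting, every one of its branches is accepting, so the induced transition sequence $\vec\delta^\B\in\Delta^\B(b)$ is accepting, and $\Win\B$ forces the priority sequence $c_0c_1\cdots = \big(\Omega^\S(\rho^\S(d_0\cdots d_{i-1}))\big)_i$ to be rejecting. As the branch was arbitrary, all branches of $\S$'s run over $t$ reject.

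For completeness I would take a deterministic separator $\S$ with priorities in $C$ that is universally rejecting on $\lang\B$, and let \Separator use memory $Q^\S$ tracking the unique run of $\S$ along the path played by \Input: in memory state $q^\S_i$, \Separator plays $c_i := \Omega^\S(q^\S_i)$, and after \Input plays $a_i$ and $d_i$ the memory updates to $q^\S_{d_i,i}$, where $\Delta^\S(q^\S_i,a_i) = \set{\tuple{q^\S_i,a_i,q^\S_{\L,i},q^\S_{\R,i}}}$. For $\Win\A$, given an accepting $\vec\delta^\A\in\Delta^\A(b)$, productivity lets me build $t\in\lang\A$ with $b\in\paths t$; then $t\in\lang\S$, its unique run accepts, and the branch $d_0d_1\cdots$ carries exactly the priorities $c_0c_1\cdots$, which are therefore accepting. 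For $\Win\B$, given an accepting $\vec\delta^\B\in\Delta^\B(b)$, productivity of $\B$'s states lets me build $t\in\lang\B$ with $b\in\paths t$ realising $\vec\delta^\B$ along $b$; since $\S$ is universally rejecting on $\lang\B$, the branch of $\S$'s run along $b$ rejects, and this branch carries the priorities $c_0c_1\cdots$, which are thus rejecting.

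The only conceptually delicate point — and the reason the statement weakens to universally rejecting separators — is the asymmetry in how \Input's free direction is exploited in the two winning conditions. Without selectors, \Separator cannot constrain which accepting branch of a $\B$-run \Input selects, so the $\Win\B$ obligation must hold for \emph{every} branch simultaneously whenever $t\in\lang\B$; this is exactly the universally rejecting requirement, which is strictly stronger than disjointness $\lang\S\disjoint\lang\B$ (the latter only forbids a globally accepting run of $\S$, i.e.\ one good branch, not all of them). The main thing to verify carefully is that productivity suffices to realise a prescribed on-path transition sequence while filling off-path subtrees with accepting runs, precisely as in \cref{lem:CDeterministicSeparability:correctness}.
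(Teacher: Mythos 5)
Your proposal is correct and follows exactly the route the paper intends: the paper's own proof is just the one-line remark that the argument is analogous to that of \cref{lem:CDeterministicSeparability:correctness}, and your write-up is precisely that adaptation, correctly locating where the absence of selectors turns disjointness into the universally-rejecting requirement (in soundness, \Input's free choice of branch forces \emph{every} branch of $\S$'s run over $t\in\lang\B$ to reject; in completeness, universal rejection replaces the pathfinder argument). No gaps.
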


\begin{proof}[Proof sketch]
    The proof is analogous to that of \cref{lem:CDeterministicSeparability:correctness}.
\end{proof}

\section{Separability by deterministic automata}
\label{sec:det_gen}


%

In this section we present a game-theoretic characterisation of the deterministic separability problem.
Notice that here we do not fix in advance a finite set of priorities $C$.
The deterministic-separability game $\DeterministicSeparabilityGame \A \B$ below
is a variant of the game with fixed priorities $C$ from \cref{sec:det_ind}.

\begin{gamebox}{Deterministic-separability game $\DeterministicSeparabilityGame \A \B$}
    At the $i$-th round:
\begin{gameize}
\item[\Choice{I}{a}] \Input plays a~letter $a_i \in \Sigma$.
\item[\Choice{S}{f}] \Separator plays a~selector $f_i \in \set{\L, \R}^{\Delta^\B(a_i)}$.
\item[\Choice{I}{d}] \Input plays a~direction $d_i \in \set{\L, \R}$.
\end{gameize}
\end{gamebox}

\noindent
\Separator wins the resulting infinite play $\pi = \tuple{a_0, f_0, d_0} \tuple{a_1, f_1, d_1} \cdots$,
with induced infinite path $b := \tuple{a_0, d_0} \tuple{a_1, d_1} \cdots$,
if at least one of the two conditions below fails:
\begin{winnize}
\item $\pi\in\Win\A$: There exists an~accepting sequence of transitions
    $\vec\delta^\A = \delta_0^\A \delta_1^\A \cdots \in \Delta^\A(b)$.
\item $\pi\in\Win\B$: There exists an~accepting sequence of transitions
    $\vec\delta^\B = \delta_0^\B \delta_1^\B \cdots \in \Delta^\B(b)$
	s.t.~for every $i \in \omega$ we have $f_i(\delta_i^\B) = d_i$.
\end{winnize}

Before we prove the equivalence between the game and the existence of a separator,
we define a separator candidate, namely the path-closure of $\lang\A$.
This is important since it will turn out that if a separator exists,
then the path-closure is itself a separator.
Given a~language of trees $L$,
its \emph{path-closure}, denoted $\pathclosure L$,
is the set of all trees $t$ s.t.~for every path $b \in \paths t$
there exists some tree $t' \in L$ s.t.~$b \in \paths {t'}$ as well.

The following lemma states formally some basic facts
justifying that $\pathclosure \_$ is indeed a~closure operator.

\begin{lemma}
    \label{lem:path-closure}\mbox{}
    \begin{enumerate}
        \item The path-closure operator is monotonic (w.r.t.~set inclusion): \\
        If $L \subseteq M$, then $\pathclosure L \subseteq \pathclosure M$.
        
        \item The path-closure operator is non-decreasing (w.r.t.~set inclusion): \\
        For any language $L$, $L \subseteq \pathclosure L$.

        \item The path-closure $\pathclosure L$ of $L$ is the smallest (w.r.t.~set inclusion) language of infinite trees $M$
        s.t.~a) $M$ contains $L$: $L \subseteq M$,
        and b) $M$ is path-closed: $\pathclosure M \subseteq M$.
    \end{enumerate}
\end{lemma}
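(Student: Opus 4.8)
The plan is to prove the three statements in the order given, since the first two are the routine ingredients and the third is the one that requires actual work. Throughout, recall the definition: $t \in \pathclosure{L}$ iff for every path $b \in \paths{t}$ there is some $t' \in L$ with $b \in \paths{t'}$.

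\textbf{Monotonicity and non-decreasingness.} For part (1), suppose $L \subseteq M$ and take any $t \in \pathclosure{L}$. For each path $b \in \paths{t}$ there is, by definition, a tree $t' \in L$ with $b \in \paths{t'}$; since $L \subseteq M$ this same $t'$ witnesses $t' \in M$, so $t \in \pathclosure{M}$. For part (2), take $t \in L$ and any $b \in \paths{t}$; the tree $t' := t$ itself lies in $L$ and satisfies $b \in \paths{t'}$ trivially, so $t \in \pathclosure{L}$. Both are one-line unfoldings of the definition.

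\textbf{The characterisation as smallest path-closed superset.} For part (3) I would first verify that $\pathclosure{L}$ itself has the two properties (a) and (b), and then show it is contained in every language with those properties. Property (a), namely $L \subseteq \pathclosure{L}$, is exactly part (2). The heart of the matter is property (b): $\pathclosure{\pathclosure{L}} \subseteq \pathclosure{L}$, i.e.\ idempotence of the closure (the reverse inclusion is part (2) applied to $\pathclosure{L}$). To see this, take $t \in \pathclosure{\pathclosure{L}}$ and fix an arbitrary path $b \in \paths{t}$; I must produce a single $t' \in L$ with $b \in \paths{t'}$. By assumption there is some $s \in \pathclosure{L}$ with $b \in \paths{s}$, and since $b$ is also a path of $s$, the definition of $s \in \pathclosure{L}$ hands me a $t' \in L$ with $b \in \paths{t'}$, as needed. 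The key observation making this work is that the witness is required only path-by-path, so a witness for $s$ along the single path $b$ transfers directly to a witness for $t$ along $b$.

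\textbf{Minimality.} Finally, let $M$ be any language satisfying (a) $L \subseteq M$ and (b) $\pathclosure{M} \subseteq M$; I show $\pathclosure{L} \subseteq M$. Applying monotonicity (part (1)) to the inclusion $L \subseteq M$ gives $\pathclosure{L} \subseteq \pathclosure{M}$, and combining with (b) yields $\pathclosure{L} \subseteq \pathclosure{M} \subseteq M$. This closes the argument, so $\pathclosure{L}$ is indeed the smallest language satisfying (a) and (b). I do not anticipate a genuine obstacle here; the only step requiring care is the idempotence in property (b), where one must resist the temptation to pick a single intermediate tree $s$ for all paths at once — the argument must be carried out separately for each path $b$, which is precisely what the definition permits.
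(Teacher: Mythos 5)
Your proposal is correct and follows essentially the same route as the paper: parts (1) and (2) by unfolding the definition, and part (3) by establishing that $\pathclosure{L}$ satisfies (a) via part (2) and (b) via idempotence, then deriving minimality from the chain $\pathclosure{L} \subseteq \pathclosure{M} \subseteq M$ using monotonicity and path-closedness. The only difference is that you spell out the idempotence argument (correctly, via the path-by-path transfer of witnesses), which the paper simply asserts.
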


\begin{proof}
    The first two properties are clear.
    For the third property, $M := \pathclosure L$ itself satisfies
    a) since the path-closure operator is non-increasing,
    and b) since the path-closure operator is idempotent $\pathclosure {\pathclosure L} = \pathclosure L$.
    Now let $M$ be an arbitrary language s.t.~a) $L \subseteq M$,
    and b) $\pathclosure M \subseteq M$.
    We immediately have
    \begin{align*}
        L \stackrel {\text{\tiny (1)}} \subseteq \pathclosure L
          \stackrel {\text{\tiny (2)}} \subseteq \pathclosure M
          \stackrel {\text{\tiny (3)}} \subseteq M,
    \end{align*}
    where (1) follows from the fact that the path-closure operator is non-decreasing,
    (2) from the fact that it is monotone,
    and (3) from the fact that $M$ is path-closed.

Now consider any deterministic tree automaton $\B$. Observe that since $\B$ is deterministic, we clearly get $\pathclosure{\lang\B}=\lang\B$.
Now assume that $\lang\A\subseteq\lang\B$. By monotonicity of $\pathclosure\_$ we obtain that $\pathclosure{\lang\A}\subseteq \pathclosure{\lang\B}=\lang\B$.
\end{proof}

%
%
The path-closure operator is directly connected with deterministic automata.

\begin{restatable}[\protect{c.f.~\cite[Proposition~1]{NiwinskiWalukiewicz:TCS:2003}}]{lemma}{lemPathClosureAuto}
\label{lem:path-closure-auto}
Given a~nondeterministic automaton $\A$ one can construct a~deterministic automaton
$\pathautomaton \A$ recognising the path closure of $\lang\A$, i.e.,~$\lang {\pathautomaton A} = \pathclosure{\lang \A}$.
Moreover,
$\lang {\pathautomaton A}$ is the smallest deterministic language containing $\lang\A$.
\end{restatable}

\begin{proof}
Fix a~nondeterministic automaton $\A = \tuple{\Sigma, Q, q_0, \Omega, \Delta}$. Our aim is to construct a~deterministic automaton
$\pathautomaton \A$ recognising the path closure if $\lang\A$, i.e.,~$\lang {\pathautomaton A} = \pathclosure{\lang \A}$.
Let $\D=\tuple{ \Sigma\times \set{\L,\R}, Q^\D, q_0^\D, \Delta^\D, \Omega^\D} $ be a~deterministic parity automaton over $\omega$-words over the alphabet $\Sigma\times \set{\L,\R}$ that recognises the set of paths $b$ s.t.~there exists an~accepting sequence of transitions in $\Delta^\A(b)$.
It is easy to see how a nondeterministic such automaton can be obtained,
and it can be determinised thanks to \cref{lem:NPA2DPA}.
Consider a~deterministic parity tree automaton
$\pathautomaton \A := \tuple{ \Sigma, Q^\D, q_0^\D, \Omega^\D, \Delta'}$
which has the same set of states, initial state, and priority mapping as $\D$,
and transitions are of the form
\begin{align*}
    \Delta' = \setof{\tuple{q, a, \Delta^\D(q, (a, \L)), \Delta^\D(q, (a, \R))}}{q \in Q^\D, a \in \Sigma}.
\end{align*}

Now, we claim that the following conditions are equivalent, for a~tree $t\in\trees_\Sigma$:
\begin{enumerate}
\item $t\in \pathclosure{\lang \A}$,
\item for every path $b\in\paths{t}$ there exists a~tree $t'\in\lang\A$ s.t.~$b\in\paths{t'}$,
\item for every path $b\in\paths{t}$ there exists an~accepting sequence of transitions in $\Delta^\A(b)$,
\item for every path $b\in\paths{t}$ the automaton $\D$ accepts $b$,
\item $t\in \lang{\pathautomaton \A}$.
\end{enumerate}
Indeed, the only nontrivial implication is ``$3 \Rightarrow 2$'', however, since every state of $\A$ is productive, one can easily construct the tree $t'$ by extending the considered sequence of transitions of $\A$ to the subtrees outside the path $b$.
We can thus conclude $\pathclosure{\lang \A} = \lang {\pathautomaton \A}$, as required.
\end{proof}

The following lemma binds together the game $\DeterministicSeparabilityGame \A \B$, separability, and path-closures.

\begin{restatable}{lemma}{lemDetSep}
    \label{lem:det:sep}
    The following three conditions are equivalent:
    \begin{enumerate}
        \item \Separator wins the deterministic-separability game $\DeterministicSeparabilityGame \A \B$.
        \item The automaton $\pathautomaton \A$ is a deterministic separator for $\lang \A$, $\lang \B$.
        \item There exists a~deterministic separator for $\lang \A$, $\lang \B$.
    \end{enumerate}
\end{restatable}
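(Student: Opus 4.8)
The plan is to establish the three equivalences by separating out an \emph{automaton} equivalence $(2)\Leftrightarrow(3)$ and a \emph{game} equivalence $(1)\Leftrightarrow(2)$, in both cases using the deterministic path-closure automaton $\pathautomaton\A$ of \cref{lem:path-closure-auto} as the canonical candidate separator. For $(2)\Leftrightarrow(3)$, the implication $(2)\Rightarrow(3)$ is immediate. For $(3)\Rightarrow(2)$, I would take an arbitrary deterministic separator $\S$, so that $\lang\A\subseteq\lang\S$ and $\lang\S\disjoint\lang\B$; since $\lang\S$ is a deterministic language containing $\lang\A$ and $\lang{\pathautomaton\A}=\pathclosure{\lang\A}$ is by \cref{lem:path-closure-auto} the \emph{smallest} such language, we get $\lang{\pathautomaton\A}\subseteq\lang\S$. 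Combined with $\lang\A\subseteq\lang{\pathautomaton\A}$ (non-decreasingness, \cref{lem:path-closure}) and $\lang{\pathautomaton\A}\cap\lang\B\subseteq\lang\S\cap\lang\B=\emptyset$, this shows $\pathautomaton\A$ is itself a separator. Throughout I use that \Separator wins the game exactly when no play realises both $\Win\A$ and $\Win\B$ simultaneously, equivalently \Input wins a play iff both hold.

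For $(2)\Rightarrow(1)$, assuming $\pathautomaton\A$ separates we have $\lang{\pathautomaton\A}\disjoint\lang\B$, so \cref{cor:disjoint-pathfinder} provides a pathfinder $\PP$ winning for \Pathfinder in $\DisjointnessGame{\pathautomaton\A}\B$. I would let \Separator keep as memory the current state of the \emph{deterministic} automaton $\pathautomaton\A$: after \Input plays $a_i$, the unique $\pathautomaton\A$-transition $\delta^{\pathautomaton\A}_i$ is determined, \Separator answers with the selector $f_i:=\PP(\delta^{\pathautomaton\A}_i,\cdot)\in\set{\L,\R}^{\Delta^\B(a_i)}$, and the state is updated along the direction $d_i$ chosen by \Input. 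In a conform play producing path $b$, suppose for contradiction that both $\Win\A$ and $\Win\B$ hold; then there is an accepting $\vec\delta^\B\in\Delta^\B(b)$ with $f_i(\delta^\B_i)=d_i$, i.e.\ $\PP(\delta^{\pathautomaton\A}_i,\delta^\B_i)=d_i$ for all $i$. By \cref{rem:pathfinder-property} one of $\vec\delta^{\pathautomaton\A}$, $\vec\delta^\B$ is rejecting, hence $\vec\delta^{\pathautomaton\A}$ is, while $\Win\A$ gives an accepting $\A$-run along $b$, which forces the $\pathautomaton\A$-run along the selected branch to be accepting (see the obstacle below) — a contradiction. So this strategy is winning.

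For $(1)\Rightarrow(2)$ I would argue the contrapositive, exhibiting a winning strategy for \Input whenever $\pathautomaton\A$ is not a separator (which suffices, since at most one player can have a winning strategy). If $\pathautomaton\A$ fails to separate, then because $\lang\A\subseteq\lang{\pathautomaton\A}$ always holds, the failure must be disjointness, so there is some $t\in\lang\B\cap\pathclosure{\lang\A}$ witnessed by an accepting $\B$-run $\rho^\B$. Against any \Separator strategy, \Input plays the labels of $t$ along the branch she builds and, at each round, chooses $d_i:=f_i(\delta^\B_i)$ by applying \Separator's selector to the $\B$-transition of $\rho^\B$ at the current node. Then $\vec\delta^\B$ is accepting and conform to the selectors, giving $\Win\B$, while the induced path $b\in\paths t$ carries an accepting $\A$-run because $t\in\pathclosure{\lang\A}$, giving $\Win\A$; hence \Input wins and \Separator does not.

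The step I expect to be the main obstacle is the bookkeeping in $(2)\Rightarrow(1)$: verifying that the run of the tree automaton $\pathautomaton\A$ restricted to the branch selected by \Input coincides, in states and priorities, with the run of the deterministic $\omega$-word automaton $\D$ on the path $b$. This is exactly what lets ``$\D$ accepts $b$'' (which holds by definition of $\D$ as soon as $b$ admits an accepting sequence of $\A$-transitions) translate into ``$\vec\delta^{\pathautomaton\A}$ is accepting'', closing the contradiction. The verification rests on the definition of the transitions of $\pathautomaton\A$ via $\Delta^\D(q,(a,\L))$ and $\Delta^\D(q,(a,\R))$, and it is the hinge that allows the single-automaton pathfinder argument to replace the explicit priority play of the fixed-index game of \cref{sec:det_ind}.
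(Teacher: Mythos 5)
Your proof is correct, and the game-theoretic cores of your two implications coincide with the paper's: your $(2)\Rightarrow(1)$ is the paper's ``$3 \Rightarrow 1$'' instantiated at $\S=\pathautomaton\A$ (same pathfinder-based \Separator strategy with the automaton state as memory, same appeal to \cref{rem:pathfinder-property}; the ``obstacle'' you flag resolves exactly as you suggest via the construction of $\pathautomaton\A$ from $\D$, or alternatively as the paper does it, by using productivity to embed the path $b$ into a tree of $\lang\A\subseteq\lang{\pathautomaton\A}$ and restricting the accepting run of $\pathautomaton\A$ to the chosen branch), and your contrapositive of $(1)\Rightarrow(2)$ is the paper's ``$1 \Rightarrow 2$'' read backwards: a tree $t\in\lang\B\cap\pathclosure{\lang\A}$ lets \Input play the labels of $t$ and answer each selector with $d_i:=f_i(\delta^\B_i)$, producing a play satisfying both $\Win\A$ and $\Win\B$ against any \Separator strategy. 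Where you genuinely depart is the decomposition: the paper proves the cycle $1\Rightarrow 2\Rightarrow 3\Rightarrow 1$, so the nontrivial implication $3\Rightarrow 2$ is obtained only by passing through the game, whereas you prove $(3)\Rightarrow(2)$ directly and purely automata-theoretically, from the minimality clause of \cref{lem:path-closure-auto} (any deterministic language containing $\lang\A$ contains $\pathclosure{\lang\A}$, so a deterministic separator $\S$ forces $\lang{\pathautomaton\A}\subseteq\lang\S$ and hence $\lang{\pathautomaton\A}\disjoint\lang\B$). This buys a cleaner separation of concerns --- the canonicity of the path closure among deterministic separators is isolated from the game characterisation --- at the cost of having to argue both directions of $(1)\Leftrightarrow(2)$ explicitly rather than getting one of them for free from the cycle.
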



\begin{proof}

    We begin by proving ``$1 \Rightarrow 2$''.
    Assume that \Separator wins the separability game $\DeterministicSeparabilityGame \A \B$ by a~finite-memory winning strategy $\M = \tuple{ M, \ell_0, \overline{f}, \tau}$ which has one decision function $\overline{f}$ assigning to each $\ell\in M$ and $a\in\Sigma$ a~selector $\overline{f}(\ell,a)\in\set{\L, \R}^{\Delta^\B(a)}$. Moreover, the type of the memory update function is $\tau\colon M\times\Sigma\times \set{\L, \R}\to M$.

    Let $\S:=\pathautomaton \A$ be the deterministic automaton recognising the path closure of $\lang\A$. We show that $\S$ separates $\lang \A$, $\lang \B$.

    The condition $\lang \A \subseteq \lang \S$ follows immediately from the fact that $\lang{\S}=\pathclosure{\lang\A}$ and by Item~1 of \cref{lem:path-closure} we know that $\pathclosure{\lang\A}\supseteq\lang\A$.

    We now argue that $\lang \S$ and $\lang \B$ are disjoint.
    Towards reaching a~contradiction, assume that $t\in\lang\S\cap\lang\B$ belongs to their intersection. Let $\rho^\S$ be the unique run of $\S$ over $t$, and let $\rho^\B$ be an~accepting run of $\B$ over $t$. Consider a~play $\pi$ of $\Game{}:=\DeterministicSeparabilityGame \A \B$ where at the $i$-th round, \Separator plays according to the strategy $\M$ with current memory state $\ell_i\in M$ and \Input plays as follows:
    \begin{simuize}
    \item[\Choice{I}{a}] \Input plays the letter $a_i:=t(u_i)\in\Sigma$, where $u_i:=d_0\cdots d_{i-1}$.
    \item[\Choice{S}{f}] \Separator plays the selector $f_i:=\overline{f}(\ell_i,a_i) \in \set{\L, \R}^{\Delta^\B(a_i)}$.
    \item[\Choice{I}{d}] \Input plays the direction $d_i:=f_i(\delta^\B_i) \in \set{\L, \R}$, where $\delta^\B_i$ is the respective transition of $\rho^\B$, i.e.,~$\delta^\B_i:=\big(\rho^\B(u_i), t(u_i), \rho^\B(u_i\L), \rho^\B(u_i\R)\big)$.
    \end{simuize}
    The next memory state is $\ell_{i+1}:= \tau(\ell_i, a_i, d_i)$. Let $\pi$ be the obtained play. By the choice of directions $d_i$ we know that the sequence of transitions $\vec\delta^\B = \delta_0^\B\delta_1^\B\cdots \in (\Delta^\B)^\omega$ satisfies $f_i(\delta^\B_i)=d_i$ for every $i\in\omega$. Moreover, as the run $\rho^\B$ is accepting, we know that $\vec\delta^\B$ is accepting. Therefore, $\Win\B$ holds for this play. It means that $\Win\A$ must fail, meaning that the infinite path $b:=\tuple{a_0,d_0}\tuple{a_1,d_1}\cdots$ does not belong to $\paths{\lang\A}$. However, this is a contradiction with the assumption that the run $\rho^\S$ of $\S$ over $t$ is accepting.

    The implication ``$2 \Rightarrow 3$'' is trivial.

    Finally, we prove ``$3 \Rightarrow 1$''.
    Assume that $\S=\tuple{ \Sigma, Q, q_0, \Delta, \Omega}$ is a~deterministic automaton separating $\lang \A$, $\lang \B$,
    and we show that \Separator wins the separability game $\DeterministicSeparabilityGame \A \B$.
    Since $\S$ is a~separator, we have that $\lang \S \bot \lang \B$, 
    and thus \Pathfinder wins the disjointness game $\DisjointnessGame \S \B$, see \cref{cor:disjoint-pathfinder}. Let $\PP$ be a~pathfinder as in \cref{ssec:disjointness-game}.

    Consider the following strategy of \Separator, with the memory structure $Q^\S$ and the initial memory state $q_0^\S$. At the $i$-th round of $\Game{}:=\DeterministicSeparabilityGame \A \B$, starting with a~memory state $q^\S_i$, \Separator plays as follows:
    \begin{gameize}
    \item[\Choice{I}{a}] \Input plays an arbitrary letter $a_i\in\Sigma$.
    \item[\Choice{S}{f}] \Separator plays the selector $f_i:=\PP(\delta^\S_i, \_)\in \set{\L, \R}^{\Delta^\B(a_i)}$, where $\Delta^\S(q^\S_i,a_i)=\set{\delta^\S_i}$.
    \item[\Choice{I}{d}] \Input plays an arbitrary direction $d_i\in \set{\L, \R}$.
    \end{gameize}
    The next memory state is $q^\S_{i+1}:= q^\S_{d,i}$, where $\delta^\S_i=\tuple{q^\S_i,a_i,q^\S_{\L,i},q^\S_{\R,i}}$.

    Let the resulting infinite play be $\pi$, with the induced infinite path $b := \tuple{a_0, d_0} \tuple{a_1, d_1} \cdots$. Let $\vec\delta^\S:=\delta^\S_0\delta^\S_1\cdots$ be the sequence of transitions used to define the selectors $f_i$.

    Assume for the sake of contradiction that both $\Win\A$ and~$\Win\B$ hold, as witnessed by accepting sequences of transitions $\vec\delta^\A = \delta_0^\A \delta_1^\A \cdots \in \Delta^\A(b)$ and $\vec\delta^\B = \delta_0^\B \delta_1^\B \cdots \in \Delta^\B(b)$. The sequence $\vec\delta^\A$ implies that there exists a~tree $t\in\trees_\Sigma$ s.t.~$b$ is a path of $t$ and $t\in\lang\A$. Since $\lang\A\subseteq\lang\S$, the run of the automaton $\S$ must be accepting on $t$ and therefore the sequence of transitions $\vec\delta^\S$ is accepting. Moreover, the choice of the selectors $f_i$ means that for every $i\in\omega$ we have $d_i=\PP(\delta^\S_i, \delta^\B_i)$. Thus, the assumptions of \cref{rem:pathfinder-property} are satisfied and at least one of the sequences $\vec\delta^\S_i$, $\vec\delta^\B_i$ must be rejecting. A~contradiction, because we assumed that $\vec\delta^\B$ is accepting and we know that $\vec\delta^\S$ is also accepting.
\end{proof}


\section{Separability by game automata}
\label{sec:game_gen}

In this section we provide a game-theoretic characterisation for the game automata separability problem.
Fix two automata $\A$ and $\B$ and consider the following separability game $\GameSeparabilityGame \A \B$.
The new ingredient is that \Separator can choose a \emph{mode}---a symbol from the set $\set{\lor,\land}$.
It has two uses. First, in the construction of the separating game automaton, the mode dictates whether there will be a conjunctive or a disjunctive transition.
Second, depending on the chosen mode, \Separator will have to play a~selector for the automaton $\A$ or $\B$,
which will guarantee that the constructed automaton is a separator.

\begin{gamebox}{Game-separability game $\GameSeparabilityGame \A \B$}
    At the $i$-th round:
    \begin{gameize}
        \item[\Choice{I}{a}] \Input plays a~letter $a_i \in \Sigma$.
        \item[\Choice{S}{m}] \Separator plays a~mode $m_i\in\set{\lor,\land}$.
        \item[\Choice{S}{f}] \Separator plays either
        \begin{enumerate}
            \item a~selector $f_i \in \set{\L, \R}^{\Delta^\A(a_i)}$ for $\A$ if $m_i=\lor$ or
            \item a~selector $f_i \in \set{\L, \R}^{\Delta^\B(a_i)}$ for $\B$ if $m_i=\land$.
        \end{enumerate}
        \item[\Choice{I}{d}] \Input plays a~direction $d_i \in \set{\L, \R}$.
    \end{gameize}
\end{gamebox}

\noindent
\Separator wins an~infinite play $\pi = \tuple{a_0,m_0,f_0,d_0}\tuple{a_1,m_1,f_1,d_1}\cdots$
inducing a~path $b = \tuple{a_0,d_0}\tuple{a_1,d_1} \cdots$
whenever at least one of the two conditions below fail:
\begin{winnize}
    \item $\pi\in \Win\A$:
    There exists an accepting sequence of transitions
    $\vec\delta^\A = \delta_0^\A \delta_1^\A \cdots \in \Delta^\A(b)$
    s.t.~for all $i \in \N$ we have $(m_i = \lor)\Rightarrow f_i(\delta_i^\A) = d_i$.

    \item $\pi\in \Win\B$:
    There exists an accepting sequence of transitions
    $\vec\delta^\B = \delta_0^\B \delta_1^\B \cdots \in \Delta^\B(b)$
    s.t.~for all $i \in \N$ we have $(m_i = \land) \Rightarrow f_i(\delta_i^\B) = d_i$.
\end{winnize}

\begin{lemma}
    \label{lem:game:sep}
    \Separator wins the separability game $\GameSeparabilityGame \A \B$
    if, and only if, there exists a game automaton $\S$ separating $\lang \A$, $\lang \B$.
\end{lemma}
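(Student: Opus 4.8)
The plan is to follow the same soundness/completeness template used for \cref{lem:CDeterministicSeparability:correctness} and \cref{lem:det:sep}, adapting it to the presence of the mode $m_i \in \set{\lor, \land}$.

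\medskip

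\noindent\textbf{Soundness (\Separator wins $\Rightarrow$ separator exists).} First I would assume \Separator has a finite-memory winning strategy $\M = \tuple{M, \ell_0, (\overline m, \overline f), \tau}$, where $\overline m(\ell, a) \in \set{\lor, \land}$ chooses a mode and $\overline f(\ell, a)$ chooses the appropriate selector. From this I build a game automaton $\S$ with state set $M$, initial state $\ell_0$, and for each $\ell \in M$, $a \in \Sigma$, a transition determined by the mode: if $\overline m(\ell,a) = \land$ I install a single conjunctive transition $\tuple{\ell, a, \tau(\ell, a, \L), \tau(\ell, a, \R)}$, and if $\overline m(\ell, a) = \lor$ I install the two disjunctive transitions going to $\top$ on one side and to $\tau(\ell, a, \L)$ resp.~$\tau(\ell, a, \R)$ on the other (with $\top$ a fresh all-accepting state). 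Here the mode literally dictates the transition shape, so $\S$ is by construction a game automaton (priorities can be taken from $\overline c$ if a priority action is present, or fixed otherwise). To show $\lang \A \subseteq \lang \S$ I take $t \in \lang \A$ with accepting run $\rho^\A$ and simulate a play in which \Input feeds the letters of $t$; the branch of $\S$ chosen to refute acceptance corresponds to a path $b \in \paths t$, and along it \Win\A fails to be the losing clause precisely because $\rho^\A$ furnishes an accepting $\vec\delta^\A$ conforming to the $\lor$-selectors---I would need to check that on disjunctive transitions \Automaton's choice in the acceptance game matches $\overline f$, which is exactly what the selector constraint $(m_i = \lor) \Rightarrow f_i(\delta_i^\A) = d_i$ encodes. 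Symmetrically, for $\lang \S \disjoint \lang \B$ I assume $t \in \lang \S \cap \lang \B$, take an accepting run of $\B$ and the (dual, conjunctive-side) run of $\S$, and simulate a play letting \Input follow the $\land$-selectors, deriving that both \Win\A and \Win\B hold, contradicting that \Separator wins.

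\medskip

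\noindent\textbf{Completeness (separator exists $\Rightarrow$ \Separator wins).} Conversely I would take a game automaton $\S$ separating $\lang \A$, $\lang \B$. Since $\lang \A \subseteq \lang \S$ I have $\lang{\S^\mathrm c} \disjoint \lang \A$ (using \cref{rem:game-automata-dual} to complement $\S$ into a game automaton $\S^\mathrm c$), so by \cref{cor:disjoint-pathfinder} there is a pathfinder $\PP_\A$ winning for \Pathfinder in $\DisjointnessGame{\S^\mathrm c}\A$; and since $\lang \S \disjoint \lang \B$ there is a pathfinder $\PP_\B$ for $\DisjointnessGame \S \B$. \Separator's strategy uses $\S$'s state as memory: in each round, knowing the current state $q^\S_i$ and the letter $a_i$, \Separator reads off whether $\S$'s transition at $\tuple{q^\S_i, a_i}$ is conjunctive or disjunctive and plays the matching mode $m_i$, then plays the selector obtained from the appropriate pathfinder applied to $\S$'s transition ($\PP_\B$ when $m_i = \land$, $\PP_\A$ applied in the complemented automaton when $m_i = \lor$), using \cref{rem:pathfinder-for-game} to fix the $\top$-directions. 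The winning argument then runs as in \cref{lem:det:sep}: assuming towards a contradiction that both \Win\A and \Win\B hold along the resulting path $b$, the accepting $\vec\delta^\B$ conforming to the $\land$-selectors together with the run of $\S$ triggers \cref{rem:pathfinder-property} on $\PP_\B$, forcing $\S$'s run to be rejecting on $b$; while the accepting $\vec\delta^\A$ conforming to the $\lor$-selectors together with \cref{rem:pathfinder-property} on $\PP_\A$ forces $\S^\mathrm c$ to reject, i.e.~$\S$ to accept on $b$---a contradiction.

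\medskip

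\noindent\textbf{Main obstacle.} The genuinely new difficulty compared with the purely deterministic case is the interaction of the two modes with the game-automaton semantics: a single play of $\GameSeparabilityGame \A \B$ must simultaneously certify the $\subseteq$-side (via $\lor$-rounds and $\A$) and the $\disjoint$-side (via $\land$-rounds and $\B$), and these correspond to opposite quantifiers in the acceptance game for $\S$ (disjunctive transitions are \Automaton's choice, conjunctive ones constrain all branches). I expect the delicate step to be verifying in completeness that one fixed path $b$ can be used against both $\A$ and $\B$ at once, and that the mode chosen from $\S$'s transition shape makes exactly the right pathfinder applicable in each round; getting the direction $d_i$ to be simultaneously consistent with the branch of $\S$'s run and with the selector of the pathfinder is where \cref{rem:pathfinder-for-game} and the careful bookkeeping of $\top$-transitions will be essential.
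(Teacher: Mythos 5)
Your completeness direction is essentially the paper's argument: dualise $\S$ via \cref{rem:game-automata-dual}, extract the two pathfinders from \cref{cor:disjoint-pathfinder}, drive the mode by the shape of $\S$'s transition, and close the contradiction with \cref{rem:pathfinder-for-game} and \cref{rem:pathfinder-property}. That part is sound.

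The soundness direction has a genuine gap, and it sits exactly at the point you wave away with ``priorities can be taken from $\overline c$ if a priority action is present, or fixed otherwise.'' In $\GameSeparabilityGame \A \B$ there is \emph{no} priority action: \Separator only plays modes and selectors, so the strategy $\M$ determines the transition structure of $\S$ but says nothing about its acceptance condition. You cannot ``fix'' the priorities arbitrarily --- the whole difficulty of the unrestricted-index case is manufacturing a parity condition for $\S$ out of a strategy that never announces one. The paper's solution is to define the two $\omega$-regular path languages $L_\A$ and $L_\B$ (the paths $b$ on which the unique play conform to $\M$ satisfies $\Win\A$, resp.\ $\Win\B$), observe that they are disjoint because $\M$ is winning, take a deterministic $\omega$-word parity automaton $\D$ separating them, use $\D$ as a \emph{generalised} acceptance condition on paths, and then invoke \cref{lem:generalised:tree:automata} to convert the resulting generalised game automaton back into an ordinary one. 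Without this, your disjointness argument also breaks: from $t \in \lang\S \cap \lang\B$ you can arrange a play in which $\Win\B$ holds, but there is no way to conclude that $\Win\A$ also holds (the tree $t$ need not be anywhere near $\lang\A$), so you get no contradiction with $\M$ being winning. What actually makes disjointness go through is that the acceptance condition $\D$ of $\S$ is disjoint from $L_\B$; the paper packages this as a duality, showing $\S^{\mathrm c} = \alpha(\M^{\mathrm c}, \D^{\mathrm c})$ for the mode-flipped strategy $\M^{\mathrm c}$, which is winning in $\GameSeparabilityGame \B \A$, and then reuses the containment argument to get $\lang\B \subseteq \lang{\S^{\mathrm c}}$. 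Your proposal is missing both the construction of the acceptance condition and the mechanism (duality plus the separating word automaton $\D$) that makes the two halves of the separation certificate coexist on one automaton.
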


In the proof of this lemma we will build separating automata
with a more general acceptance condition than the parity condition,
which will simplify the technical details.
%
%
A~\emph{generalised game automaton} $\A=\tuple{\Sigma, Q, q_0, \Delta, \D}$ is just
like a~game automaton except that the priority mapping $\Omega$
is replaced by a~deterministic $\omega$-word parity automaton $\D$ over alphabet $\Sigma\times \set{\L,\R}$.
A~run $\rho\in\trees_Q$ of such an~automaton over a~tree $t\in\trees_\Sigma$ is \emph{accepting}
if for every path $b=\tuple{a_0, d_0} \tuple{a_1, d_1} \cdots \in \paths{t}$
either $\rho(d_0\cdots d_{i-1})=\top$ for some $i\in\omega$, or $b\in\lang\D$.
The acceptance game $\AcceptanceGame \A t$ can easily be adapted to the case of a~generalised game automaton $\A$
by only modifying the winning condition.

\begin{restatable}{lemma}{lemGenGame}
    \label{lem:generalised:tree:automata}
    A~generalised game automaton $\A$
    with a~generalised acceptance condition recognised by a~deterministic parity automaton $\D$
    can be transformed into an~equivalent (ordinary) game automaton $\B$
    of size polynomial in $\A$ and $\D$.
    %
\end{restatable}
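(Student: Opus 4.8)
The plan is to realise $\B$ as a product of the game-automaton part of $\A$ with the deterministic path automaton $\D$, while keeping the distinguished accepting state $\top$ as a single sink rather than multiplying it out. Concretely, I would take $Q^\B := \big((Q \setminus \set\top) \times Q^\D\big) \cup \set\top$ with initial state $(q_0, q_0^\D)$, the intended invariant being that the $\D$-component of a node records the state $\D$ reaches after reading the sequence of (letter, direction)-pairs along the path from the root. Thus from a state $(q,p)$ over a letter $a$, if $\A$ has the conjunctive transition $\Delta(q,a) = \set{\tuple{q,a,q_\L,q_\R}}$ then $\B$ has $\tuple{(q,p), a, (q_\L, p_\L), (q_\R, p_\R)}$ with $p_\L := \Delta^\D(p,(a,\L))$ and $p_\R := \Delta^\D(p,(a,\R))$; if $\A$ has the disjunctive transitions $\set{\tuple{q,a,q_\L,\top}, \tuple{q,a,\top,q_\R}}$ then $\B$ has $\set{\tuple{(q,p), a, (q_\L, p_\L), \top}, \tuple{(q,p), a, \top, (q_\R, p_\R)}}$, i.e.~an $\A$-successor equal to $\top$ simply becomes $\top$ and its $\D$-component is discarded. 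Finally $\top$ keeps its conjunctive self-loop $\tuple{\top, a, \top, \top}$ for every $a$. By construction the conjunctive/disjunctive shape is inherited from $\A$, so $\B$ is again a game automaton.

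For the acceptance condition I would set the priority of a product state to that of its $\D$-component, $\Omega^\B((q,p)) := \Omega^\D(p)$, and give $\top$ any fixed even priority. Since $\D$ is deterministic, the $\D$-component of a run of $\B$ over a tree $t$ is completely forced by the underlying run of the game-automaton part; hence runs of $\B$ over $t$ are in bijection with runs of $\A$ over $t$ via forgetting the $\D$-component (nodes labelled $\top$ in $\B$ correspond to nodes labelled $\top$ in $\A$). It then remains to check that this bijection preserves acceptance branch by branch.

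Fix a branch $d_0 d_1 \cdots$ of a run, with induced path $b = \tuple{a_0, d_0}\tuple{a_1, d_1}\cdots$, where $a_i$ is the label read at depth $i$. If the branch reaches $\top$ at some finite depth, then from that point on it loops in $\top$, so the only priority seen infinitely often is the even priority of $\top$ and the branch is accepting in $\B$; this matches exactly the first disjunct of the generalised acceptance condition. If the branch never reaches $\top$, then the $\D$-components visited along it are precisely $q_0^\D p_1 p_2 \cdots$, the unique run of $\D$ on $b$; hence the priorities seen along the branch are $\Omega^\D(q_0^\D)\Omega^\D(p_1)\cdots$, which is parity-accepting iff $b \in \lang\D$. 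Together these two cases show that a run of $\B$ is accepting iff the corresponding run of $\A$ satisfies the generalised acceptance condition, whence $\lang\B = \lang\A$.

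The size bound is then immediate: $\B$ has $\card{Q \setminus \set\top}\cdot\card{Q^\D} + 1$ states and reuses the priorities of $\D$ together with one extra even priority, both polynomial in $\A$ and $\D$. I do not expect a genuine obstacle here; the one point demanding care is the treatment of $\top$. It must not be taken in product with $Q^\D$ (otherwise transitions into $\top$ would spuriously depend on the path automaton's state, breaking the game-automaton shape and the bijection on runs), and its priority must be even, so that every branch entering the ``accept everything'' region is declared winning, faithfully encoding the first disjunct of the generalised acceptance condition. Everything else is a routine product verification.
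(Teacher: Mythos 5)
Your construction is exactly the paper's: the product $\big((Q\setminus\set\top)\times Q^\D\big)\cup\set\top$ with $\top$ kept as a single sink, priorities inherited from the $\D$-component, the run bijection via projection, and the branch-by-branch case split on whether $\top$ is reached. The proposal is correct and matches the paper's proof in both structure and detail (the paper leaves the even priority of $\top$ implicit, which you rightly make explicit).
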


\begin{proof}
    Consider the game automaton $\B$ defined as the following product of $\A$ and $\D$:
    \[\B:=\tuple{\Sigma, (Q^\A\setminus\set{\top})\times Q^\D\cup\{\top\}, (q_0^\A,q_0^\D), \Delta^\B, \Omega^\B},\]
    where $\Omega^\B$ is just inherited from $\D$, i.e.,~$\Omega^\B(q^\A,q^\D):=\Omega^\D(q^\D)$.
    Moreover, for each conjunctive $\A$-transition $\tuple{q^\A,a,q^\A_\L,q^\A_\R}\in\Delta^\A$,
    $\Delta^\B$ contains the transition
    \[\big((q^\A,q^\D), a, (q^\A_\L,q^\D_\L), (q^\A_\R,q^\D_\R)\big),\]
    where for $d \in \set{\L,\R}$ we have $\Delta^\D\big(q^\D, \tuple{a,d}\big)=q^\D_d$.
    Similarly, for each disjunctive $\A$-transition $\tuple{q^\A,a,q^\A_\L,\top}\in\Delta^\A$ (resp., $\tuple{q^\A,a,\top,q^\A_\R}\in\Delta^\A$), $\Delta^\B$ contains the transition $\big((q^\A,q^\D), a, (q^\A_\L,q^\D_\L), \top\big)$ (resp., $\big((q^\A,q^\D), a, \top, (q^\A_\R,q^\D_\R)\big)$), where for $d \in \set{\L,\R}$ we have $\Delta^\D\big(q^\D, \tuple{a,d}\big)=q^\D_d$.

    Notice that for every tree $t\in\trees_\Sigma$ there is a~bijection between the runs $\rho^\A$ of $\A$ over $t$ and runs $\rho^\B$ of $\B$ over $t$: given a~run $\rho^\B$ we can just project it onto the first coordinate to obtain $\rho^\A$, and the run $\rho^\B$ is obtained in a~top-down deterministic way from $\rho^\A$ by running the automaton $\D$ deterministically on all the paths. Therefore, it is enough to argue that if $\rho^\A$ and $\rho^\B$ are two such runs then $\rho^\A$ is accepting if and only if $\rho^\B$ is. Consider a~branch $d_0d_1\cdots\in\set{\L,\R}$ and let $u_i:=d_0\cdots d_{i-1}$ for $i\in\omega$. Without loss of generality assume that $\rho^\A(u_i)\neq\top$ for every $i\in\omega$ (otherwise both runs are accepting on this branch). For each $i\in\omega$ let $(q^\A_i,q^\D_i):= \rho^B(u_i)$ and notice that by the choice of the runs $\rho^\A$ and $\rho^\B$ we know that $\rho^\A(u_i)=q^\A_i$. Now let $b:=\tuple{t(u_0), d_0}\tuple{t(u_1), d_1}\cdots$ be the path used to define the generalised acceptance condition of $\A$ on the considered branch. By the construction of the automaton $\B$, we know that the sequence of states $q^\D_0q^\D_1\cdots$ is the run of $\D$ on $b$. Therefore, $\rho^\A$ satisfies the generalised acceptance condition on the path $b$ if and only if $\rho^\B$ satisfies the parity condition on the branch $d_0d_1\cdots$.
\end{proof}

We now prove \cref{lem:game:sep}.

\begin{proof}[Soundness]

    Assume that \Separator wins the game-separability game above $\Game{} := \GameSeparabilityGame \A \B$ and we show that there exists a~game automaton $\S$ separating $\lang \A$ from $\lang \B$.
    Let $\M = \tuple{ M, \ell_0, (\overline{m},\overline{f}), \tau}$ be a~finite-memory winning strategy of \Separator in $\Game{}$.

    Before we move to the construction of the separating automaton,
    we first define its generalised acceptance condition.
    Let $L_\A$ (resp., $L_\B$) be the set of those paths
    $b=\tuple{a_0,d_0}\tuple{a_1,d_1}\cdots\in (\Sigma\times\set{\L,\R})^\omega$
    s.t.~the unique play $\pi$ of $\Game {}$ in which \Input plays consecutive letters and directions from $b$ and \Separator uses her winning strategy $\M$, satisfies the condition $\Win\A$ (resp., $\Win\B$).
    Since the strategy $\M$ is winning for \Separator, the languages $L_\A$ and $L_\B$ are disjoint.
    Moreover, since the strategy $\M$ is finite memory and both $\Win\A$, $\Win\B$ are $\omega$-regular, so are the languages $L_\A$ and $L_\B$. Let $\D$ be any deterministic automaton over $\omega$-words that separates $L_\A$ from $L_\B$ (the simplest case is to take $\D$ recognising the language $L_\A$).
    We build a~separating automaton as a~generalised game automaton
    \begin{align*}
        \S &:= \alpha(\M,\D):=\tuple{\Sigma, M\cup\{\top\}, \ell_0, \Delta^\S, \D}, \textrm{ where} \\
        \Delta^\S\big(\ell, a\big) &:=
            \begin{cases}
                \set{\tuple{\ell,a,\ell_\L, \top}, \tuple{\ell,a,\top,\ell_\R}}
                &\text{if $\overline{m}(\ell, a) = \lor $,}\\
                \set{\tuple{\ell,a,\ell_\L, \ell_\R}}
                & \text{if $\overline{m}(\ell, a) = \land$,}
            \end{cases}
    \end{align*}
    for every $\ell \in M$ and $a \in \Sigma$,
    where for $d\in\set{\L,\R}$ we have $\ell_d := \tau(\ell, a, d)$.
    %
    We now show that $\S$ separates $\lang \A$ from $\lang \B$.
    In order to show $\lang \A \subseteq \lang \S$, let $t \in \lang \A$
    as witnessed by an accepting run $\rho^\A$.
    We show that \Automaton wins the acceptance game $\Game\S := \AcceptanceGame \S t$.
    To show this we play in parallel the separability game $\Game{}$ and the acceptance game $\Game\S$, maintaining the following invariant:
    At the $i$-th round, the current finite path of the input tree $t$ is $\tuple{a_0,d_0} \cdots \tuple{a_{i-1}, d_{i-1}}$,
    \Separator's winning strategy $\M$ in the separability game $\Game{}$ is in memory state $\ell_i$,
    the current state of the separating automaton $\S$ in the acceptance game $\Game\S$ is also $\ell_i$,
    and $\rho^\A(d_0\cdots d_{i-1})=q_i^\A$.
    The $i$-th round is then played as follows:
    \begin{simuize}
    \item[\GChoice{\Game{}}{I}{a}] \Input plays the letter $a_i := t(u_i)$ for $u_i:=d_0 \cdots d_{i-1}$.
    \item[\GChoice{\Game{}}{S}{m}] \Separator plays the~mode $m_i:= \overline{m}(\ell_i,a_i) \in\set{\lor,\land}$.
    \item[\GChoice{\Game{}}{S}{f}] \Separator plays either
            \begin{enumerate}
                \item a~selector $f_i :=\overline{f}(\ell_i,a_i) \in \set{\L, \R}^{\Delta^\A(a_i)}$ for $\A$ if $m_i=\lor$ or
                \item a~selector $f_i :=\overline{f}(\ell_i,a_i) \in \set{\L, \R}^{\Delta^\B(a_i)}$ for $\B$ if $m_i=\land$.
            \end{enumerate}
        \item[\GChoice{\Game\S}{A}{\delta}] \Automaton plays the transition $\delta^\S_i\in\Delta^\S(\ell_i, a_i)$, defined as follows.
        Let $\delta^\A_i:= \big(\rho^\A(u_i), t(u_i),\rho^\A(u_i\L),\rho^\A(u_i\R)\big)$ be the $\A$-transition used in $u_i$ by the run $\rho^\A$. We distinguish two cases.

        \begin{enumerate}
        \item In the first case, assume that \Separator played $m_i=\lor$ and $f_i \in \set{\L, \R}^{\Delta^\A(a_i)}$.
        It means that $\Delta^\S\big((\ell_i,q_i), a_i\big)$ contains two disjunctive transitions,
        $\delta^\S_{\L,i}:=\tuple{\ell_i,a_i,\ell_{\L,i},\top}$ and
        $\delta^\S_{\R,i}:=\tuple{\ell_i,a_i,\top,\ell_{\R,i}}$. Let us put $\delta^\S_i:=\delta^\S_{f_i(\delta^\A_i), i}$, i.e.,~the transition that sends a~non-$\top$ state in the direction given by $f_i(\delta^\A_i)$.
        
        \item In the second case, \Separator played $m_i=\land$ and $f_i \in \set{\L, \R}^{\Delta^\B(a_i)}$.
        It means that $\Delta^\S\big(\ell_i, a_i\big)$ contains one conjunctive transition $\delta^\S_i:=\tuple{\ell_i,a_i,\ell_{\L,i},\ell_{\R,i}}$.
        \end{enumerate}

        \item[\GChoice{\Game\S}{I}{d}] \Input plays an~arbitrary direction $d_i\in\set{\L,\R}$.
        \item[\GChoice{\Game{}}{I}{d}] \Input plays the direction $d_i\in\set{\L,\R}$.
    \end{simuize}

    If $m_i = \lor$ and $d_i\neq f_i(\delta_i^\A)$
    then the next position of the acceptance game $\Game\S$ is $(u_id_i, \top)$, which is a~winning position for \Automaton.
    Therefore, w.l.o.g. we assume that:
    \begin{equation}
    \label{eq:lor-to-direction}
        \forall i\in\omega.\ (m_i = \lor) \Rightarrow f_i(\delta_i^\A)=d_i.
    \end{equation}
    Moreover, the new state of $\S$ in $\Game\S$ is $\ell_{i+1} := \tau(\ell_i, a_i, d_i)$.
    Similarly, the new memory state of $\M$ in $\Game{}$ is $\ell_{i+1}$.
    This concludes the description of the $i$\=/th round of both games. Clearly the invariant is preserved.
    We argue that \Automaton wins the resulting infinite play
    $\tuple{\delta_0^\S, d_0}\tuple{\delta_1^\S, d_1} \cdots$
    of the acceptance game $\Game\S$.
    Consider the infinite play $\pi = \tuple{a_0, m_0, f_0, d_0} \tuple{a_1, m_1,f_1, d_1} \cdots$ of the separability game $\Game{}$.
    Since the run $\rho^\A$ is accepting,
    the infinite sequence of $\A$-transitions $\delta_0^\A \delta_1^\A \cdots$ is accepting.
    Thus, \eqref{eq:lor-to-direction} implies that $\pi\in\Win\A$.
    Therefore, the infinite path $b := \tuple{a_0,d_0}\tuple{a_1,d_1}\cdots$ belongs to $L_\A\subseteq\lang\D$
    and thus the corresponding infinite play $\tuple{\delta^\S_0, d_0}\tuple{\delta^\S_1, d_1} \cdots$ of the acceptance game $\Game\S$
    is winning for \Automaton, as required.
    This concludes the argument establishing $\lang \A \subseteq \lang \S$.

    It remains to show that $\lang \S \disjoint \lang \B$,
    which is the same as $\lang \B \subseteq \lang {\S^\mathrm c}$ for the complement game automaton.
    This follows directly from the construction above via the duality of the game $\Game {}$.
    Indeed, consider the generalised game automaton $\S$ as defined in \cref{sec:game_gen}. Let $\S^\mathrm c$ be the complementary game automaton $\tuple{\Sigma, M \cup \set \top, \ell_0, \Delta^{\S^\mathrm c}, \D^\mathrm c}$, which recognises the complement language of $\S$.
    (Here, $\Delta^{\S^\mathrm c}$ is the dualisation of $\Delta^\S$ as in the proof of \cref{rem:game-automata-dual},
    and $\D^\mathrm c$ is the complementary automaton to $\D$---its priorities are increased by $1$.)
    We first observe that if $\M=\tuple{ M, \ell_0, (\overline{m},\overline{f}), \tau}$ is a~strategy of \Separator in $\GameSeparabilityGame \A \B$,
    then $\M^{\mathrm c}:= \tuple{ M, \ell_0, (\overline{m}^{\mathrm c},\overline{f}), \tau}$ with $\overline{m}^{\mathrm c}$ returning the opposite mode than $\overline{m}$
    is a~strategy of \Separator in $\GameSeparabilityGame \B \A$. By the symmetry of the winning condition, $\M$ is winning if and only if $\M^{\mathrm c}$ is winning.
    The following claim follows directly from the definition of $\alpha$.
    \begin{claim}
        If $\M$ is a~winning strategy of \Separator in $\GameSeparabilityGame \A \B$ then
        \[\S^\mathrm c = \alpha(\M,\D)^{\mathrm c}= \alpha(\M^{\mathrm c}, \D^{\mathrm c}).\]
    \end{claim}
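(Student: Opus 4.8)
The plan is to establish the claimed identity by unfolding both sides and comparing their components, since a generalised game automaton is determined by its alphabet, state set, initial state, transition function, and deterministic acceptance automaton. Both $\S^\mathrm c$ and $\alpha(\M^{\mathrm c},\D^{\mathrm c})$ manifestly share the alphabet $\Sigma$, the state set $M \cup \set{\top}$, and the initial state $\ell_0$, so it suffices to check that their transition functions and their acceptance automata coincide. The first equality $\S^\mathrm c = \alpha(\M,\D)^\mathrm c$ is just the definition $\S = \alpha(\M,\D)$, so the content lies entirely in the second equality.

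First I would dispatch the acceptance condition. By the definition of $\alpha$, the acceptance automaton of $\alpha(\M^{\mathrm c},\D^{\mathrm c})$ is $\D^{\mathrm c}$. On the other side, the complement $\S^\mathrm c$ of the generalised game automaton $\S = \alpha(\M,\D)$ is formed, exactly as in the proof of \cref{rem:game-automata-dual}, by dualising the transition structure and replacing the acceptance automaton $\D$ by its complement $\D^{\mathrm c}$ (whose priorities are those of $\D$ increased by one). Hence both automata carry the acceptance automaton $\D^{\mathrm c}$.

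Next I would compare the transition functions, which is where the mode flip enters. In $\alpha(\M,\D)$ the transition at $\tuple{\ell,a}$ is the disjunctive pair $\set{\tuple{\ell,a,\ell_\L,\top}, \tuple{\ell,a,\top,\ell_\R}}$ when $\overline{m}(\ell,a)=\lor$ and the conjunctive singleton $\set{\tuple{\ell,a,\ell_\L,\ell_\R}}$ when $\overline{m}(\ell,a)=\land$, where in both cases $\ell_d = \tau(\ell,a,d)$. Dualising to form $\S^\mathrm c$ swaps the conjunctive and disjunctive shapes while preserving the successor states $\ell_\L,\ell_\R$, so $\S^\mathrm c$ has a conjunctive transition precisely when $\overline{m}(\ell,a)=\lor$ and a disjunctive one precisely when $\overline{m}(\ell,a)=\land$. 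By the definition of $\M^{\mathrm c}$, its mode function satisfies $\overline{m}^{\mathrm c}(\ell,a)=\land$ iff $\overline{m}(\ell,a)=\lor$, while $\overline{f}$ and $\tau$ are inherited unchanged; feeding $\overline{m}^{\mathrm c}$ into the definition of $\alpha$ therefore yields the same case split with the same successor states $\ell_d = \tau(\ell,a,d)$. The two transition functions thus agree at every $\tuple{\ell,a}$, and the equality follows.

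I do not expect any real obstacle, as the statement is purely definitional. The only point demanding a little care is to match the \emph{generalised} complementation against the \emph{ordinary} one of \cref{rem:game-automata-dual}: namely, to observe that increasing all state priorities by one is realised in the generalised setting by passing from $\D$ to $\D^{\mathrm c}$, and that swapping the disjunctive and conjunctive transition shapes is realised by flipping the mode $\overline{m}$ to $\overline{m}^{\mathrm c}$. With these two correspondences in hand the identity is immediate.
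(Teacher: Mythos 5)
Your proof is correct and matches the paper's treatment: the paper simply states that the claim ``follows directly from the definition of $\alpha$,'' and your argument is exactly that definitional check, spelled out component by component (acceptance automaton $\D^{\mathrm c}$ on both sides, mode flip $\overline m \mapsto \overline m^{\mathrm c}$ matching the swap of conjunctive and disjunctive transitions, successors $\tau(\ell,a,d)$ unchanged).
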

    Therefore, by applying the argument that $\lang\A\subseteq\lang \S$ to $\M^{\mathrm c}$, $\D^{\mathrm c}$ for the game $\GameSeparabilityGame \B \A$,
    we obtain $\lang \B \subseteq \lang {\S^\mathrm c}$, as required.
\end{proof}


\begin{proof}[Completeness]
    Assume that there exists a~game automaton $\S$ that separates $\lang \A$ from $\lang \B$.
    We need to show that \Separator wins the separability game $\Game{} := \GameSeparabilityGame \A \B$.
    Let $\T:=\S^{\mathrm c}$ be the syntactic dual of the game automaton $\S$ as in \cref{rem:game-automata-dual}. Thus, the automata $\S$ and $\T$ share the same set of states. Also, their transitions are related: the conjunctive transitions of $\S$ correspond to disjunctive transitions of $\T$ and vice versa. By slightly rephrasing the separation condition,
    we have $\lang \A \disjoint \lang \T$ and $\lang \B \disjoint \lang \S$. This means that \Pathfinder wins both disjointness games $\DisjointnessGame \T \A$ and $\DisjointnessGame \S \B$.
    Thus, we can apply \cref{cor:disjoint-pathfinder} to obtain pathfinders
    $\PP_\A\colon \left(\bigcup_{a\in\Sigma} \Delta^\T(a)\times \Delta^\A(a) \right)\to \set{\L,\R}$ and
    $\PP_\B\colon \left(\bigcup_{a\in\Sigma} \Delta^\S(a)\times \Delta^\B(a) \right)\to \set{\L,\R}$.

    We will now provide a~strategy of \Separator in $\Game{}$. The constructed strategy uses as its memory states the set of states of $\S$ that are distinct than $\top$. Let the initial memory state be $q_0$. Assume that the current memory state is $q_i$ and consider the $i$-th round of the game.

    \begin{simuize}
    \item[\Choice{I}{a}] \Input plays an~arbitrary letter $a_i \in\Sigma$.
    \item[\Choice{S}{m}] \Separator plays the mode $m_i\in\set{\lor,\land}$ defined as follows.
    We consider the following two cases for the mode of the transitions $\Delta^\S(q_i,a_i)$.

    \begin{enumerate}
    \item If $\Delta^\S(q_i,a_i)=\set{\delta^\S_i}$ is a~single conjunctive transition $\delta^\S_i=\tuple{q_i,a_i,q_{\L,i},\allowbreak q_{\R,i}}$ then we put $m_i:=\land$ and $f_i:= \PP_\B(\delta^\S_i,\_)$ is a~selector for $\B$.
    \item Otherwise, $\Delta^\S(q_i,a_i)$ is a~pair of disjunctive transitions which means that $\Delta^\T(q_i,a_i)$ is a~single conjunctive transition $\delta^\T_i=\tuple{q_i,a_i,q_{\L,i},q_{\R,i}}$. In this case we put $m_i:=\lor$ and $f_i:= \PP_\A(\delta^\T_i,\_)$ is a~selector for $\A$.
    \end{enumerate}

    \item[\Choice{S}{f}] \Separator plays the selector $f_i$ defined above (notice that $f_i$ is either a~selector for $\A$ or for $\B$, according to $m_i$).
    \item[\Choice{I}{d}] \Input plays an~arbitrary direction $d_i\in\set{\L,\R}$.
    \end{simuize}

    The next memory state of our strategy is the state $q_{d_i,i}$ taken from one of the transitions $\delta^\S_i$ or $\delta^\T_i$, see above. 
    We now argue that \Separator wins the corresponding infinite play
    $\pi = \tuple{a_0,m_0,f_0,d_0}\allowbreak \tuple{a_1,m_1,f_1,d_1} \cdots$.
    Let $b = \tuple{a_0,d_0}\tuple{a_1,d_1}\cdots$ be the corresponding path.
    Consider a~number $i\in\omega$. By the construction of the strategy above, we have two cases:
    \begin{enumerate}
    \item If $m_i=\land$, then a~conjunctive transition $\delta^\S_i=\tuple{q_i,a_i,q_{\L,i},q_{\R,i}}$ of $\S$ was used to determine $f_i$. In this case, define $\delta^\T_i$ as the following disjunctive transition of $\T$: if $d_i=\L$ then $\delta^\T_i:= \tuple{q_i,a_i,q_{\L,i},\top}$, otherwise $d_i=\R$ and $\delta^\T_i:= \tuple{q_i,a_i,\top,q_{\R,i}}$.
    \item If $m_i=\lor$, then a~conjunctive transition $\delta^\T_i=\tuple{q_i,a_i,q_{\L,i},q_{\R,i}}$ of $\T$ was used to determine $f_i$. In this case, define $\delta^\S_i$ as the following disjunctive transition of $\S$: if $d_i=\L$ then $\delta^\S_i:= \tuple{q_i,a_i,q_{\L,i},\top}$, otherwise $d_i=\R$ and $\delta^\S_i:= \tuple{q_i,a_i,\top,q_{\R,i}}$.
    \end{enumerate}
    The definitions above provide two sequences of transitions 
    $\vec\delta^\S:=\delta^\S_0\delta^\S_1\cdots\in\Delta^\S(b)$,
    $\vec\delta^\T:=\delta^\T_0\delta^\T_1\cdots \in\Delta^\T(b)$.
    Since for every $i\in\omega$ the transitions $\delta^\S_i$ and $\delta^\T_i$ are from the same state $q_i\neq\top$,
    $\vec\delta^\S$ is accepting in $\S$ if, and only if,
    $\vec\delta^\T$ is rejecting in $\T$.
    Assume that $\vec\delta^\S$ is accepting (the other case is analogous).
    We will show that $\Win\B$ is violated (if $\vec\delta^\T$ is accepting then $\Win\A$ is violated). Assume for the sake of contradiction that $\Win\B$ holds, as witnessed by a~sequence of $\B$-transitions $\vec\delta^\B = \delta_0^\B \delta_1^\B \cdots \in \Delta^\B(b)$.
    By \cref{rem:pathfinder-for-game} we obtain that whenever $m_i=\lor$ and $\delta^\S_i$ is a~disjunctive transition of $\S$ then $\PP_\B(\delta^\S_i,\_)$ is constantly equal to $d_i$.
    By the assumption on $\vec\delta^\B$ from $\Win\B$ we know that whenever $m_i=\land$ then $f_i(\delta^\B_i)=d_i$. However, if $m_i=\land$ then $f_i(\delta^\B_i)=\PP_\B(\delta^\S_i,\delta^\B_i)$. Therefore, in both cases we know that $\PP_\B(\delta^\S_i,\delta^\B_i)=d_i$.
    This means that the assumptions of \cref{rem:pathfinder-property} are met and at least one of the sequences $\vec\delta^\S$, $\vec\delta^\B$ is rejecting---a~contradiction, since we assumed both these sequences to be accepting.
\end{proof}

\section{Separability by game automata with priorities in \texorpdfstring{$C$}{C}}
\label{sec:game_ind}

In this section we present our last game-theoretic characterisation,
namely game automata separability for a fixed finite set $C\subseteq\N$ of priorities.
Fix two automata $\A = \tuple{\Sigma, Q^\A, q_0^\A, \Omega^\A, \Delta^\A}$ and
$\B = \tuple{\Sigma, Q^\B, q_0^\B, \Omega^\B, \Delta^\B}$ over the same alphabet $\Sigma$.
The game is a variation of $\GameSeparabilityGame \A \B$ from \cref{sec:game_gen}
where \Separator additionally plays priorities from $C$.
\begin{gamebox}{$C$-game-automata separability game $\CGameSeparabilityGame \A \B C$}
    At the $i$-th round:
    \begin{gameize}
        \item[\Choice{S}{c}] \Separator plays a priority $c_i \in C$.
        \item[\Choice{I}{a}] \Input plays a letter $a_i \in \Sigma$.
        \item[\Choice{S}{m}] \Separator plays a mode $m_i\in\set{\lor,\land}$.
        \item[\Choice{S}{f}] \Separator plays either
        \begin{enumerate}
            \item a selector $f_i \in \set{\L, \R}^{\Delta^\A(a_i)}$ for $\A$ if $m_i=\lor$, or
            \item a selector $f_i \in \set{\L, \R}^{\Delta^\B(a_i)}$ for $\B$ if $m_i=\land$.
        \end{enumerate}
        \item[\Choice{I}{d}] \Input plays a direction $d_i \in \set{\L, \R}$.
    \end{gameize}
\end{gamebox}
\noindent
\Separator wins an~infinite play $\pi = \tuple{c_0,a_0,m_0,f_0,d_0}\tuple{c_1,a_1,m_1,f_1,d_1}\cdots$
inducing a~path $b = \tuple{a_0,d_0}\tuple{a_1,d_1} \cdots$
whenever both conditions below hold:
\begin{winnize}
    \item  $\pi\in \Win\A$: If there exists an accepting sequence of transitions
    $\vec\delta^\A = \delta_0^\A \delta_1^\A \cdots \in \Delta^\A(b)$
    s.t.~for all $i\in\omega$ we have $(m_i = \lor) \Rightarrow f_i(\delta^\A_i) = d_i$,
    then $c_0c_1 \cdots$ is accepting.

    \item  $\pi\in \Win\B$: If there exists an accepting sequence of transitions
    $\vec\delta^\B = \delta_0^\B \delta_1^\B \cdots \in \Delta^\B(b)$
    s.t.~for all $i\in\omega$ we have $(m_i = \land)\Rightarrow f_i(\delta^\B_i) = d_i$,
    then $c_0c_1 \cdots$ is rejecting.
\end{winnize}

\begin{restatable}{lemma}{lemCGameSeparability}
    \label{lem:CGameSeparability:correctness}
    \Separator wins $\CGameSeparabilityGame \A \B C$ if, and only if,
    there exists a~game automaton $\S$ with priorities in $C$ separating $\lang \A$, $\lang \B$.
\end{restatable}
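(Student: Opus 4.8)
The plan is to combine the two ingredients already isolated in the previous sections: the priority mechanism of \cref{lem:CDeterministicSeparability:correctness}, which accounts for separators whose index lies in $C$, and the mode/selector mechanism of \cref{lem:game:sep}, which accounts for the conjunctive/disjunctive branching of a game separator. Since $\CGameSeparabilityGame \A \B C$ is literally the overlay of these two games, the proof overlays the two corresponding correctness arguments. A pleasant simplification over \cref{lem:game:sep} is that here \Separator plays the priorities $c_i \in C$ explicitly, so we never need the generalised game automata of \cref{lem:generalised:tree:automata}: the sequence $c_0 c_1 \cdots$ \emph{is} the parity acceptance condition of the separator along each branch.

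For soundness I would fix a finite-memory winning strategy $\M = \tuple{M, \ell_0, (\overline{c}, \overline{m}, \overline{f}), \tau}$ of \Separator. Since $c_i$ is declared before the letter $a_i$, the decision function $\overline{c}\colon M \to C$ depends on the memory state only, and I build the game automaton $\S := \tuple{\Sigma, M \cup \set\top, \ell_0, \Delta^\S, \Omega^\S}$ with $\Omega^\S(\ell) := \overline{c}(\ell) \in C$ and transitions dictated by the mode exactly as in \cref{lem:game:sep}: two disjunctive transitions when $\overline{m}(\ell, a) = \lor$ and one conjunctive transition when $\overline{m}(\ell, a) = \land$, with successors $\ell_d := \tau(\ell, a, d)$. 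The inclusion $\lang \A \subseteq \lang \S$ is then established by playing the acceptance game $\AcceptanceGame \S t$ against the separability game in parallel, as before; the only change is that, having derived $\pi \in \Win\A$ along a branch, one reads off directly that $c_0 c_1 \cdots$ is accepting (this is now what $\Win\A$ asserts), and since $c_i = \Omega^\S(\ell_i)$ this says precisely that this branch of the run of $\S$ is accepting.

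For the disjointness $\lang \S \disjoint \lang \B$, equivalently $\lang \B \subseteq \lang{\S^\mathrm c}$, I would reuse the duality of \cref{lem:game:sep}. Passing to the dual strategy $\M^\mathrm c$ in which \Separator swaps every mode and raises every priority by one (so it plays in the shifted index $C{+}1$), one checks that $\M^\mathrm c$ is a winning strategy in $\CGameSeparabilityGame \B \A {C{+}1}$: the mode swap exchanges the two selector clauses consistently (a selector for $\A$ played under $\lor$ becomes one played under $\land$, as required by the swapped roles of $\A,\B$), while the $+1$ shift turns each ``$c_0c_1\cdots$ is accepting'' into ``rejecting'' and vice versa, matching the complemented winning condition. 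Since the separator produced from $\M^\mathrm c$ by the same recipe is exactly the complement automaton $\S^\mathrm c$ of \cref{rem:game-automata-dual}, applying the inclusion argument above to $\M^\mathrm c$ in $\CGameSeparabilityGame \B \A {C{+}1}$ yields $\lang \B \subseteq \lang {\S^\mathrm c}$, hence $\lang \S \disjoint \lang \B$. I expect the bookkeeping of this duality --- verifying that the $+1$ shift of priorities complements the parity condition at the same time as the mode swap exchanges the roles of $\A$ and $\B$ --- to be the main obstacle, since it is the one place where the two mechanisms genuinely interact rather than merely coexist.

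For completeness I would mirror the completeness proof of \cref{lem:game:sep} and add the priority move. Starting from a game separator $\S$ with priorities in $C$ and its dual $\T := \S^\mathrm c$, one has $\lang \A \disjoint \lang \T$ and $\lang \B \disjoint \lang \S$, hence pathfinders $\PP_\A$, $\PP_\B$ by \cref{cor:disjoint-pathfinder}. \Separator plays with memory the non-$\top$ states of $\S$, at state $q_i$ additionally declaring $c_i := \Omega^\S(q_i) \in C$, and choosing the mode and the selector from $\PP_\A$ or $\PP_\B$ according to whether $\Delta^\S(q_i, a_i)$ is disjunctive or conjunctive. Under the hypothesis of $\Win\A$ (an accepting $\A$-run conforming to the $\lor$-selectors), the construction of the dual transition sequences $\vec\delta^\S, \vec\delta^\T$ together with \cref{rem:pathfinder-for-game,rem:pathfinder-property} forces $\vec\delta^\T$ rejecting, hence $\vec\delta^\S$ accepting, and since $c_i = \Omega^\S(q_i)$ this is exactly the assertion that $c_0 c_1 \cdots$ is accepting; the case of $\Win\B$ is symmetric via $\PP_\B$. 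This step is routine once \cref{lem:game:sep} is in hand, the only new observation being that the priority played at each state makes the parity of $\vec\delta^\S$ coincide with that of $c_0 c_1 \cdots$.
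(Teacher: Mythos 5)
Your proposal is correct and follows essentially the same route as the paper: the same automaton $\alpha(\M)$ with $\Omega^\S(\ell)=\overline{c}(\ell)$ and mode-dictated transitions, the same parallel-play argument for $\lang\A\subseteq\lang\S$, the same duality (mode swap plus priority shift by one) for $\lang\B\subseteq\lang{\S^{\mathrm c}}$, and the same pathfinder-based completeness argument via $\T=\S^{\mathrm c}$, \cref{rem:pathfinder-for-game}, and \cref{rem:pathfinder-property}. Your explicit formulation of the disjointness step as a dual strategy $\M^{\mathrm c}$ winning in $\CGameSeparabilityGame \B \A {C{+}1}$ is only a slightly more spelled-out version of the paper's ``repeat the argument with $\A,\S,\Win\A$ replaced by $\B,\S^{\mathrm c},\Win\B$''.
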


\noindent
The proof of this lemma can be seen as a~simplified variant of the proof of \cref{lem:game:sep},
except for the acceptance condition of the separator which is given by the priorities $c_i$'s
as in the proof of \cref{lem:CDeterministicSeparability:correctness}.
A strategy for \Separator in $\CGameSeparabilityGame \A \B C$ is a tuple
\begin{align}
    \label{eq:strategy:C-game-sep}
    \M = \tuple{M, \ell_0, \tuple{\overline{c},\overline{m},\overline{f}}, \tau}
\end{align}
where $M$ is a set of memory states,
$\ell_0 \in M$ is the initial memory state,
$\overline{c},\overline{m},\overline{f}$ are decision functions,
and $\tau : M \times \Sigma \times \set{\L, \R} \to M$ is the memory update function.
More precisely, $\overline c \colon M \to C$ outputs a priority $\overline c(\ell)$ in position $\ell$,
$\overline m \colon M \times \Sigma \to \set{\land, \lor}$ outputs a~mode $\overline m(\ell, a)$ in position $\ell$ when \Input plays $a \in \Sigma$,
and $\overline f \colon M \times \Sigma \to \bigcup_{a \in \Sigma} (\set{\L, \R}^{\Delta^\A(a)} \cup \set{\L, \R}^{\Delta^\B(a)})$ outputs a selector
$\overline f(\ell, a)$ in similar circumstances.
With this notation we can define a~correspondence $\alpha$ from finite-memory winning strategies for \Separator in $\CGameSeparabilityGame \A \B C$ to game automata separating $\A$, $\B$ with priorities in $C$.
More precisely, we map an arbitrary finite-memory strategy $\M$ to a game automaton
\begin{align}
    \S := \alpha(\M) := \tuple{\Sigma, M, \ell_0, \Delta^\S, \Omega^\S}
\end{align}
which has the same set of states $M$ and initial state $\ell_0$ as $\M$,
priorities are induced by the decision function $\overline{c}$ of $\M$ as
\begin{align*}
    \Omega^\S(\ell) := \overline{c}(\ell),
\end{align*}
and transitions are of the form
	\[
        \Delta^\S(\ell, a) =
			\begin{cases}
			    \set{\tuple{\ell,a,\ell_\L, \top}, \tuple{\ell,a,\top,\ell_\R}}
                & \text{if $\overline{m}(\ell, a) = \lor $,}\\
            	\set{\tuple{\ell,a,\ell_\L, \ell_\R}}
                & \text{if $\overline{m}(\ell, a) = \land$,}
			\end{cases}
    \]
where $\ell_\L := \tau(\ell, a, \L)$ and $\ell_\R := \tau(\ell, a, \R)$.
Notice how the acceptance condition of $\S$ is simply inherited from the winning strategy $\M$.
This should be contrasted with \cref{sec:game_gen} where the set of priorities $C$ is not fixed beforehand,
and thus the acceptance condition of $\S$ is defined with the help of the winning condition for \Separator in the corresponding separability game.
The decision function $\overline f$ is not involved in the definition of $\alpha(\M)$,
however it is used to show that if $\M$ is winning, then $\alpha(\M)$ is in fact a~separator.

\renewcommand{\SepGame}{\Game{}}
\renewcommand{\AccGameA}{{\color{darkgreen}G_\A}}
\renewcommand{\AccGameS}{{\color{darkblue}G_\B}}
In the following, let $\SepGame := \CGameSeparabilityGame \A \B C$ be the $C$-game-separability game.
The proof below is very similar to the one of \cref{lem:game:sep},
with some adaptations to take care of the additional priorities $c_i$'s selected by \Separator.

\begin{proof}[Soundness]
    Assume that \Separator wins the separability game $\SepGame$.
    By \cref{thm:BuchiLandweber},
    there exists a~finite-memory winning strategy $\M$ as in \eqref{eq:strategy:C-game-sep}.
    Let $\S = \alpha(\M)$ be the game automaton corresponding to $\M$.
    %
    %
    We show that $\S$ separates $\lang \A$ from $\lang \B$,
    i.e.,~$\lang \A \subseteq \lang \S$ and $\lang \S \disjoint \lang \B$.
    First, we show that $\lang \A \subseteq \lang \S$.
    To this end, assume that $t \in \lang \A$ and let $\rho^\A$ be an~accepting run of $\A$ over $t$ witnessing this.
%
%
	We show that \Automaton wins the acceptance game $\Game\S := \AcceptanceGame \S t$.
	To show this we play in parallel the separability game $\Game{}$ and the acceptance game $\Game\S$.
	As we play both games in lock-steps, we maintain the following invariant:
	At every round $i$, the current finite path of the input tree $t$ is $\tuple{a_0,d_0} \cdots \tuple{a_{i-1}, d_{i-1}}$,
	\Separator's winning strategy $\M$ in the separability game $\Game{}$ is in memory state $\ell_i$,
	the current state of the separating automaton $\S$ in the acceptance game $\Game\S$ is $\ell_i$ as well,
	and $\rho^\A(d_0\cdots d_{i-1})=q_i^\A$.
    Let now be at round $i$ and assume that the invariant holds.
    We play the separability and acceptance games as follows.
\begin{simuize}
\item[\GChoice{\Game{}}{S}{c}] \Separator plays the~priority $c_i:= \overline{c}(\ell_i) \in C$.
\item[\GChoice{\Game{}}{I}{a}] \Input plays the letter $a_i := t(u_i)$ for $u_i:=d_0 \cdots d_{i-1}$.
\item[\GChoice{\Game{}}{S}{m}] \Separator plays the~mode $m_i:= \overline{m}(\ell_i,a_i) \in\set{\lor,\land}$.
\item[\GChoice{\Game{}}{S}{f}] \Separator plays either
        \begin{enumerate}
            \item the selector $f_i :=\overline{f}(\ell_i,a_i) \in \set{\L, \R}^{\Delta^\A(a_i)}$ for $\A$ if $m_i=\lor$ or
            \item the selector $f_i :=\overline{f}(\ell_i,a_i) \in \set{\L, \R}^{\Delta^\B(a_i)}$ for $\B$ if $m_i=\land$.
        \end{enumerate}
    \item[\GChoice{\Game\S}{A}{\delta}] \Automaton plays the transition $\delta^\S_i\in\Delta^\S(\ell_i, a_i)$, defined as follows.
    Let $\delta^\A_i:= \big(\rho^\A(u_i), t(u_i),\rho^\A(u_i\L),\rho^\A(u_i\R)\big)$ be the $\A$-transition used in $u_i$ by the run $\rho^\A$. We distinguish two cases.

    \begin{enumerate}
    \item In the first case, assume that \Separator played $m_i=\lor$ and $f_i \in \set{\L, \R}^{\Delta^\A(a_i)}$.
    It means that $\Delta^\S\big((\ell_i,q_i), a_i\big)$ contains two disjunctive transitions,
    $\delta^\S_{\L,i}:=\tuple{\ell_i,a_i,\ell_{\L,i},\top}$ and
    $\delta^\S_{\R,i}:=\tuple{\ell_i,a_i,\top,\ell_{\R,i}}$. Let us put $\delta^\S_i:=\delta^\S_{f_i(\delta^\A_i), i}$, i.e.,~the transition that sends a~non-$\top$ state in the direction given by $f_i(\delta^\A_i)$.
    
    \item In the second case, \Separator played $m_i=\land$ and $f_i \in \set{\L, \R}^{\Delta^\B(a_i)}$.
    It means that $\Delta^\S\big(\ell_i, a_i\big)$ contains one conjunctive transition $\delta^\S_i:=\tuple{\ell_i,a_i,\ell_{\L,i},\ell_{\R,i}}$.
    \end{enumerate}

    \item[\GChoice{\Game\S}{I}{d}] \Input plays an~arbitrary direction $d_i\in\set{\L,\R}$.
    \item[\GChoice{\Game{}}{I}{d}] \Input plays the direction $d_i\in\set{\L,\R}$.
\end{simuize}

Notice that if $m_i = \lor$ and $d_i\neq f^\lor(\delta_i^\A)$
then the next position of the acceptance game $\Game\S$ is $(u_id_i, \top)$, which is a~winning position for \Automaton.
Therefore, without loss of generality we can assume that
\begin{equation}
\label{eq:direction-choice-gam-ind}
    \forall i\in\omega.\ (m_i = \lor) \Rightarrow f^\lor(\delta_i^\A)=d_i.
\end{equation}
Moreover, the new state of $\S$ in $\Game\S$ is $\ell_{i+1}$ for $\ell_{i+1} := \tau(\ell_i, a_i, d_i)$.
Similarly, the new memory state of $\M$ in $\Game{}$ is $\ell_{i+1}$.
This concludes the description of round $i$ of both games.
We argue that \Automaton wins the resulting infinite play
$\tuple{\delta_0^\S, d_0}\tuple{\delta_1^\S, d_1} \cdots$
of the acceptance game $\Game\S$.
Consider the infinite play $\pi = \tuple{c_0, a_0, m_0, f_0, d_0} \tuple{c_1, a_1, m_1,f_1, d_1} \cdots$ of the separability game $\Game{}$.
Let $b:= \tuple{a_0, d_0} \tuple{a_1, d_1} \cdots$ be the induced path.
Since we used the winning strategy of \Separator, this play satisfies $\Win\A$.
Since the run $\rho^\A$ is accepting,
the infinite sequence of $\A$'s transitions $\delta_0^\A \delta_1^\A \cdots$ is accepting.
Additionally, \eqref{eq:direction-choice-gam-ind} holds. Therefore, the sequence of priorities $c_0c_1\cdots$ must be accepting by $\Win\A$. However, by the definition of the automaton $\S$, we know that $\Omega^\S(\ell_i)=c_i$, which means that \Automaton wins the considered play of the acceptance game $\Game\S$.

    It remains to prove that $\lang S \disjoint \lang \B$.
    The latter is equivalent to $\lang \B \subseteq \lang {\S^\mathrm c}$,
    where the dual game automaton $\S^\mathrm c$ recognises the complement language of $\S$.
    By the construction, $\S^\mathrm c$ has the same states as $\S$,
    and transitions are defined by exchanging the conjunctive and disjunctive ones.
    Moreover, the priorities in $\S^\mathrm c$ can be chosen as $\Omega^{\S^\mathrm c}(\ell) = \Omega^\S(\ell) + 1$,
    and thus a sequence of priorities $\Omega^\S(\ell_0)\Omega^\S(\ell_1)\cdots$ in $\S$ is rejecting if, and only if,
    the corresponding sequence $\Omega^{\S^\mathrm c}(\ell_0)\Omega^{\S^\mathrm c}(\ell_1)\cdots$ in $\S^\mathrm c$ is accepting.
    With these observations in hand, we can conclude by repeating the argument in the first part of the proof above
    with $\A$ replaced by $\B$,
    $\S$ replaced by $\S^\mathrm c$,
    and condition $\Win\A$ replaced by $\Win\B$.
\end{proof}

\newcommand{\DisGameSA}{{\color{darkgreen}G_{\S^\mathrm c, \A}}}
\newcommand{\DisGameSB}{{\color{darkblue}G_{\S, \B}}}

\begin{proof}[Completeness]
    Assume that $\S$ is a~game automaton with priorities in $C$ separating $\lang \A$ from $\lang \B$,
    and we show that \Separator wins the separability game $\SepGame:=\CGameSeparabilityGame \A \B C$.
    Let $\T:=\S^{\mathrm c}$ be the dual game automaton
    recognising the complement language $\lang\T = \trees_\Sigma \setminus \lang \S$.

    Since $\S$ is a separator,
    we have that $\lang \S \disjoint \lang \B$ and $\lang \T \disjoint \lang \A$, which means that \Pathfinder wins both
	disjointness games $\DisjointnessGame \S \B$ and $\DisjointnessGame \T \A$. Let
\begin{align*}
\PP_\B\colon &\left(\bigcup_{a\in\Sigma} \Delta^\S(a)\times \Delta^\B(a)\right)\to \set{\L,\R},\\
\PP_\A\colon &\left(\bigcup_{a\in\Sigma} \Delta^\T(a)\times \Delta^\A(a)\right)\to \set{\L,\R},
\end{align*}
be two pathfinders witnessing this.

We will now provide a~strategy of \Separator in $\Game{}$.
The constructed strategy uses as its memory states the set of states of $\S$ that are distinct than $\top$. Let the initial memory state be $q_0$. Assume that the current memory state is $q_i$ and consider the $i$-th round of the game.

\begin{simuize}
\item[\Choice{S}{m}] \Separator plays the priority $c_i:=\Omega^\S(\ell_i)\in\set{\lor,\land}$.
\item[\Choice{I}{a}] \Input plays an arbitrary letter $a_i \in\Sigma$.
\item[\Choice{S}{m}] \Separator plays the mode $m_i\in\set{\lor,\land}$ defined as follows.
We consider the following two cases for the mode of the transitions $\Delta^\S(q_i,a_i)$.

\begin{enumerate}
\item If $\Delta^\S(q_i,a_i)=\set{\delta^\S_i}$ is a~single conjunctive transition $\delta^\S_i=\tuple{q_i,a_i,q_{\L,i},\allowbreak q_{\R,i}}$ then we put $m_i:=\land$ and $f_i:= \PP_\B(\delta^\S_i,\_)$ is a~selector for $\B$.
\item Otherwise, $\Delta^\S(q_i,a_i)$ is a~pair of disjunctive transitions which means that $\Delta^\T(q_i,a_i)$ is a~single conjunctive transition $\delta^\T_i=\tuple{q_i,a_i,q_{\L,i},q_{\R,i}}$. In this case we put $m_i:=\lor$ and $f_i:= \PP_\A(\delta^\T_i,\_)$ is a~selector for $\A$.
\end{enumerate}

\item[\Choice{S}{f}] \Separator plays the selector $f_i$ defined above (notice that $f_i$ is either a~selector for $\A$ or for $\B$, according to $m_i$).
\item[\Choice{I}{d}] \Input plays an arbitrary direction $d_i\in\set{\L,\R}$.
\end{simuize}

The next memory state of our strategy is the state $q_{d_i,i}$ taken from one of the transitions $\delta^\S_i$ or $\delta^\T_i$, see above.

We now argue that \Separator wins the corresponding infinite play
$\pi = \tuple{a_0,m_0,f_0,d_0}\allowbreak \tuple{a_1,m_1,f_1,d_1} \cdots$.
Let $b = \tuple{a_0,d_0}\tuple{a_1,d_1}\cdots$ be the corresponding path.

We begin by showing that $\pi \in \Win\B$.
    Let
    \begin{align}
        \label{eq:delta:B}
        \vec\delta^\B = \delta^\B_0 \delta^\B_1 \cdots \in \Delta^\B(b)
    \end{align}
    be an~infinite accepting sequence of transitions over the branch $b$ conform to $\pi_0$,
    where $\delta^\B_i$ has the form $\delta^\B_i = \tuple{q^\B_i, a_i, q^\B_{\L, i}, q^\B_{\R, i}}$.
    We need to show that $c_0c_1\cdots$ is rejecting.

Consider a~number $i\in\omega$. By the construction of the strategy of \Separator above, we know that there are two cases:
\begin{enumerate}
\item If $m_i=\land$, then a~conjunctive transition $\delta^\S_i=\tuple{q_i,a_i,q_{\L,i},q_{\R,i}}$ of $\S$ was used to determine $f_i$. In this case, define $\delta^\T_i$ as the following disjunctive transition of $\T$:
If $d_i=\L$, then $\delta^\T_i:= \tuple{q_i,a_i,q_{\L,i},\top}$,
otherwise $d_i=\R$ and $\delta^\T_i:= \tuple{q_i,a_i,\top,q_{\R,i}}$.
\item If $m_i=\lor$, then a~conjunctive transition $\delta^\T_i=\tuple{q_i,a_i,q_{\L,i},q_{\R,i}}$ of $\T$ was used to determine $f_i$. In this case, define $\delta^\S_i$ as the following disjunctive transition of $\S$:
If $d_i=\L$, then $\delta^\S_i:= \tuple{q_i,a_i,q_{\L,i},\top}$, otherwise $d_i=\R$ and $\delta^\S_i:= \tuple{q_i,a_i,\top,q_{\R,i}}$.
\end{enumerate}
The definitions above provide two sequences of transitions: $\vec\delta^\S:=\delta^\S_0\delta^\S_1\cdots\in(\Delta^\S)^\omega$ and $\vec\delta^\T:=\delta^\T_0\delta^\T_1\cdots \in (\Delta^\T)^\omega$.
Notice that the construction guarantees that $\vec\delta^\S\in\Delta^\S(b)$ and $\vec\delta^\T\in\Delta^\T(b)$.

By \cref{rem:pathfinder-for-game} we obtain that if $m_i=\lor$ and $\delta^\S_i$ is a~disjunctive transition of $\S$,
then $\PP_\B(\delta^\S_i,\_)$ is constantly equal $d_i$ (the direction in which $\delta^\S_i$ sends the state different than $\top$).
By the assumption on $\vec\delta^\B$ from $\Win\B$ we know that if $m_i=\land$,
then $f_i(\delta^\B_i)=d_i$.
However, if $m_i=\land$, then $f_i(\delta^\B_i)=\PP_\B(\delta^\S_i,\delta^\B_i)$. Therefore, in both cases we know that $\PP_\B(\delta^\S_i,\delta^\B_i)=d_i$.

This means that the assumptions of \cref{rem:pathfinder-property} are met and at least one of the sequences $\vec\delta^\S$, $\vec\delta^\B$ is rejecting. Since we assumed that $\vec\delta^\B$ is accepting, $\vec\delta^\S$ must be rejecting. But the priorities $c_0c_1\cdots$ are just the priorities of the transitions $\delta^\S_i$, so $c_0c_1\cdots$ is rejecting.

The case of $\Win\A$ is entirely dual: we consider a~sequence of transitions $\vec\delta^\A = \delta^\A_0 \delta^\A_1 \cdots \in \Delta^\A(b)$ that is accepting and use \cref{rem:pathfinder-property} for $\PP_\A$ to show that $\vec\delta^\T$ must be rejecting, which implies that $c_0c_1\cdots$ is accepting.
\end{proof}

\section{Complexity}
\label{ap:complexity}

In this section we perform a detailed analysis of the complexity of solving the separability problems from \cref{sec:det_ind,sec:det_gen,sec:game_gen,sec:game_ind} and the complexity of separators,
thus proving \cref{thm:complexity} announced in the introduction:
\thmComplexity*
In each case it will be a matter of constructing a deterministic parity automaton $\WW$ over $\omega$-words recognising the set of winning plays and then solving a suitable parity game.
In the following, let $M = \set{\lor, \land}$ be the set of alternation modes,
and let $D = \set{\L, \R}$ be the set of directions.

\subsection{Separability by deterministic automata}
\label{ap:complexity:det_gen}

In this section we perform a complexity analysis for \cref{sec:det_gen}.
Let $\A = \tuple{\Sigma, Q^\A, q_0^\A, \Omega^\A, \Delta^\A}$
and recall that $\Win\A$ is the set of plays
$\pi = \tuple{a_0,f_0,d_0}\allowbreak \tuple{a_1,f_1,d_1}\cdots$
with branch $b = \tuple{a_0, d_0} \tuple{a_1, d_1} \cdots$
s.t.~there is an accepting sequence of transitions
$\vec\delta^\A \in \Delta^\A(b)$.
The language $\Win\A$ can be recognised by a nondeterministic $\omega$-word parity automaton $\WW_\A$
over the alphabet 
\begin{align}
    \label{eq:Sigma':det}
    \Sigma' = \Sigma \times (\bigcup_{a \in \Sigma} D^{\Delta^\B(a)}) \times D.
\end{align}
which reads $\pi$, nondeterministically guesses the sequence of transitions $\vec\delta^\A$,
and verifies that it is accepting.
(Notice that $\Sigma'$ has size exponential in the size of $\A$.)
More precisely, we can take $\WW_\A = \tuple{\Sigma', Q^\A, q_0^\A, \Omega^\A, \Delta^{\WW_\A}}$
to have the same states $Q^\A$, initial state $q_0^\A$, and priority function $\Omega^\A$ as $\A$,
and set of transitions
\begin{align*}
    \Delta^{\WW_\A} = \setof{\tuple{q, a', q_d}}{\text{for some } a' = \tuple{a, f, d} \in \Sigma' \text{ and } \delta^\A = \tuple{q, a, q_\L, q_\R} \in \Delta^\A}.
\end{align*}
It is immediate to verify that $\Win\A = \lang {\WW_\A}$.
Let $\B = \tuple{\Sigma, Q^\B, q_0^\B, \Omega^\B, \Delta^\B}$
and recall that $\Win\B$ is the set of plays $\pi$ with branch $b$ as above
s.t.~there is an accepting sequence of transitions
$\vec\delta^\B = \delta_0^\B \delta_1^\B \cdots \in \Delta^\A(b)$
s.t., for all $i \in \N$, $f_i(\delta_i^\A) = d_i$.
As above, the language $\Win\B = \lang {\WW_\B}$ can be recognised by a nondeterministic $\omega$-word parity automaton
$\WW_\B = \tuple{\Sigma', Q^\B, q_0^\B, \Omega^\B, \Delta^{\WW_\B}}$
where
\begin{align*}
    \Delta^{\WW_\B} = \Setof{\tuple{q, a', q_d}}{
        \begin{array}{c}
            \text{for some } a' = \tuple{a, f, d} \in \Sigma'
                \text{ and } \delta^\B = \tuple{q, a, q_\L, q_\R} \in \Delta^\B\\
            \text{ s.t.~} f(\delta^\B) = d.
        \end{array}
        }
\end{align*}
Putting the two constructions above together,
\Input's winning condition $\Win\Input = \Win\A \cap \Win\B$
can be recognised by a nondeterministic $\omega$-word parity automaton of size polynomial in $\A, \B$,
and thus by \cref{lem:NPA2DPA} by a deterministic $\omega$-word parity automaton $\WW_\Input$
of exponential size and polynomially many priorities.
By applying \cref{lem:parity:games} and the characterisation of \cref{lem:det:sep}
we can thus solve the deterministic separability problem in \EXPTIME.
Thanks to the implication ``$3 \Rightarrow 2$'' of \cref{lem:det:sep},
if a deterministic separator exists,
then the path closure automaton $\pathautomaton \A$ is a deterministic separator.
By inspecting the construction of $\pathautomaton \A$,
one can see that it has number of states exponential in that of $\A$,
and the same set of priorities as $\A$.
This discussion is summarised in the following result.

\begin{theorem}
    \label{thm1}
    The deterministic separability problem can be solved in \EXPTIME.
    Moreover, when a deterministic separator exists,
    there is one with exponentially many states and polynomially many priorities.
\end{theorem}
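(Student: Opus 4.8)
The plan is to reduce the decision problem to solving a single parity game and to read off the separator-size bound from the game-theoretic characterisation. By \cref{lem:det:sep}, $\lang\A$ and $\lang\B$ admit a deterministic separator if, and only if, \Separator wins the game $\DeterministicSeparabilityGame \A \B$; hence it suffices to decide this game and to bound the cost of doing so. Since the winning condition is $\omega$-regular, the natural route is to construct a deterministic parity automaton over $\omega$-words recognising the set of winning plays, and then to solve the parity game obtained by synchronising the game arena with this automaton.

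First I would make the winning condition explicit as a word language. The plays of $\DeterministicSeparabilityGame \A \B$ range over an alphabet $\Sigma'$ recording, in each round, the letter $a_i$, the selector $f_i$, and the direction $d_i$; note that $\Sigma'$ is already of exponential size, since the selectors live in $\set{\L,\R}^{\Delta^\B(a_i)}$. Over this alphabet, $\Win\A$ is recognised by a nondeterministic parity automaton $\WW_\A$ that reads the induced path $b$ and guesses an accepting $\vec\delta^\A\in\Delta^\A(b)$, while $\Win\B$ is recognised by a nondeterministic parity automaton $\WW_\B$ that additionally checks the compatibility constraint $f_i(\delta^\B_i)=d_i$. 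Crucially, although $\Sigma'$ is exponential, both $\WW_\A$ and $\WW_\B$ have only \emph{polynomially} many states and priorities, as their state spaces are copies of $Q^\A$ and $Q^\B$. Taking their product gives a nondeterministic parity automaton for $\Win\Input=\Win\A\cap\Win\B$ of polynomial size, which I would determinise via \cref{lem:NPA2DPA} into a deterministic parity automaton $\WW_\Input$ with exponentially many states and polynomially many priorities.

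Next I would solve the parity game synchronising the arena of $\DeterministicSeparabilityGame \A \B$ with $\WW_\Input$. Here \Separator plays selectors and \Input plays letters and directions, so the arena has exponentially many positions, while the priority count stays polynomial because it is inherited from $\WW_\Input$. Applying \cref{lem:parity:games} then decides the winner in time quasi-polynomial in the (exponential) number of positions and polynomial in the number of priorities, hence in \EXPTIME; in fact even a na\"ive $O(n^k)$ parity solver suffices. Combined with \cref{lem:det:sep}, this yields the \EXPTIME upper bound.

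Finally, for the separator size I would invoke the implication ``$3\Rightarrow 2$'' of \cref{lem:det:sep}: whenever a deterministic separator exists at all, the path-closure automaton $\pathautomaton\A$ is itself one. It then remains to inspect the construction of $\pathautomaton\A$ in \cref{lem:path-closure-auto}, which builds it from a deterministic $\omega$-word automaton $\D$ obtained by determinising, via \cref{lem:NPA2DPA}, a polynomial-size nondeterministic automaton for the paths of $\A$ carrying an accepting run. This determinisation contributes exponentially many states but only polynomially many priorities, and $\pathautomaton\A$ inherits exactly these, giving the claimed bound. The main obstacle throughout is not any single step but the bookkeeping of the two distinct sources of exponential blow-up---the exponentially large selector alphabet and the determinisation of \cref{lem:NPA2DPA}---and verifying that they inflate only the \emph{state} count while leaving the number of priorities polynomial, which is precisely what keeps the parity game solvable within \EXPTIME.
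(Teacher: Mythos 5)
Your proposal is correct and follows essentially the same route as the paper: the characterisation via \cref{lem:det:sep}, the polynomial-state nondeterministic word automata $\WW_\A$, $\WW_\B$ over the exponential selector alphabet, determinisation by \cref{lem:NPA2DPA}, solving the resulting parity game by \cref{lem:parity:games}, and bounding the separator by inspecting $\pathautomaton\A$ through the implication ``$3\Rightarrow 2$''. Your closing remark that the two sources of exponential blow-up affect only the state count, not the number of priorities, is exactly the point the paper's argument hinges on.
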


\subsection{Separability by deterministic automata with priorities in \texorpdfstring{$C$}{C}}
\label{sec:complexity:C-det-sep}

In this section we perform a complexity analysis for \cref{sec:det_ind}.
We build a nondeterministic automaton $\WW_\A = \tuple{\Sigma'', Q, q_0, \Omega, \Delta}$
recognising the set of plays $\lang {\WW_\A}$ \emph{not} satisfying $\Win\A$.
Automaton $\WW_\A$ is over the alphabet
\begin{align}
    \label{eq:Sigma'':det}
    \Sigma'' = C \times \Sigma',
\end{align}
with $\Sigma'$ from \eqref{eq:Sigma':det}.
Intuitively, $\WW_\A$ accepts an infinite play
$\pi = \tuple{c_0, a_0, f_0, d_0} \allowbreak\tuple{c_1, a_1, f_1, d_1} \cdots$
with path $b = \tuple{a_0, d_0} \allowbreak \tuple{a_1, d_1} \cdots$
whenever there exists an accepting sequence of transitions
$\vec \delta^\A = \delta_0^\A \delta_1^\A \cdots \in \Delta^\A(b)$
and $c_0c_1 \cdots$ is rejecting.
In order to achieve this, $\WW_\A$ guesses an accepting sequence of transitions from $\A$ (as in \cref{ap:complexity:det_gen})
and also guesses an odd priority $c \in C$ and verifies that it occurs infinitely often,
and that no larger priority occurs infinitely often.
This can be achieved by a set of states $Q$ of size polynomial in $\A$.
Note that the input alphabet $\Sigma''$ has exponential size in $\A, \B$ (due to the selectors $f_i$'s),
and thus $\WW_\A$ will have exponentially many transitions.
A very similar construction yields a nondeterministic parity $\omega$-word automaton $\WW_\B$
over the same action alphabet $\Sigma''$ from \eqref{eq:Sigma'':det}
recognising the set of plays $\lang {\WW_\B}$ \emph{not} in $\Win\B$
%
with polynomially many states and exponentially many transitions.
It follows that the complement of $\Win\A \cap \Win\B$ can be recognised
by a nondeterministic parity $\omega$-word automaton $\WW$
of the same complexity.
By \cref{lem:NPA2DPA} we can further convert $\WW$
to an equivalent deterministic parity automaton $\WW'$ with exponentially many states and polynomially many priorities
(w.r.t.~the number of states of $\A, \B$).
By \cref{lem:parity:games} we can thus solve $\CDeterministicSeparabilityGame \A \B C$ in \EXPTIME,
and by the characterisation in \cref{lem:CDeterministicSeparability:correctness}
we can solve the $C$-deterministic separability problem within the same complexity.

Based on the size of the winning condition $\WW'$
and the strong connection between winning strategies for \Separator and deterministic separators
in the ``soundness'' direction of the proof of \cref{lem:CDeterministicSeparability:correctness},
we can also provide an upper bound on the size of a separating deterministic automaton, when it exists.
More precisely, if \Separator wins the $C$-deterministic-separability game $\CDeterministicSeparabilityGame \A \B C$,
then she has a positional winning strategy in the corresponding graph game of exponential size from \cref{lem:parity:games}.
This means that \Separator has a winning strategy $\M$ of exponential memory in $\CDeterministicSeparabilityGame \A \B C$.
This strategy is then translated to a separating deterministic automaton $\S$ with exponentially many states and priorities in $C$.
Putting these considerations together gives the following complexity result.

\begin{theorem}
    \label{thm2}
    The $C$-deterministic separability problem is in \EXPTIME.
    Moreover, deterministic separators of exponential size suffice.
\end{theorem}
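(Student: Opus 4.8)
The plan is to instantiate the general recipe announced at the start of \cref{ap:complexity}: turn the characterisation of \cref{lem:CDeterministicSeparability:correctness} into a parity game on a finite graph, solve that game, and then extract a separator of the desired size from a positional winning strategy. Concretely, \Separator wins $\CDeterministicSeparabilityGame \A \B C$ exactly when a deterministic separator with priorities in $C$ exists, so it suffices to solve this game and analyse the induced strategy.

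First I would build a deterministic $\omega$-word parity automaton recognising \Separator's winning plays, working over the round alphabet $\Sigma''=C\times\Sigma'$ from \eqref{eq:Sigma'':det}. Since the winning condition is $\Win\A\cap\Win\B$, I would instead recognise its complement by a union of two nondeterministic automata $\WW_\A$ and $\WW_\B$, accepting respectively the plays violating $\Win\A$ and those violating $\Win\B$. As in \cref{ap:complexity:det_gen}, $\WW_\A$ guesses an accepting transition sequence $\vec\delta^\A\in\Delta^\A(b)$ along the induced branch while simultaneously checking that the priority component $c_0c_1\cdots$ is \emph{rejecting}; symmetrically $\WW_\B$ guesses a selector-conform accepting sequence $\vec\delta^\B$ and checks that $c_0c_1\cdots$ is \emph{accepting}. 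Both automata have only polynomially many states and priorities in $\A,\B$, although their transition relations are exponential because the selectors $f_i$ are part of $\Sigma''$.

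Next I would take the union $\WW$ of $\WW_\A$ and $\WW_\B$, determinise and complement it via \cref{lem:NPA2DPA}: starting from polynomially many states and priorities, this yields a deterministic parity automaton $\WW'$ with exponentially many states and polynomially many priorities. Unfolding the per-round phases of $\CDeterministicSeparabilityGame \A \B C$ against $\WW'$ produces a parity game on a graph with exponentially many positions and polynomially many priorities; by \cref{lem:parity:games} (indeed, already by a naïve $O(n^k)$ algorithm) this is solvable in \EXPTIME. Together with \cref{lem:CDeterministicSeparability:correctness}, this decides the $C$-deterministic separability problem in \EXPTIME.

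Finally, for the size bound I would invoke positional determinacy of the parity game above: when \Separator wins, she has a positional strategy on the exponential-size graph, which translates back into a finite-memory winning strategy $\M$ for \Separator in $\CDeterministicSeparabilityGame \A \B C$ with exponentially many memory states. Feeding $\M$ into the soundness construction of \cref{lem:CDeterministicSeparability:correctness} produces a deterministic separator $\S$ with exponentially many states and priorities in $C$, as required. The main obstacle I anticipate is purely one of bookkeeping the exponential alphabet $\Sigma''$: I must check that the selectors inflate only the transition relations of $\WW_\A,\WW_\B$ (and hence, after determinisation, the state space of $\WW'$) exponentially, while the number of priorities stays polynomial, so that both the game-solving step and the resulting separator remain within the claimed \EXPTIME and exponential-size bounds.
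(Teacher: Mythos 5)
Your proposal is correct and follows essentially the same route as the paper: complementing $\Win\A\cap\Win\B$ via the union of two nondeterministic automata $\WW_\A,\WW_\B$ with polynomially many states over the exponential alphabet $\Sigma''$, determinising with \cref{lem:NPA2DPA}, solving the resulting exponential-size parity game with \cref{lem:parity:games}, and extracting an exponential-size separator from a positional strategy via the soundness direction of \cref{lem:CDeterministicSeparability:correctness}. No gaps.
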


\subsection{Separability by game automata}

In this section we perform a complexity analysis for \cref{sec:game_gen}.
Let $\A = \tuple{\Sigma, P, p_0, \Omega, \Delta^\A}$
and recall that $\Win\A$ is the set of plays of the form
$\pi = \tuple{a_0,m_0,f_0,d_0}\allowbreak\tuple{a_1,m_0,f_1,d_1}\cdots$
with branch $b = \tuple{a_0, d_0}\allowbreak \tuple{a_1, d_1} \cdots$
s.t.~there is an accepting sequence of transitions
$\vec\delta^\A = \delta_0^\A \delta_1^\A \cdots \in \Delta^\A(b)$
s.t., for all $i \in \N$, (\dag) if $m_i = \lor$ then $f_i(\delta_i^\A) = d_i$.
The language $\Win\A$ can be recognised by a~nondeterministic $\omega$-word parity automaton $\WW_\A$
over the alphabet
\begin{align}
    \label{eq:Sigma'}
    \Sigma' = \Sigma \times M \times (\bigcup_{a \in \Sigma} D^{\Delta^\A(a)} \cup D^{\Delta^\B(a)}) \times D
\end{align}
which reads $\pi$, nondeterministically guesses the sequence of transitions $\vec\delta^\A$,
and verifies that it is accepting and that (\dag) defined above holds.
(Notice that $\Sigma'$ has size exponential in the size of $\A$.)
More precisely, we can take $\WW_\A = \tuple{\Sigma', P, p_0, \Omega, \Delta}$
to have the same states $P$, initial state $p_0$, and priority function $\Omega$ as $\A$,
and set of transitions
\begin{align*}
    \Delta = \setof{\tuple{q, a', p_d}}{&\text{for some } a' = \tuple{a, m, f, d} \in \Sigma' \text{ and }\delta^\A = \tuple{q, a, p_\L, p_\R} \in \Delta^\A\\
 &\text{ s.t.~} (m = \lor) \Rightarrow f(\delta^\A) = d}.
\end{align*}
It is immediate to verify that $\Win\A = \lang {\WW_\A}$.
With an analogous construction starting from $\B$
we can build a nondeterministic $\omega$-word parity automaton $\WW_\B$
recognising the set of plays in $\Win\B = \lang {\WW_\B}$.
Putting the two together, \Input's winning condition $\Win\Input = \Win\A \cap \Win\B$
can be recognised by a nondeterministic $\omega$-word parity automaton of polynomially many states and exponentially many transitions w.r.t.~$\A, \B$,
and thus by \cref{lem:NPA2DPA} by a deterministic $\omega$-word parity automaton $\WW_\Input$
of exponential size and polynomially many priorities.
By \cref{lem:parity:games} we can solve such a game in \EXPTIME,
and thanks to the characterisation from \cref{lem:game:sep},
we can solve the game separability problem in \EXPTIME.

In fact, we can also provide an upper bound on the number of states and priorities
of a separating game automaton (when it exists).
Since parity games are memoryless determined and the graph game has exponential size,
if \Separator wins $\GameSeparabilityGame \A \B$
then she has a winning strategy $\M$ of exponential memory.
This means that the separating automaton $\alpha(\M, \D)$ with generalised acceptance condition $\D$
has exponential size (ignoring the size of $\D$ for a moment).

We now argue about the size of a suitable deterministic automaton $\D$ for the generalised acceptance condition.
First of all, the winning condition $\Win\A \subseteq (\Sigma')^\omega$
is recognised by the nondeterministic parity automaton $\WW_\A$ above
with the same number of states as $\A$ and exponentially many transitions (since $\Sigma'$ has exponential size).
As suggested in the ``soundness'' direction of the proof of \cref{lem:game:sep},
we take $\D$ to be a deterministic automaton recognising the language $L_\A \subseteq (\Sigma \times D)^\omega$
containing all paths $b = \tuple{a_0, d_0}\tuple{a_1, d_1} \cdots$
s.t.~there exists a play $\pi \in \Win\A$ conform to $b$ and \Separator's strategy $\M$.
The automaton $\D$ can be obtained as a product construction
of $\WW_\A$ (polynomial) above and \Separator's strategy
$\M = \tuple{L, \ell_0, \tuple{\overline{c},\overline{m},\overline{f}}, \tau}$ (exponential),
a projection operation from alphabet $\Sigma'$ to alphabet $\Sigma \times D$,
and then a determinisation operation.
More precisely, let $$\D_0 = \tuple{\Sigma \times D, P \times L, \tuple{p_0, \ell_0}, \Omega_0, \Delta_0}$$
be a nondeterministic parity automaton over alphabet $\Sigma \times D$
where $\Delta_0$ and $\Omega_0$ are defined as follows:
$\tuple{\tuple{p, \ell}, \tuple{a, d}, \tuple{p', \ell'}} \in \Delta_0$
iff $\tuple{p, \tuple{a, \overline m(\ell, a), \overline f(\ell, a), d}, p'} \in \Delta$
and $\tau(\ell, a, d) = \ell'$;
$\Omega_0(p, \_) = \Omega(p)$.
Since $\M$ is winning and by the definition of $\WW_\A$ we have $\lang D_0 = L_\A$.
However $\D_0$ is nondeterministic and a direct determinisation seems to produce a doubly exponential blow-up
(since $L$ has exponential size).
However, the $L$-component of the state is in fact a deterministic finite automaton (with no acceptance condition),
and since the determinisation operation commutes with products with deterministic finite automata,
$\D_0$ can be determinised into an equivalent deterministic parity automaton $\D$
of exponential size and polynomially many priorities, as required.
%
%
By \cref{lem:generalised:tree:automata} applied to the generalised automaton $\alpha(\M, \D)$ 
we can build a game parity automaton $\S$ equivalent to $\alpha(\M, \D)$ (and thus separating $\lang \A, \lang \B$)
of exponential size and polynomially many priorities.
This discussion is summarised in the following result.

\begin{theorem}
    \label{thm3}
    The game separability problem for can be solved in \EXPTIME.
    Moreover, if a separating game automaton exists,
    then there is one with exponentially many states and polynomially many priorities.
\end{theorem}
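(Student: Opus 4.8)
The plan is to reduce the game separability problem to a parity game over a finite graph and then invoke the characterisation of \cref{lem:game:sep} together with standard determinisation and game-solving complexity bounds. First, I would encode each round of $\GameSeparabilityGame \A \B$ as a single letter of an auxiliary alphabet $\Sigma'$ recording the letter, mode, selector, and direction chosen during that round; since selectors range over $D^{\Delta^\A(a)} \cup D^{\Delta^\B(a)}$, the alphabet $\Sigma'$ has size exponential in $\A, \B$. Over this alphabet I would build nondeterministic $\omega$-word parity automata $\WW_\A$ and $\WW_\B$ recognising the conditions $\Win\A$ and $\Win\B$: each simply carries the state space of $\A$ (resp.~$\B$), guesses an accepting sequence of transitions conform to the induced branch, and checks the mode-dependent selector constraint. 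Both automata have polynomially many states but exponentially many transitions.

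Second, intersecting the two yields a nondeterministic automaton for \Input's objective $\Win\Input = \Win\A \cap \Win\B$ of the same complexity, which I would then determinise via \cref{lem:NPA2DPA} into a deterministic parity automaton $\WW_\Input$ with exponentially many states and polynomially many priorities. Feeding $\WW_\Input$ into the arena of the separability game produces a parity game on a graph of exponential size with polynomially many priorities, solvable in \EXPTIME by \cref{lem:parity:games}; by \cref{lem:game:sep} this decides the game separability problem within the same bound.

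For the bound on the separator, I would exploit memoryless determinacy of parity games: if \Separator wins, she wins the exponential-size graph game positionally, which translates back to a finite-memory strategy $\M$ of exponential memory in $\GameSeparabilityGame \A \B$. The soundness direction of \cref{lem:game:sep} then yields the generalised game automaton $\alpha(\M, \D)$ separating $\lang \A, \lang \B$, whose control part has exponentially many states. It remains to bound the deterministic automaton $\D$ realising the generalised acceptance condition, which I would take to recognise the set $L_\A$ of branches admitting a play conform to $\M$ that is winning for \Separator via $\Win\A$.

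The main obstacle is exactly this last step: $L_\A$ is naturally given by a nondeterministic automaton $\D_0$ obtained as a product of the polynomial automaton $\WW_\A$ with the exponential-memory strategy $\M$, and a blind determinisation of $\D_0$ would incur a second exponential. The key observation I would use is that the $\M$-component of $\D_0$ is a \emph{deterministic} transition structure (a finite automaton with no acceptance condition), so determinisation commutes with the product by that component; hence only the polynomial $\WW_\A$-part is subjected to the subset/Safra construction, keeping $\D$ of single-exponential size with polynomially many priorities. Finally, applying \cref{lem:generalised:tree:automata} to $\alpha(\M, \D)$ converts the generalised game automaton into an ordinary game automaton $\S$ of size polynomial in $\alpha(\M, \D)$ and $\D$---hence exponentially many states and polynomially many priorities---completing the size analysis.
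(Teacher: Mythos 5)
Your proposal is correct and follows essentially the same route as the paper: the same construction of $\WW_\A, \WW_\B$ over the exponential alphabet, determinisation via \cref{lem:NPA2DPA}, solving the resulting parity game, and crucially the same resolution of the potential double-exponential in determinising $\D_0$ by observing that its $\M$-component is deterministic so determinisation commutes with that product. Nothing to add.
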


\subsection{Separability by game automata with priorities in \texorpdfstring{$C$}{C}}

In this section we perform a complexity analysis for \cref{sec:game_ind}.
As in \cref{sec:complexity:C-det-sep}
one can build a nondeterministic parity automaton automaton $\WW_\A = \tuple{\Sigma'', Q, q_0, \Omega, \Delta}$
over alphabet $\Sigma'' = C \times \Sigma'$ (where $\Sigma'$ has been defined in \eqref{eq:Sigma'})
recognising the set of plays $\lang {\WW_\A}$ \emph{not} satisfying $\Win\A$
with polynomially many states $Q$ and priorities
and exponentially many transitions $\Delta$ (due to the exponential alphabet $\Sigma''$).
%
%
In the same way, we can build a nondeterministic parity $\omega$-word automaton $\WW_\B$
recognising the complement of the winning condition $\Win\B$,
and thus the complement of $\Win\A \cap \Win\B$ can be recognised
by a nondeterministic parity $\omega$-word automaton $\WW$
with polynomially many states and priorities and exponentially many transitions.
By \cref{lem:NPA2DPA} we can further convert $\WW$
to an equivalent deterministic parity automaton $\WW'$ with exponentially many states and polynomially many priorities
(w.r.t.~the number of states of $\A, \B$).
By \cref{lem:parity:games} we can thus solve $\CGameSeparabilityGame \A \B C$ in \EXPTIME,
and by the characterisation in \cref{lem:CGameSeparability:correctness}
we can solve the $C$-game separability problem within the same complexity.

Based on the size of the winning condition $\WW'$
and the strong connection between winning strategies for \Separator and separating automata
in the ``soundness'' direction of the proof of \cref{lem:CGameSeparability:correctness},
we can also provide an upper bound on the size of a separating game automaton, when it exists.
More precisely, if \Separator wins the $C$-game-separability game $\CGameSeparabilityGame \A \B C$,
then she has a positional winning strategy in the corresponding graph game of exponential size from \cref{lem:parity:games}.
This means that \Separator has a winning strategy $\M$ of exponential memory in $\CGameSeparabilityGame \A \B C$.
This strategy is then translated to a~separating game automaton $\alpha(\M)$ with exponentially many states and priorities in $C$.
Putting these considerations together gives the following complexity result.

\begin{theorem}
    \label{thm4}
    The $C$-game separability problem can be solved in \EXPTIME.
    Moreover, if a separating game automaton exists,
    then there exists one of exponential size.
\end{theorem}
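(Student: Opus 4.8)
The plan is to mirror the complexity analysis of the fixed\=/index deterministic case (\cref{thm2}), adding the mode and selector bookkeeping inherited from \cref{sec:game_gen}; the characterisation underlying everything is \cref{lem:CGameSeparability:correctness}. In each case the task reduces to recognising \Separator's winning condition by a deterministic $\omega$\=/word parity automaton and then solving a parity game on a graph.

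First I would construct a nondeterministic $\omega$\=/word parity automaton $\WW_\A$ over the round\=/outcome alphabet $\Sigma'' = C \times \Sigma'$ (with $\Sigma'$ as in \eqref{eq:Sigma'}) recognising exactly the plays that fail $\Win\A$. The crucial observation is that although $\Win\A$ is phrased as an implication — and is therefore naturally a \emph{universal} (co\=/nondeterministic) condition — its negation is purely \emph{existential}: a play lies outside $\Win\A$ precisely when there is an accepting sequence $\vec\delta^\A \in \Delta^\A(b)$ respecting the selectors (i.e.~$(m_i = \lor) \Rightarrow f_i(\delta^\A_i) = d_i$) while $c_0 c_1 \cdots$ is rejecting. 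Both conjuncts are verifiable by guessing: $\WW_\A$ guesses $\vec\delta^\A$ and checks that it is accepting, and simultaneously guesses the largest priority occurring infinitely often among the $c_i$'s and checks that it is odd. This yields polynomially many states and priorities, but exponentially many transitions, since $\Sigma''$ is exponential in $\A, \B$ because of the selectors. An identical construction produces $\WW_\B$ for the complement of $\Win\B$.

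Next I would take the union $\WW$ of $\WW_\A$ and $\WW_\B$, which recognises the complement of \Separator's winning condition $\Win\A \cap \Win\B$, still with polynomially many states and priorities and exponentially many transitions. Applying \cref{lem:NPA2DPA} determinises $\WW$ into a deterministic parity automaton $\WW'$ with exponentially many states and polynomially many priorities; complementing its acceptance (increasing every priority by one) turns $\WW'$ into a deterministic recogniser of \Separator's winning condition. Taking the product of the (exponential\=/size) graph arena associated to the positionless game $\CGameSeparabilityGame \A \B C$ with $\WW'$ produces a parity game on a graph of exponential size with polynomially many priorities, which by \cref{lem:parity:games} is solvable in \EXPTIME; combined with \cref{lem:CGameSeparability:correctness} this decides $C$\=/game separability in \EXPTIME.

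For the bound on separators I would invoke positional determinacy of parity games: if \Separator wins the product parity game she has a positional winning strategy on the exponential\=/size graph, which via \cref{cl:translation} yields a finite\=/memory winning strategy $\M$ in $\CGameSeparabilityGame \A \B C$ with exponentially many memory states. The correspondence $\alpha$ defined in \cref{sec:game_ind} then turns $\M$ into a game automaton $\alpha(\M)$ separating $\lang \A, \lang \B$ with priorities in $C$ and exponentially many states, as claimed. The only genuinely technical point — and the one I expect to need care — is the simultaneous verification inside $\WW_\A$ of the two parity conditions (acceptance of $\vec\delta^\A$ and rejection of $c_0c_1\cdots$), handled by the standard guess\=/the\=/threshold\=/priority device, which keeps the state count polynomial; everything else is a routine transcription of the arguments already given for \cref{thm2,thm3}.
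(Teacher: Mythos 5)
Your proposal follows essentially the same route as the paper: build nondeterministic parity word automata for the complements of $\Win\A$ and $\Win\B$ over the exponential round-outcome alphabet $C \times \Sigma'$ (polynomially many states via the guess-the-threshold-priority device, exponentially many transitions), determinise their union via \cref{lem:NPA2DPA}, solve the resulting exponential-size parity game with \cref{lem:parity:games}, and extract an exponential-size separator $\alpha(\M)$ from a positional winning strategy using \cref{lem:CGameSeparability:correctness}. The extra details you supply (explicit complementation by shifting priorities, the product with the graph arena) are routine and consistent with the paper's argument.
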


Altogether, \cref{thm1,thm2,thm3,thm4} prove \cref{thm:complexity} announced in the introduction.

\bibliography{bibliography}

\end{document}